\newsavebox{\measurebox}
\newtheorem{assumption}{Assumption}
\newtheorem*{assumption*}{Assumption}
\newtheorem{lemma}{Lemma}
\newtheorem{proposition}{Proposition}
\theoremstyle{definition}
\newtheorem{example}{Example}
\title{Causal Inference for Experiments with Latent Outcomes: Key Results and Their Implications for Design and Analysis}
\date{\today}
\author{Jiawei Fu\footnote{Assistant Professor, Duke University (\url{jiawei.fu@duke.edu})} \quad Donald P. Green\footnote{Burgess Professor of Political Science, Columbia University (\url{dpg2110@columbia.edu}).  \\ 
We thank David Broockman, Alex Coppock, Macartan Humphreys, Josh Kalla, Patrick Liu, Ryan Moore, Libby Jenke, Justin Grimmer, Jacob Montgomery, Kosuke Imai, Teppei Yamamoto, and participants of the UC San Diego Political Methodology and American Politics Speaker Series for their helpful comments on earlier drafts. We welcome feedback on this draft. A preliminary R package can be found at \href{https://github.com/Jiawei-Fu/LatentOutcomes}{Github}.
}}
\pgfplotsset{compat=1.18} 
\begin{document}

\maketitle
\singlespacing



\begin{abstract}
How should researchers analyze randomized experiments in which the main outcome is latent and measured in multiple ways but each measure contains some degree of error? We first identify a critical study-specific noncomparability problem in existing methods for handling multiple measurements, which often rely on strong modeling assumptions or arbitrary standardization. Such approaches render the resulting estimands noncomparable across studies. To address the problem, we describe design-based approaches that enable researchers to identify causal parameters of interest, suggest ways that experimental designs can be augmented so as to make assumptions more credible, and discuss empirical tests of key assumptions. We show that when experimental researchers invest appropriately in multiple outcome measures, an optimally weighted scaled index of these measures enables researchers to obtain efficient and interpretable estimates of causal parameters by applying standard regression. An empirical application illustrates the gains in precision and robustness that multiple outcome measures can provide.

\noindent 
\vspace{.1in}

\end{abstract}

\thispagestyle{empty}

\clearpage
\doparttoc 
\faketableofcontents 

\setcounter{page}{1}

\doublespacing
\section{Introduction}





\begin{bibunit}
Social scientists often seek to estimate the average causal effect of interventions and increasingly rely on experimental designs to do so convincingly. The statistical literature on experimental design and analysis has grown markedly in recent years.
However, one topic of special concern to social scientists has largely escaped attention, even from otherwise comprehensive textbooks: imperfect measurement of experimental outcomes.  \citet{angrist2009mostly}, \citet{gerber2012field}, and \citet{imbens2015causal} offer no sustained formal treatment of outcome measurement or how to analyze experiments in which outcomes are measured in more than one way. The models that these textbooks present implicitly assume that the observed outcome is the true underlying potential outcomes of interest.

In the social sciences, however, there is often slippage between the outcomes of interest and the proxy outcome measures at hand.  Constructs such as economic inequality, press freedom, political violence, corruption, and post-materialism are just a few examples of latent variables that are thought to be measured imperfectly, despite the sustained efforts of researchers. To illustrate how frequently this issue arises,  SI \ref{si:summarytab}
lists 14 representative articles 
published in the \emph{American Political Science Review} during the past five years.  
Each article uses multiple measures to gauge a latent outcome, but statistical practice varies widely: regression using additive indices constructed from standardized measures, regression using indices based on indices derived from principal components analysis or from inverse covariance algorithms, or nonlinear estimators rooted in item response theory. Given multiple imperfect measurements, how should researchers identify and estimate average treatment effects on the latent outcome?

Our paper builds on recent attempts to formalize the identification and estimation challenges that may arise when latent outcomes are measured with error. Like \citet{stoetzer2022causal}, we use potential outcomes notation to unify our discussion of experimental design and outcome measurement. We make four contributions, each of which has important implications for research design and estimation. 

First, we identify and address a critical \textit{study-specific noncomparability problem} in existing methods for handling multiple measurements: principal components analysis (PCA), inverse covariance weighting (ICW) \citep{anderson2008multiple}, item response theory model (IRT) \citep{stoetzer2022causal}, and inverse regression analysis (IRA) (\citealt{zhang2025inverse}). The estimands in these approaches are often distorted by the estimation techniques themselves, such that even when researchers aim to identify the average treatment effects on the \textit{same} latent outcome across different studies, the resulting estimates are \textit{not comparable}. This drawback poses a serious obstacle to any sustained research program.

Consider, for example, a scenario in which researchers design a large experiment to estimate the treatment effect on political attitudes using several survey-based measures. With any of the aforementioned methods, they would obtain an estimate. Meanwhile, another research team might investigate the same question using slightly different measures and obtain another estimate. However, these results would not be comparable and would not, in fact, target the same latent construct of political attitude. The key reason is that a latent variable has no intrinsic metric -- its meaning is entirely determined by the measurements used to define it. Widely used methods arbitrarily standardize and combine these measurements, which in turn alters the interpretation of the latent outcome depending on the scale of the observed variables.

Building on a largely overlooked literature dating back to at least \citet{costner1971utilizing} and \citet{alwin1974causal}, we propose an efficient design-based method for identifying latent treatment effects without standardizing the latent outcome in ways that render its units of measurement study-specific. As a result, estimated effects are more directly comparable across studies.

Second, adapting ideas from prior work by \citet{bagozzi1977structural}, \citet{sorbom1981structural}, and \citet{kano2001structural} to a potential outcomes framework, we propose an optimally \textit{weighted scaled index} (WSI) estimator for the experimental intervention’s average treatment effect on a latent outcome. As explained in section \ref{sec:frame}, it involves two key steps: (1) scaling the outcomes to address the study-specific noncomparability problem and (2) optimally weighting the scaled outcomes. 
As demonstrated through a series of simulations, the WSI achieves higher statistical power than existing alternatives (PCA, ICW, IRT, IRA). More importantly, under optimal weighting, it attains similar efficiency to traditional structural equation modeling (SEM) estimators, which typically use maximum likelihood estimation based on strong normality assumptions. 
WSI also facilitates visualization of estimated causal effects and randomization inference for hypothesis testing. 

Third, we show analytically and through simulation how multiple measures of a latent outcome can improve the precision with which this average treatment effect is estimated in section \ref{sec:var}. When conducting experiments, researchers often face a key design dilemma: given additional resources, should they increase the sample size or collect more measurements per unit? Our framework provides a useful decision rule, showing that the optimal choice depends on the reliability of the outcome measures. When reliability is low, the marginal gain from collecting an additional measurement is relatively high relative to the gains from increasing the number of subjects. In contrast, when reliability is high, the marginal gain from additional measurements diminishes, and increasing the sample size becomes preferable. These results highlight an underappreciated practical trade-off between gathering more observations and gathering additional outcome measures.  

Fourth, although our method (like any method) rests on assumptions that are not directly testable, the credibility of these assumptions hinges on design choices. This stands in contrast to existing approaches, which typically separate the design and estimation stages. For instance, we show how certain data collection strategies can help satisfy what would otherwise be unrealistic measurement assumptions.\footnote{Our approach to data collection also reduces reliance on constant treatment effects among all subjects, such as the hierarchical Item Response Theory (IRT) model proposed by \citet{stoetzer2022causal}. We offer design recommendations that allow the analyst to stay within a simple  modeling framework that accommodates heterogeneous treatment effects.}  
We are by no means the first to note that ``invalid'' measurements threaten unbiased inference or that redundant outcome measurements can improve precision; our contribution is to bring a coherent analytic framework to the discussion of experimental research design and outcome measurement so as to clarify the value of research designs that allocate resources to the collection of multiple measurements of a latent outcome. We also show that the latent outcome models may be viewed as nested alternatives to seemingly unrelated regression (SUR), allowing researchers to assess empirically whether the constraints imposed by the latent outcome model are consistent with the data.  If not, researchers can fall back on the more agnostic SUR approach in which each individual outcome measure is regressed on the treatment.

This paper is structured as follows. We begin by presenting a potential outcomes model that defines the target of inference and allows for some degree of slippage between the latent outcome we seek to measure and the observed measures at our disposal. Next, we show formally the conditions under which the average treatment effect on the posited latent outcome is identified.  When latent outcomes are measured with random error, this average treatment effect may be estimated consistently, but precision is enhanced when the researcher gathers multiple outcome measures. When measurement errors are systematic -- especially when mismeasurement operates differently for subjects assigned to treatment or control -- methods that are typically used to analyze experiments may no longer yield unbiased estimates.  We show how identification problems may be diagnosed empirically and addressed by diversifying the portfolio of measures and adopting an estimator that leverages the additional measures in a theoretically informed way.  A simulated example illustrates the mechanics of this approach.  In order to show its relevance to applied empirical work, we reanalyze the \citet{kalla2020reducing} experiment, which features a diverse array of outcome measures that improve precision and facilitate robustness checks.

\section{Framework}\label{sec:frame}

Consider a sample of $n$ units drawn from a superpopulation.\footnote{The finite population result is similar to the superpopulation case, except that variance estimation is conservative, in keeping with the standard causal inference literature \citep{imbens2015causal}.} Let $Z_i$ denote the treatment assignment for unit $i$. Generalizing the potential outcomes framework, we assume two potential latent variables: $\eta_i^1$ is the latent outcome that would be realized for subject $i$ if the treatment were administered, and $\eta_i^0$ is the latent outcome that would be realized for subject $i$ in the absence of treatment. We assume the stable unit treatment value assumption, which implies that  potential outcomes remain stable regardless of which subjects receive treatment \citep{rubin1980randomization}.
These latent variables are not directly observed; think of them instead as abstract concepts, such as authoritarianism or corruption, that may be measured with some degree of error. Suppose that there are $J$ observed measures of the latent outcome for each subject. Let the $j^{th}$ outcome measure for unit $i$ be $Y_{ij}= \lambda_j \eta_i + \epsilon_{ij} = \lambda_j[Z_i\eta^1_{i} + (1-Z_i)\eta^0_{i}] + \epsilon_{ij}$, where $\epsilon_{ij}$ is the measurement error with mean zero $\mathbb{E}[\epsilon_{ij}]=0$.\footnote{This assumption can be relaxed to any constant since  outcomes can be mean-centered.} More generally, we can allow the variance of the measurement errors to differ between the treatment and control groups by defining two separate measurement error terms $\epsilon_{ij}=Z_i\epsilon^1_{ij} + (1-Z_i)\epsilon^0_{ij}$, 
with $\mathbb{E}[\epsilon^1_{ij}]=\mathbb{E}[\epsilon^0_{ij}]=0$. In the superpopulation framework, $\{Z_i,\eta^1_i,\eta^0_i, \epsilon^1_{ij},\epsilon^0_{ij}\}_{i=1}^n$ is assumed to be an i.i.d. sample from a population distribution.
Figure \ref{fig:dgp} illustrates the relationship among the observed and unobserved variables.
The observed variables are depicted inside squares, and unobserved variables are depicted inside circles. The latent outcome $\eta_i$ can be represented as $\eta_i= \mathbb{E}\eta_i^0 + \tau Z_i + \zeta_i$, where $\zeta_i$ is the idiosyncratic disturbance: $\zeta_i=\eta_i^0 - \mathbb{E}\eta^0_i+Z_i[(\eta_i^1-\mathbb{E}\eta^1_i)-(\eta_i^0-\mathbb{E}\eta^0_i)]$. This term has mean zero by construction: $\mathbb{E}\zeta_i=0$.

We are interested in the causal effect of treatment on the latent variable, which \citet{stoetzer2022causal} call the latent treatment effect (LTE). Define the individual-level LTE as $\tau_i=\eta^1_{i}-\eta^0_{i}$. Because the individual-level LTE is unidentified, we focus on the average latent treatment effect (ALTE) defined as $\mathbb{E}[\eta^1_{i}-\eta^0_{i}]$.


\begin{figure}[!h]
    \centering
    \includegraphics[width=0.8\linewidth]{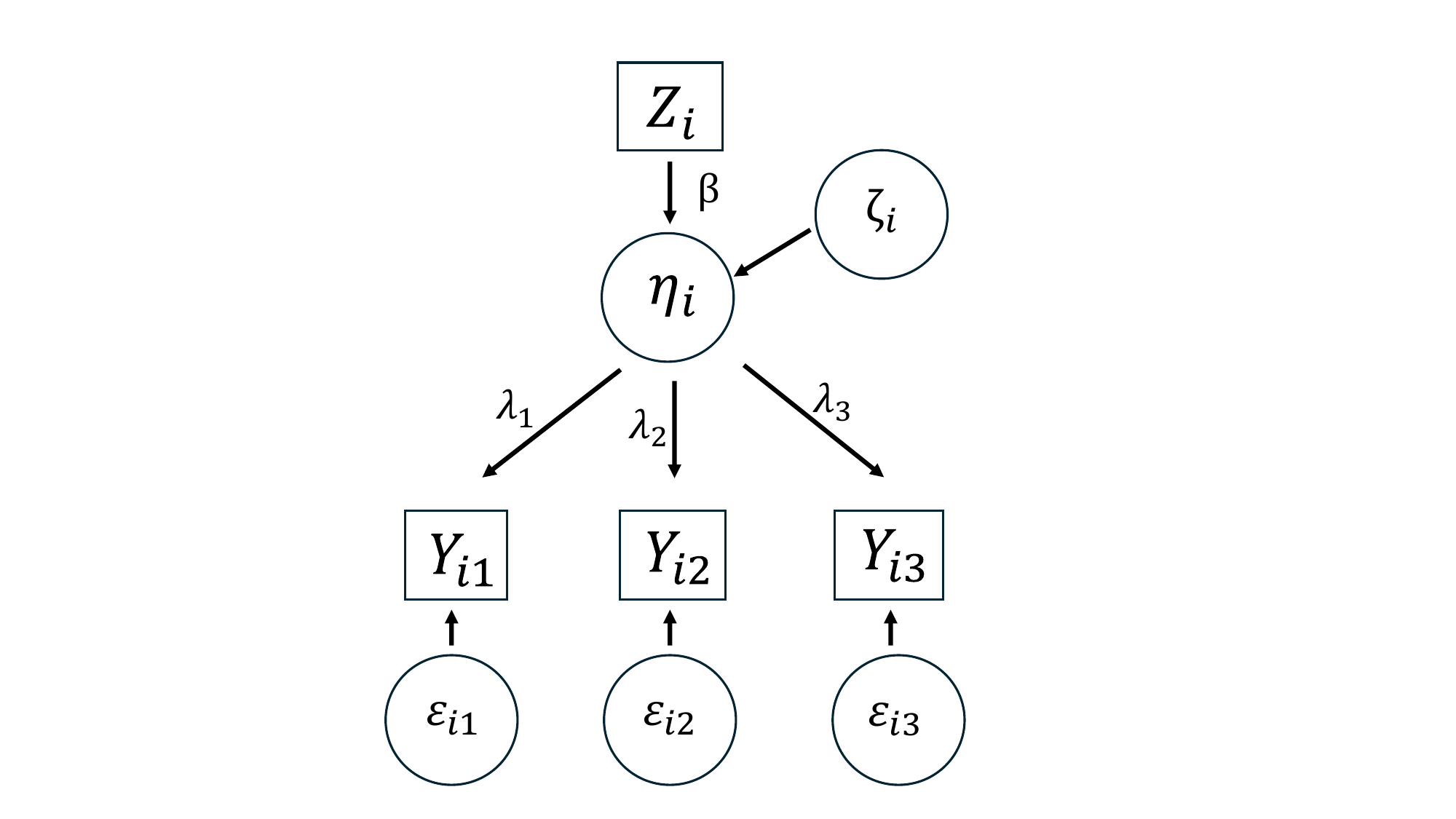}
    \caption{Graphical Depiction of an Experimental Design in which a Latent Outcome is Measured Linearly by Three Outcomes, Each Measured with Error.}
    \label{fig:dgp}
\end{figure}

Our identification strategy is rooted in a set of assumptions about the experimental design and the manner in which the latent variables are measured:

\begin{assumption}[Causal Framework for a Latent Outcome Variable]\label{ass:frame}

We assume $\forall j=1,2,...,J$, and $i=1,2,...,n$,

    A. Valid Measurement: $\lambda_j \neq 0$.
     
    B. $Z_i$ is randomly assigned: $Var(Z_i) \neq 0$ and $\{\eta^0_i,\eta^1_i\} \perp Z_i$

    C. $Z_i$ is excludable: $\{\epsilon^0_{ij},\epsilon^1_{ij} \}\perp Z_i$

    
\end{assumption}

The valid measurement assumption is straightforward: an outcome measure $Y_{ij}$ should contain information about the latent variable beyond pure noise. The second assumption requires that the treatment $Z_i$ has at least two distinct values and is randomly assigned to units, implying that it is independent of the potential latent outcomes. The assumption of excludability implies that each outcome's measurement errors are independent of the treatment. One potential scenario in which this assumption fails arises when the measured outcomes are affected by the treatment through channels other than the latent outcome of interest. We discuss this scenario and possible mitigation strategies in Appendix \ref{si:more}. 

This setup presupposes that the relationship between the latent variable $\eta_i$ and the observed measure $Y_{ij}$ is linear. We would offer four comments about linearity. First, our framework and results invoke linearity as a special case of a nonparametric identification strategy. (We discuss nonparametric identification and estimation in Section~\ref{sec:nonpara}.) Nonparametric identification typically requires an assumption that the measurements capture sufficient variability in the latent outcome $\eta_i$. This condition implies that discrete measurements with a small number of categories are generally insufficient. We argue that indexes 
constructed from multiple linear measures provide one of the simplest and most effective ways to capture the variability about the latent outcome.

Second, it is more natural for researchers to \emph{design} linear measurements than nonlinear ones. For example, test scores are typically constructed to increase with an individual’s latent ability, and respondents with more extreme latent attitudes are more likely to select extreme options on a seven-point Likert scale. It is rare for a designed measurement to exhibit a highly nonlinear relationship with the latent outcome unless the measured outcome is distributed in a truncated manner, with many observations falling into the highest or lowest categories. More granular measures tend to be more plausibly linear.

Third, beyond design considerations, linearity can also be evaluated empirically. If the linearity assumption holds, the measurements should satisfy $Y_{ij}=\lambda_j Y_{i1} + \epsilon_{ij} \; \forall j \neq 1$ (see lemma \ref{lem}). Thus, evidence of nonlinearity among measurements would lead us to reject the linearity assumption. As we will show in the \citet{kalla2020reducing} application below, the measures appear to exhibit the expected linear relationship. Were an outcome measure to exhibit signs of nonlinearity, the assumption of linearity can be relaxed for this problematic measure in ways that restore the desirable properties of the estimation approach, as illustrated in SI \ref{si:more1}.

Fourth, when outcomes are binary or ordinal, and we are not using nonparametric methods, we can still use a linear model to obtain a linear approximation; the problem, as we note below, is that assessing how well the model fits the data becomes more model-dependent. For this reason, our design recommendation is to gather outcomes in ways that satisfy the linearity assumption rather than shoehorn discrete outcome measures into a linear modeling framework. The formal results and corresponding strategies are discussed in the section \ref{sec:nonpara} and SI \ref{se:linearize}.

\subsection{Solving the Study-specific Noncomparability Problem}\label{sec:prob}

As emphasized in the introduction, because the latent variable $\eta_i$ has no inherent scale, arbitrarily standardizing measurements and reducing their dimensions -- as in methods such as PCA, ICW and IRT -- renders the latent outcome estimand study-specific. Even if two studies focus on the same latent outcome, any difference in the measurements will lead these methods to produce noncomparable results. As \citet[][pp. 28–29]{loehlin1998latent} points out, such standardization can lead researchers to mistakenly conclude that two estimated ALTEs differ when, in fact, they are the same. This poses a serious problem for the accumulation of knowledge.

Consider two very large studies in which the intervention exerts exactly the same average treatment effect on a latent outcome measured by the same indicators $Y_{ij}$. Standardization would produce two quite different estimates of the ALTE if the measurement error variances were much larger in the first study than in the second. (Indeed, the standardized estimates would differ in this example even though a simple regression of $Y_{i1}$ on $Z_i$ in each sample would, in expectation, yield identical estimates.) Standardization may even frustrate efforts to recover causal parameters from simulated data. In their assessment of several commonly-used methods of creating standardized indexes as experimental outcomes --- simple indexes, weighted indexes, and principal components factor scores --- \citet[][section 15.4]{blair2023research}  conclude that \emph{none} of them recovers a known parameter in repeated simulations.\footnote{In the SI \ref{si:decl}, we show that both of the estimators we propose below successfully recover the ALTE in their simulations (up to scale factor, depending on which outcome measure is used to set the metric).  Moreover, the estimated standard error from our approach closely matches the empirical standard error from their simulation.}  

Given these limitations, we propose an ``unstandardized'' approach. Because the latent variable $\eta_i$ lacks an intrinsic scale, we can, without loss of generality, set $\lambda_1=1$ so that we can interpret $\eta_i$ using the same metric as $Y_1$.  For example, if $\eta_i$ represents a latent distance, and $Y_{i1}$ is scaled in terms of kilometers while $Y_{i2}$ is scaled in terms of miles, setting $\lambda_1=1$ means that effects on $\eta_i$ are scaled in terms of kilometers \citep [pp.239-240]{bollen1989structural}. (The choice of which observed outcome to use for scaling purposes is arbitrary and has no effect on the statistical significance of the estimated ALTE.\footnote{Beware of the fact that SEM package in Stata uses its estimated standard errors to calculate $p$-values, but these standard errors are calculated based on numerical derivatives that are subject to scale-specific error. To obtain scale invariant $p$-values, conduct a likelihood ratio test comparing the log-likelihood of the fitted model to the log-likelihood of the restricted model, which constrains the average latent treatment effect to be zero.})

As a result, as long as one measurement is shared across studies, we can treat that measure as $Y_{i1}$. The latent treatment effect then targets the same latent outcome, allowing for direct comparison of results across studies. In short, maintaining interpretable scaling units rather than standardizing simplifies the identification problem and preserves the comparability of results across experimental replications. The remaining question is how to combine $Y_{i1}$ with other measurements that have different relationships with the latent outcome due to different $\lambda_j$ values. We address this issue in the next section.

\section{Identification of Measurement Parameters}\label{sec:identify}

We seek to identify the average causal effect of $Z_i$ on the latent variable $\eta_i$. However, $\eta_i$ is not directly observed, and even when setting $\lambda_1=1$ we can only partially measure $\eta_i$ given the unknown scaling parameters $\lambda_j$ and unobserved measurement errors $\epsilon_{ij}$. If we knew all the $\lambda_j$, we could rescale the observed measures, for example, by $\frac{Y_{ij}}{\lambda_j}=\eta_i+\frac{1}{\lambda_j}\epsilon_{ij}$, to approximate the latent variable $\eta_i$ using the same units as $Y_{i1}$.
The question is how to identify $\lambda_j$. As demonstrated in the following proposition, identification requires a third variable to serve as an instrumental variable. What is perhaps surprising is that this third variable can be either the treatment $Z_i$ or other measurements $Y_{ij}$ under the relatively mild conditions in Assumption \ref{ass:frame}.\footnote{When pre-treatment covariates are available, they can also serve as instrumental variables, as discussed in Section 6.1.}


\begin{proposition}[Identification of the measurement scaling parameters]\label{prop:iv}

Suppose Assumption \ref{ass:frame} holds. Then
in the above causal framework given $\lambda_1=1$, $\lambda_j$ is identified either by

(1) $\lambda_j=\frac{Cov(Z_i,Y_{ij})}{Cov(Z_i,Y_{i1})}$ \text{ if } $\mathbb{E}[\eta_i^1-\eta_i^0]\neq 0$ or by

(2) $\lambda_j=\frac{Cov(Y_{ik},Y_{ij})}{Cov(Y_{ik},Y_{i1})}\; \forall k \neq 1 \text{ and } k \neq j$, if $Cov(\epsilon_{ik}, \epsilon_{i1})=Cov(\epsilon_{ij}, \epsilon_{ik})=0$, $Var[\eta_i]\neq 0$, and $\{\eta_i^0,\eta_i^1\} \perp \{\epsilon^1_{ij},\epsilon^0_{ij}\}$.
    
\end{proposition}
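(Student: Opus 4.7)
The plan is to verify each identity by directly expanding the numerator and denominator as covariances of linear combinations of $\eta_i$ and $\epsilon_{ij}$, and then showing that all nuisance terms vanish under Assumption \ref{ass:frame}. Both claims ultimately reduce to writing a ratio in which a common ``latent'' factor --- $\mathrm{Var}(Z_i)\,\mathbb{E}[\eta_i^1-\eta_i^0]$ in part (1) and $\lambda_k\,\mathrm{Var}(\eta_i)$ in part (2) --- cancels between numerator and denominator, leaving $\lambda_j$. The algebra is short; the real content lies in invoking the correct independence condition at the correct moment.

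For part (1), I would substitute $Y_{ij}=\lambda_j\eta_i+\epsilon_{ij}$ into $\mathrm{Cov}(Z_i,Y_{ij})$, obtaining $\lambda_j\,\mathrm{Cov}(Z_i,\eta_i)+\mathrm{Cov}(Z_i,\epsilon_{ij})$. Using the mixture representation $\epsilon_{ij}=Z_i\epsilon_{ij}^1+(1-Z_i)\epsilon_{ij}^0$ together with $Z_i^2=Z_i$, Assumption \ref{ass:frame}C, and the mean-zero error condition, a short calculation gives $\mathrm{Cov}(Z_i,\epsilon_{ij})=0$. The same mixture trick applied to $\eta_i=Z_i\eta_i^1+(1-Z_i)\eta_i^0$ combined with Assumption \ref{ass:frame}B yields $\mathrm{Cov}(Z_i,\eta_i)=\mathrm{Var}(Z_i)\,\mathbb{E}[\eta_i^1-\eta_i^0]$. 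Writing the denominator analogously with $\lambda_1=1$ and taking the ratio cancels this common factor, which is nonzero by the combined hypotheses $\mathrm{Var}(Z_i)\neq 0$ (Assumption \ref{ass:frame}B) and $\mathbb{E}[\eta_i^1-\eta_i^0]\neq 0$ (premise of part (1)), leaving precisely $\lambda_j$.

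For part (2), I would expand $\mathrm{Cov}(Y_{ik},Y_{ij})$ into four pieces: the main term $\lambda_k\lambda_j\,\mathrm{Var}(\eta_i)$, two cross terms $\lambda_k\,\mathrm{Cov}(\eta_i,\epsilon_{ij})$ and $\lambda_j\,\mathrm{Cov}(\eta_i,\epsilon_{ik})$, and the residual $\mathrm{Cov}(\epsilon_{ik},\epsilon_{ij})$. The last vanishes by the uncorrelated-errors hypothesis. For the cross terms, substituting the $Z_i$-mixture forms of $\eta_i$ and $\epsilon_{ij}$ makes all off-diagonal products drop out (via $Z_i(1-Z_i)=0$); the two surviving diagonal expectations factor using the three independence assumptions --- Assumption \ref{ass:frame}B, Assumption \ref{ass:frame}C, and the newly imposed $\{\eta_i^0,\eta_i^1\}\perp\{\epsilon_{ij}^0,\epsilon_{ij}^1\}$ --- and are killed by the mean-zero property of the errors. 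Dividing the resulting identity $\mathrm{Cov}(Y_{ik},Y_{ij})=\lambda_k\lambda_j\,\mathrm{Var}(\eta_i)$ by its counterpart with $Y_{i1}$ gives $\lambda_j$, using $\mathrm{Var}(\eta_i)\neq 0$ and the valid-measurement requirement $\lambda_k\neq 0$ from Assumption \ref{ass:frame}A.

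The main place to be careful --- and the step I would write out most explicitly --- is that the independence statements in Assumption \ref{ass:frame} are stated for the potential-outcome objects $\eta_i^0,\eta_i^1,\epsilon_{ij}^0,\epsilon_{ij}^1$, not for the realized quantities $\eta_i$ and $\epsilon_{ij}$ that appear inside the covariances. Every such covariance must therefore first be split along the $Z_i=0$ and $Z_i=1$ branches (exploiting $Z_i(1-Z_i)=0$) before independence can be invoked and the expectations factored. Once that bookkeeping is in place, both identities follow in essentially one line each, and part (2) additionally needs the new $\eta\perp\epsilon$ assumption precisely to factor the cross-covariances $\mathrm{Cov}(\eta_i,\epsilon_{ij})$ and $\mathrm{Cov}(\eta_i,\epsilon_{ik})$ down to zero.
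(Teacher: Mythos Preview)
Your proposal is correct and follows essentially the same approach as the paper: expand the relevant covariances, use the $Z_i$-mixture representation together with $Z_i(1-Z_i)=0$ to kill nuisance terms via the independence assumptions, and then take the ratio so that the common latent factor cancels. The only cosmetic difference is that the paper first rewrites $Y_{ij}=\lambda_j Y_{i1}+(\epsilon_{ij}-\lambda_j\epsilon_{i1})$ via Lemma~\ref{lem} to make the instrumental-variables structure explicit before expanding, whereas you substitute the measurement equation $Y_{ij}=\lambda_j\eta_i+\epsilon_{ij}$ directly; the algebra and the invoked assumptions are otherwise identical.
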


\begin{proof}
The proof is in the SI \ref{si:allproof}.
\end{proof}

Proposition \ref{prop:iv} suggests that $\lambda_j$ can be identified via an instrumental variables approach. We can use either the treatment $Z_i$ or other outcome measures $Y_{ij}$ as instrumental variables when certain assumptions hold. Specifically, $Z_i$ must be a ``relevant'' instrumental variable whose average treatment effect on $\eta_i$ is nonzero (so that the denominator of the ratio converges to a nonzero value as the sample size increases). The excludability assumption of $Z_i$ as an instrument follows from Assumption \ref{ass:frame}B, because it is independent of the measurement errors. Similarly, $Y_{ik}$ is a valid instrumental variable if the measurement errors are uncorrelated with one another, and the limiting covariance between $Y_{ij}$ and $Y_{i1}$ is nonzero.  Even when the measurement errors are correlated, $Z_i$ remains a valid instrument so long as the treatment is unrelated to these errors of measurement.  

To illustrate the reasoning behind the instrumental variables approach, consider one strategy for identifying $\lambda_2$. From Lemma \ref{lem} in \ref{si:lem}, linearity between each measurement and the latent outcome implies linearity among the different measurements. In particular, 
$$
Y_{i2} = \lambda_2 Y_{i1} + (\epsilon_{i2}-\lambda_2\epsilon_{i1})
$$
Because $Y_{i1}$ is correlated with the error $\epsilon_{i1}$, $\lambda_2$ is not directly identified, and we cannot simply regress $Y_{2i}$ on $\eta_i$ because the latter is unobserved. Again, instrumental variables regression provides a consistent estimator. Given the excludability of the randomly assigned treatment variable $Z_i$ (Assumption \ref{ass:frame}C), we can use $Z_i$ as an instrumental variable for $Y_{i1}$. 
In the proof, we show that if the average treatment effect is nonzero  ($\mathbb{E}\tau_i \neq 0$), then  $Cov(Z_i,Y_{i1})\neq 0$. In that case, $Z_i$ is a ``relevant'' predictor of $Y_{i1}$. Moreover, given that $Z_i$ is randomly assigned and thus independent of measurement errors, $Z_i$ is a valid instrumental variable for $Y_{i1}$. Therefore, we can use $\frac{\widehat{Cov}(Z_i,Y_{i2})}{\widehat{Cov}(Z_i, Y_{i1})}$ to estimate $\lambda_2$. The same approach shows that $Y_{i3}$ can also serve as an instrumental variable for $Y_{i1}$ when we stipulate that the measurement errors are unrelated to the latent potential outcomes. This assumption implies, for example, that higher values of the latent potential outcomes are no more likely to coincide with higher values of the measurement errors than lower values of the latent potential outcomes.

What about cases such as Figure \ref{fig:dgp}, where the experiment includes \emph{both} a randomly assigned treatment and three measures of the latent variable?  In such cases, we have more than one plug-in estimator for $\lambda_2$ and $\lambda_3$, and therefore the unknown scaling parameters are said to be overidentified.  Overidentification is helpful in two ways.  First, we can combine the plug-in estimators to form a more efficient estimator, using method of moments or maximum likelihood. The former has the advantage of making weaker distributional assumptions about the observed or unobserved variables in model, while the latter imposes the assumption of multivariate normality.  In practice, the two estimators tend to produce similar estimates in large samples when the observed variables are distributed symmetrically \citep{olsson2000performance,browne1984asymptotically,yuan2005nonequivalence, boomsma2001robustness}.  

The second advantage of overidentification is that it allows us to test whether the data accord with the posited model. When two or more plug-in estimators render markedly different estimates, that is a sign that at least one modeling assumption is untenable. For example, when $Z_i$ is used as the instrumental variable for identifying $\lambda_2$, we need not invoke any assumptions about the uncorrelated measurement errors (i.e., the assumption that $\epsilon_{i1}$, $\epsilon_{i2}$, and $\epsilon_{i3}$ are uncorrelated with one another), whereas when we identify $\lambda_2$ using $Y_{i3}$ as the instrumental variable, we do invoke this assumption when asserting that the limiting covariances among the errors are zero.  If the two IV estimates differ markedly, that is a sign that uncorrelated measurement errors may be an untenable assumption.

\section{Estimation of the Average Latent Treatment Effect}\label{sec:est}

Once the $\lambda_j$ are identified, researchers have many options for estimating the average treatment effect on the latent variable. These options fall into two broad categories. The first category involves forming a weighted average of the observed outcome measures to approximate the latent variable, albeit with some residual error. This index-building approach has the advantage of allowing the analyst to use conventional methods, such as regression, to estimate the ALTE. An alternative approach is to estimate both the measurement parameters and the ALTE simultaneously as part of a system of linear equations. This approach is typically referred to as structural equation modeling or SEM. Although SEM is often associated with full-information estimators that assume multivariate normality \citep{joreskog1970general}, the same models can be rooted in more agnostic assumptions and estimated by method-of-moments \citep{kline2023principles}.  

This section discusses these two estimation approaches -- weighted scale indices (WSI) and maximum likelihood -- using simulated data to assess their ability to recover known treatment effects and estimate sampling variability. Fortunately, both approaches tend to perform well and generate similar results. The WSI has advantages in terms of transparency, as it allows the analyst to display regression results visually using individual-level data \citep{cook2009regression}; maximum likelihood has the advantage of providing a unified framework for estimating treatment effects and using overidentification to test modeling assumptions.

\subsection{Difference-in-means based on a Weighted Scaled Index (WSI)}

The simplest and most transparent estimator is the difference-in-means estimator using the weighted scaled index as the outcome. For each individual $i$, we first \textit{scale} each outcome measure using $\lambda_j$ to address the study-specific noncomparability problem. We then construct an outcome index measure by creating a \textit{weighted} average of the various outcome measures. Taking this weighted average to be the outcome, we use difference-in-means to estimate the causal effect. Formally, for each individual $i$, let $\tilde{Y}_{ij}=\frac{1}{\lambda_j}Y_{ij}$, and $\tilde{Y}_i=\sum_{j=1}^J \omega_j \tilde{Y}_{ij}$, where $\omega_j$ is the weight and $\sum_{j=1}^J \omega_j=1$. The most elementary weighting scheme is $\frac{1}{J}$. That is, the new outcome measure $\tilde{Y}_i$ is a simple average of all measurements for individual $i$. Although simple averages are widely used in practice \citep[e.g.,][]{ansolabehere2008strength}, they are not optimal from the standpoint of estimating the ALTE as precisely as possible.  A more efficient approach is to use inverse-variance weighting. Because we divide each measure by $\lambda_j$, the true variance of the measurement error is $\frac{\sigma^2(\epsilon^z_{\cdot j})}{\lambda^2_j}$; this leads to the optimal weight $\omega^*_j = \frac{\lambda^2_j/\sigma^2(\epsilon^z_{\cdot j})}{\sum_{j=1}^k \lambda^2_j/\sigma^2(\epsilon^z_{\cdot j})}$.
This optimal weighting scheme presented above assumes that measurement errors are independent; when errors are correlated, the optimal weighting algorithm becomes more complex.\footnote{In general, the weight is derived by minimizing the variance of the weighted average in treatment and control groups. For example, given each group, we minimize $\sum_{j=1}^J \omega_j \tilde{Y}_{ij}$ subject to the constraint that $1'w=1$. Define $\Sigma$ as the variance-covariance matrix of the measurement errors. By using the Lagrange multiplier method, we get the weight $w=(1'\Sigma^{-1}1)^{-1}(1'\Sigma)$, where $\Sigma$ is the variance and covariance matrix. If measurement errors are independent, the optimal  weight reduces to $\omega^*_j = \frac{\lambda^2_j/\sigma^2(\epsilon^z_{\cdot j})}{\sum_{j=1}^k \lambda^2_j/\sigma^2(\epsilon^z_{\cdot j})}$.} When the variance of the measurement error is assumed to be the same across the treatment and control groups, we can efficiently estimate a common $\sigma^2$ using the pooled data.



Using the WSI, a researcher applies conventional methods for estimating the ALTE. We begin by considering the difference-in-means estimator:
$$
\begin{aligned}
    \hat{\tau}&=\frac{1}{n_1}\sum_{i=1}^{n}Z_i\tilde{Y}_i - \frac{1}{n_0}\sum_{i=1}^{n}(1-Z_i)\tilde{Y}_i\\
    &=\frac{1}{n_1}\sum_{i=1}^{n}Z_i [\sum_{j=1}^k \omega_j\tilde{Y}_{ij}]- \frac{1}{n_0}\sum_{i=1}^{n}(1-Z_i)[\sum_{j=1}^k \omega_j \tilde{Y}_{ij}]
\end{aligned}
$$
Suppose we know the true $\lambda_j$. Then, it is straightforward to see that the oracle estimator is unbiased. When $\lambda_j$ is unknown, we can still use a consistent estimator of these scaling parameters to obtain a consistent difference-in-means estimator. 

\begin{proposition}\label{prop:unbias}
    Suppose assumption \ref{ass:frame} holds and all moments are finite. Consider a weighted difference-in-means estimator $\hat{\tau}=\frac{1}{n_1}\sum_{i=1}^{n}Z_i\tilde{Y}_i - \frac{1}{n_0}\sum_{i=1}^{n}(1-Z_i)\tilde{Y}_i$.
    
    (1) If $\lambda_j$ is known, then $\hat{\tau}$ is unbiased. 

    (2) If $\lambda_j$ is estimated consistently under proposition \ref{prop:iv}, then the weighted difference-in-means estimator is consistent.
    
\end{proposition}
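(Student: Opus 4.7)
The plan is to prove each part of Proposition~\ref{prop:unbias} by exploiting the linearity of the weighted scaled index together with the independence structure in Assumption~\ref{ass:frame}.

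For part~(1), I would begin by expanding each scaled measurement. With $\lambda_j$ known,
$$
\tilde{Y}_{ij} = \frac{Y_{ij}}{\lambda_j} = \eta_i + \frac{\epsilon_{ij}}{\lambda_j},
$$
so, using $\sum_{j=1}^{J}\omega_j = 1$,
$$
\tilde{Y}_i = \sum_{j=1}^{J}\omega_j\,\tilde{Y}_{ij} = \eta_i + \sum_{j=1}^{J}\frac{\omega_j}{\lambda_j}\,\epsilon_{ij}.
$$
Substituting into $\hat{\tau}$ splits the estimator into an $\eta_i$ piece and an $\epsilon_{ij}$ piece. Conditioning on the vector of treatment assignments, Assumption~\ref{ass:frame}B gives $\mathbb{E}[\eta_i\mid Z_i=z]=\mathbb{E}\eta_i^{z}$, so the $\eta_i$ piece contributes exactly $\mathbb{E}[\eta_i^{1}-\eta_i^{0}]$. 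Assumption~\ref{ass:frame}C combined with $\mathbb{E}\epsilon_{ij}^{z}=0$ kills the error piece within each group. Taking iterated expectations yields $\mathbb{E}\hat\tau = \mathbb{E}[\eta_i^{1}-\eta_i^{0}]$, the ALTE.

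For part~(2), I would replace $\lambda_j$ by the consistent estimator $\hat{\lambda}_j$ from Proposition~\ref{prop:iv} and argue by Slutsky and the continuous mapping theorem. The group averages $\tfrac{1}{n_z}\sum_{i:Z_i=z} Y_{ij}$ converge in probability to $\lambda_j\,\mathbb{E}\eta_i^{z}$ by the weak law of large numbers (finite moments). Because $\hat{\lambda}_j\overset{p}{\to}\lambda_j\neq 0$ under Assumption~\ref{ass:frame}A, the map $x\mapsto 1/x$ is continuous at $\lambda_j$, so dividing the group averages by $\hat{\lambda}_j$ preserves the limit $\mathbb{E}\eta_i^{z}$. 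With fixed weights this immediately gives $\hat\tau\overset{p}{\to}\mathbb{E}[\eta_i^{1}-\eta_i^{0}]$. Under optimal weighting, $\hat{\omega}_j$ is itself a continuous function of $\hat{\lambda}_j$ and of sample estimators of $\sigma^{2}(\epsilon^{z}_{\cdot j})$ built from scaled residuals, so one more application of continuous mapping gives $\hat{\omega}_j\overset{p}{\to}\omega_j^{\star}$, and consistency of $\hat\tau$ follows.

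The main obstacle, as I see it, is the bookkeeping in the optimal-weighting case: one has to check that the plug-in variance estimator remains consistent even though it is built from residuals that use $\hat{\lambda}_j$ rather than $\lambda_j$. This reduces to yet another Slutsky-plus-WLLN argument, but it is the step most vulnerable to sloppy reasoning. The fixed-weight version of part~(2) is essentially a one-line corollary of part~(1) via WLLN and continuous mapping, so no new ideas are needed there.
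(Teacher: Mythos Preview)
Your proposal is correct and follows essentially the same route as the paper. For part~(1) the paper makes the identical decomposition (it just notes $\mathbb{E}[\tilde Y_i^z]=\mathbb{E}[\eta_i^z]$ in one line rather than splitting into $\eta$ and $\epsilon$ pieces), and for part~(2) the paper likewise reduces to oracle consistency plus a plug-in step; your explicit WLLN/continuous-mapping/Slutsky bookkeeping---including the optimal-weight case---is in fact more detailed than the paper's two-sentence argument.
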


\begin{proof}
The proof is in the SI \ref{si:allproof}.
\end{proof}

In practice, regression is widely used to estimate the ALTE, especially when covariates are used to improve precision. In SI \ref{si:stack}, we show that the above difference-in-means estimator is equivalent to a ``stacked'' regression. 

Let us now consider the variance of this estimator.  If the $\lambda_j$ scaling factors are known based on the measurement properties observed in prior studies (and $\omega$ is pre-specified), then we can apply the conventional Neyman variance estimator, as discussed in SI \ref{si:var}. If the $\lambda_j$ and/or $\omega_j$ are estimated from the data, we have to account for the estimation uncertainty. The standard approach, as with the \textit{inverse probability weighting estimator}, uses  generalized method of moments (GMM), as discussed in SI \ref{si:gmm}.

\subsection{Structural Equation Modeling}

An alternative approach is to model all of the parameters in Figure \ref{fig:dgp} as a system of linear equations, one equation for each of the endogenous variables in the causal graph. Figure \ref{fig:dgp}, for example, implies one equation for the latent outcome variable ($\eta_i$) and three equations for each of the observed outcome measures.  After imposing the scaling metric $\lambda_1=1$, we are left with a total of 8 free parameters: the variance of the randomized treatment ($\phi$), the average treatment effect of $Z_i$ on $\eta_i$, the variance of the unobserved causes of the latent outcome variable ($\psi$), two scaling parameters ($\lambda_2$ and $\lambda_3$), and three variances of the measurement errors associated with $Y_{i1}$, $Y_{i2}$, $Y_{i3}$. As we show in SI \ref{si:struct}, these 8 parameters may be used to predict the 10 observed variances and covariances among the four measured variables ($Z_i$,$Y_{i1}$,$Y_{i2}$,$Y_{i3}$). Since some of these parameters are overidentified, the estimator we use may affect our empirical estimates because different estimators use different objective functions when gauging the fit between the estimated parameters and the observed variance-covariance matrix.  As noted above, the most commonly used estimator is maximum likelihood under the assumption of multivariate normality, but other estimators that make weaker distributional assumptions are available and tend to perform well when the number of observations is large relative to the number of parameters \citep{yuan2005nonequivalence}.  


\subsection{Comparison to Current Methods}

Our WSI method consistently estimates latent treatment effects that are comparable across studies. Commonly used approaches for handling multiple measurements include principal components analysis and inverse covariance weighting. Table in SI \ref{si:summarytab} lists 14 recent papers published in the \emph{American Political Science Review} that employ these two methods. However, unlike SEM (which is rarely used to analyze experiments), these methods were not originally designed to estimate average latent treatment effects; inverse covariance weighting was designed to facilitate hypothesis testing, and principal components analysis was designed to reduce the dimensionality of a collection of correlated measures. We compare these methods with WSI and demonstrate that our approach achieves additional statistical precision.

Figure \ref{si:powermain} compares the power of five alternative methods for constructing outcome measures.\footnote{\citet{stoetzer2022causal} propose a hierarchical item response theory (hIRT) model to estimate the latent treatment effect. Their model has two components. First, the latent variable $\eta_i$ is assumed to be a linear function of the treatment: $\eta_i = \gamma_0 + \gamma_1 Z_i + \epsilon_i$, where $\epsilon_i$ is normally distributed with a constant variance $\sigma^2$. In this equation, $\gamma_1$ is the average treatment effect of interest. Next, the observed outcomes $Y_{ij}$ are generated by a characteristic function of the item: $\mathbb{P}[Y_{ij}=h | \eta_i]=P_{ijh}(\eta_i;\alpha_{ijh},\beta_{ij})$, where $\alpha_{ijh}$ and $\beta_{ij}$ are the item difficulty and item discrimination parameters for item $j$ and answering $h$. As \citet{stoetzer2022causal} note (e.g. page 28), this model makes a set of parametric assumptions, including constant treatment effects across subjects and a particular item characteristic function. Our approach relaxes some of these assumptions. We allow $Z_i$ to have heterogeneous treatment effects. We also remain agnostic about the non-linear transformation function. Because the IRT model is not suitable in the presence of heterogeneous treatment effects (HTE), we do not report power calculations. More discussion may be found in SI \ref{si:comp2}.} WSI and SEM estimators prove to have the greatest power (they are almost indistinguishable in the figure). Both methods are clearly superior to an equally weighted scaled index, because optimal weights increase the signal-noise ratio. The WSI and SEM also tend to be more powerful than PCA and ICW. Ironically, those dominated methods are the most frequently used approaches in recently published work. Detailed discussion of the simulations may be found in Appendix \ref{si:comp1}.

\begin{figure}[!h]
    \centering
    \includegraphics[width=0.8\linewidth]{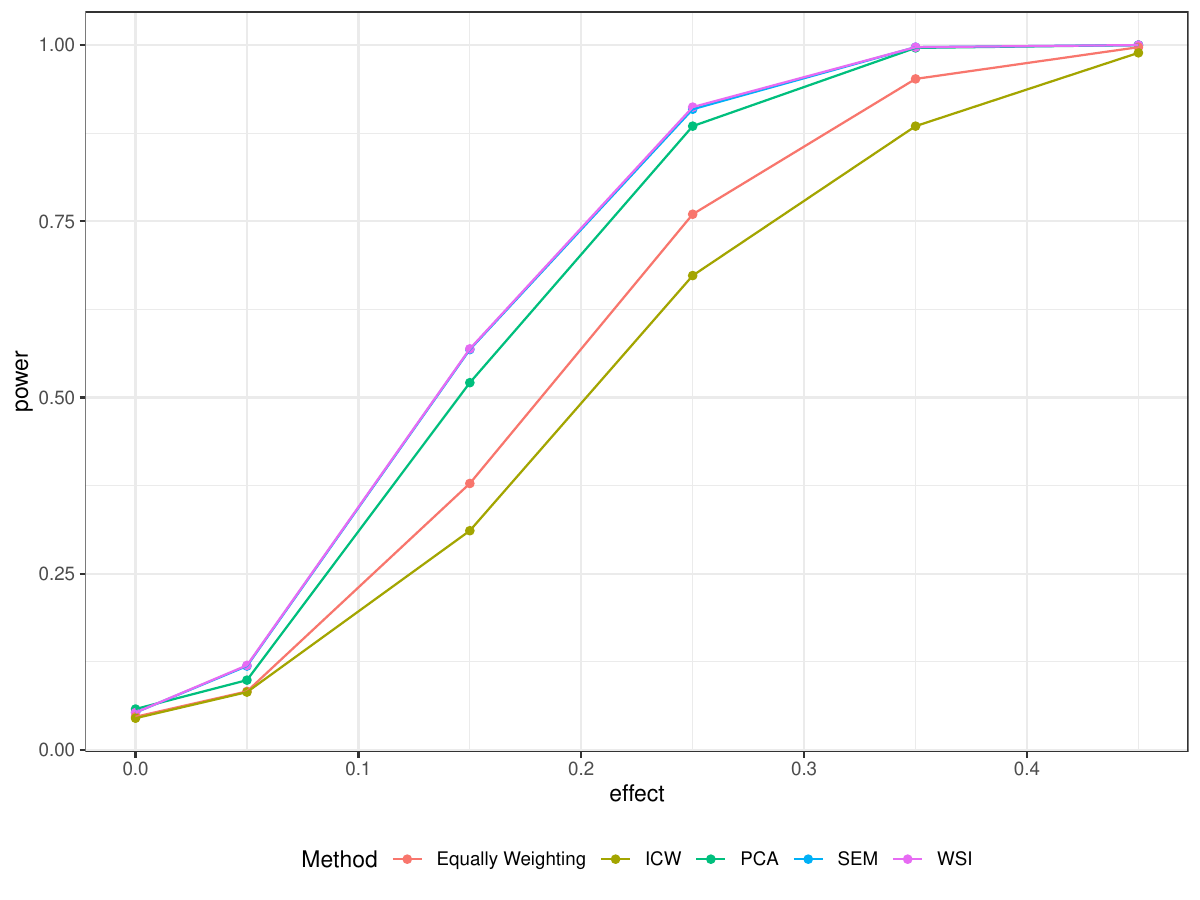}
    \caption{Power analysis: WSI, Equal weighting, SEM, ICW, and PCA}
    \label{si:powermain}
\end{figure}

\textbf{t-ratio and noise in the outcome variable.} Why does suboptimal measurement of experimental outcomes affect statistical power?  Recall the regression function in the framework: $\eta_i = \mathbb{E}[\eta_i^0] + \tau Z_i + \xi_i$. For simplicity, suppose $\lambda_j = 1 ; \forall j$. Recall that the weighted (scaled) outcome is $\tilde{Y}_i = \sum_{j=1}^J \omega_j Y_{ij}$. Therefore, the observed regression $\tilde{Y}_{i}=\mathbb{E}[\eta_i^0] + \tau Z_i + (\tilde{\xi}_i+\epsilon'_i)$, where $\tilde{\xi}_i=\sum_{j=1}^J \omega_j \xi_{ij}$ and $\tilde{\epsilon}_i=\sum_{j=1}^J \omega_j \epsilon_{ij}$.
The $t$-ratio for the treatment variable is $t=\frac{\hat{\tau}}{\hat{se}(\hat{\tau})}$, where the denominator $\hat{se}(\hat{\tau})$ depends on the estimated variance of the error term: $\hat{se}(\hat{\tau})=\sqrt{\frac{\hat{\sigma}^2_{\tilde{\xi}}+\hat{\sigma}^2_{\tilde{\epsilon}}}{\sum_{i}(Z_i-\overline{Z}_i)}}$. From this expression, we see that when the noise in the outcome measure is large, the $t$-ratio becomes smaller.

\section{Design Trade-off: More Outcome Measures or More Subjects?}\label{sec:var}

One implication of the preceding section is that, as \citet[][p.449]{broockman2017design} argue, researchers could improve the precision of their experiments by investing in additional outcome measures. Under what conditions is such an investment warranted?

The super-population variance of the (oracle) estimator is 
\begin{equation}
    Var(\hat{\tau}) =\frac{Var[\tilde{Y}^1_i]}{n_1}+\frac{Var[\tilde{Y}^0_i]}{n_0}
\end{equation}

This formula shows that researchers can decrease this variance by adding more measurements, which decreases the variance in the numerator, or by adding more observations, which increases the denominator. We examine both approaches and explore the optimal allocation under a budget constraint.

\subsection{Advantages of Adding More Outcome Measurements}

The variances of the difference-in-means or regression estimators depend mainly on the variance of the weighted averages $\tilde{Y}^1_i$ and $\tilde{Y}^0_i$. Assuming $Cov(\epsilon_{ij},\epsilon_{ik})=0$ and ignoring the estimation variance of $\hat{\lambda}$ on the grounds that it is known based on existing studies, under optimal inverse-variance weighting $\omega^*_j$, the variance of $\tilde{Y}^1_i$ is $\sigma^2(\eta^1_i) + \frac{1}{\sum_{j=1}^J \lambda^2_j/\sigma^2(\epsilon_{\cdot j})}$, where the latter part is due to (optimally weighted) measurement error.


%
The variance formula shows that the inclusion of a highly reliable outcome measure (i.e., a measure with a high ratio of trait variance to error variance) may substantially increase the precision with which the ALTE is estimated. However, the marginal gains from each additional measure depend on the reliability of the measures already contained in the additive index. Fortunately, with optimal weighting, in expectation more measurements never increase the variance of the estimated ALTE because unreliable measures are down-weighted accordingly.\footnote{It is important to note that this desirable feature of adding outcome measures may not hold without optimal weighting. See more discussion in \ref{si:variance}. } 





\subsection{Gathering More Observations}

In the super-population framework, the variance of our WSI estimator is shown in Equation (1),  
the numerators are the variances of weighted average potential outcomes in the super-population. Adding more observations affects the denominator. Under a balanced design, when one adds $j$ more observations in each group, the variance will be $[\frac{2Var[\tilde{Y}^1_i]}{n+2j}+\frac{2Var[\tilde{Y}^0_i]}{n+2j}] / [\frac{2Var[\tilde{Y}^1_i]}{n}+\frac{2Var[\tilde{Y}^0_i]}{n}] =\frac{n}{n+2j}$ of the variance when the sample size is $n$; in other words, the percentage reduction in variance is $\frac{2j}{n+2j}$. Therefore, researchers can add $\frac{n}{4}$ subjects to the treatment and control groups to decrease the variance by $\frac{1}{3}$.

\subsection{Trade-off under a budget constraint} 

We can apply this framework to a more general budget allocation problem. Suppose that researchers have a budget $B > 0$ and the cost of adding one more measure is $c_m$ and the cost of adding one more observation is $c_o$. Consider the optimal allocation of budget for the sample size $n$ and the number of items $J$. This is a typical constrained optimization problem 
where objective is to minimize the sampling variance given the budget constraint $c_mJ+c_0 n\leq B$. Optimization may be used to find the best budget allocation between additional outcome measures and additional respondents.

Now, let us consider a more concrete scenario. Suppose researchers have an extra  $\$5000$ budget. They may decide to recruit more respondents and/or add more measures. What is the optimal decision? To be consistent with the previous simulation, we assume the current sample size is $n=500$, and there is only one measure. The marginal cost of the measure is $c_m=\$1000$ and the marginal cost of each respondent is $c_o=\$10$. Let us consider two scenarios: the outcomes are measured with high reliability ($0.75$) or low reliability ($0.4$).\footnote{To be specific, we set $Var(\eta^1)=Var(\eta^0)=1$. For low reliability, the variance of the measurement error is $1.701$ so that $\frac{Var(\eta)}{Var(Y)}\approx 0.4$. For high reliability, the variance of the measurement error is $0.379$ so that $\frac{Var(\eta)}{Var(Y)}\approx 0.75$. The objective function is to maximize the variance gain, or equivalently, $\min_{k,j} \frac{-2k[Var(\eta^1)+Var(\eta^0)]}{500(500+k)}+\frac{-4(k+500j+jk)}{500(1+j)\Sigma(500+k)}$, where $j$ is the number of additional measures and even number $k$ represents the additional respondents, and $\Sigma:=\frac{\lambda^2}{\sigma^2(\epsilon)}$. The constraint is $1000j + 10k \leq 5000$.}

For high reliability case, the optimal solution is to add one more measure and recruit 400 more respondents. However, for the low reliability case, the optimal solution is to add two more measures and recruit 300 more respondents. In other words, although the extra measures are low quality, they are an especially good investment. This somewhat surprising result is formally explained in SI\ref{si:measure}; but it can be intuitively understood from the simulation in Figure \ref{fig:red}. When measurement error is large (low reliability), the variance reduction is relatively greater.

\begin{figure}[!h]
    \centering
    \includegraphics[width=0.8\linewidth]{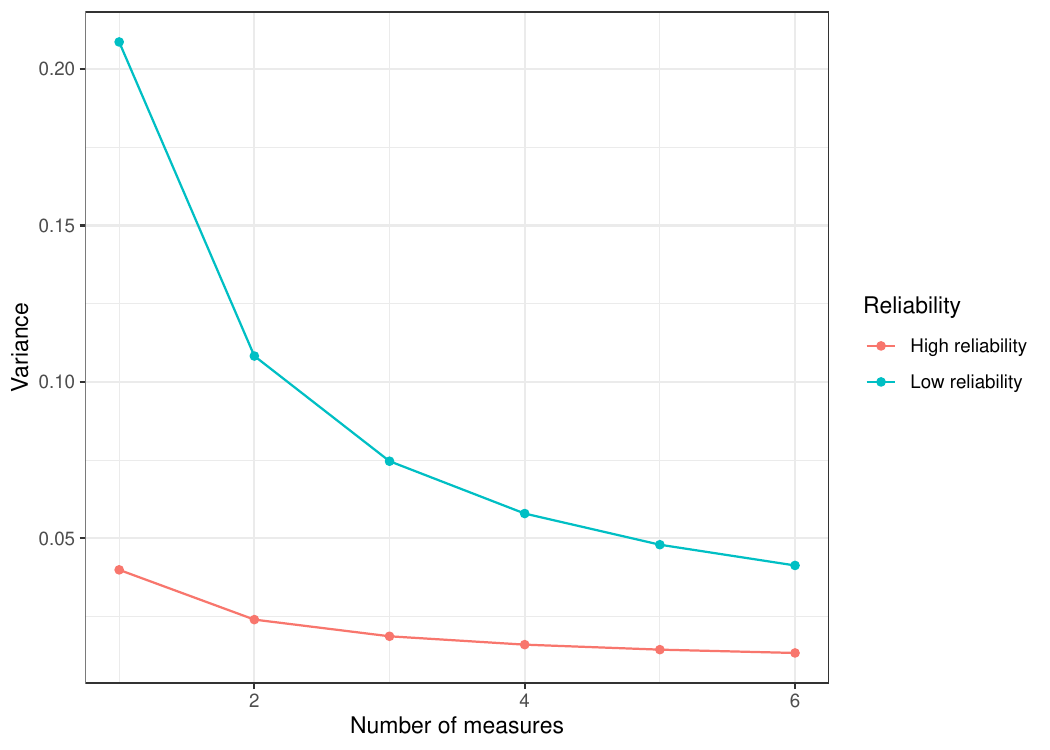}
    \caption{\textbf{Simulation Illustrating Variance Reduction Due to Collection of Additional Outcome Measures.} The horizontal line represents the number of measures and the vertical line is the estimated variance of the optimal weighting estimator. The variance reduction is larger if the reliability is lower. A detailed explanation may be found in SI \ref{si:measure}.}
    \label{fig:red}
\end{figure}



Two other considerations also arise when weighing the advantages of collecting additional outcome measures.  The first is that additional observable manifestations of the latent outcome make for more overidentifying restrictions on the key scaling parameters.  Each additional measure gives us more ways to assess the scaling properties of all of the outcome measures at hand.  This fact, in turn, allows us to assess the goodness-of-fit of a posited multi-equation model.  The second consideration is that the more outcome measures one gathers, the more likely one is to discover the inadequacies of one or more of the measures. One symptom of a measurement problem is poor fit between the predicted and actual variance-covariance matrix; another is fluctuation of the estimated ALTE when certain outcome measures are included or excluded.  Ideally, additional outcome measures improve precision and robustness; however, the inclusion of flawed measures may undercut these benefits.  Therefore, when gauging the trade-off between allocating resources to subjects or outcome measurements, one should bear in mind that the value of additional outcome measures hinges on their validity and reliability.

\section{Strategies for Improving Precision and Assessing Robustness}
\subsection{Covariates} 

The analysis of experiments with latent outcomes may benefit in three ways from the inclusion of covariates, defined as variables measured prior to random assignment.  First, covariates may be used to identify measurement parameters, such as the $\lambda_j$ in Figure \ref{fig:dgp}.  Indeed, when it comes to identifying scaling parameters, covariates play the same identifying role as randomly assigned interventions, and the assumptions required are similar: the covariate must be statistically independent of the measurement errors but correlated with the structural disturbance term ($\zeta_i$).  In other words, the covariate must predict $Y_{i1}$ to some extent so that the instrumental variables estimator is defined.  Second, covariates that are strongly prognostic of outcomes will improve the precision with which the average treatment effects are estimated.  This property of covariate adjustment is in keeping with conventional experimental analysis, where outcomes are modeled directly without reference to latent variables.  Third, because covariates contribute to the identification of measurement parameters, covariates allow the researcher to use overidentification to assess the goodness-of-fit of the posited multi-equation model. In sum, although covariates are not strictly necessary for the identification or estimation of measurement parameters, they may nevertheless play a useful role, especially when they are predictive of outcomes.

\subsection{Nonparametric Identification and Linearity}\label{sec:nonpara}

A key assumption of our approach is the linear relationship between the latent outcome $\eta_i$ and its observed measurements $Y_{ij}$. As emphasized in previous sections, this is fundamentally a design-based assumption: researchers are responsible for designing appropriate measures of the latent construct. The simplest and most natural way to measure a latent outcome is by positing a linear relationship -- for instance, test scores should increase with latent ability. It is difficult to justify a measurement that has a highly nonlinear relationship with the latent outcome, especially when a simpler linear measure is available.

Our framework, in fact, allows for more flexible nonparametric extensions. We establish the formal nonparametric results—both for technical rigor and completeness—in a separate paper, and sketch the main ideas here. Instead of assuming a linear relationship, each measurement $Y_{ij}$ can be viewed as a random draw from some distribution indexed by the latent variable $\eta_i$, denoted $P_j(\cdot|\eta_i)$. The randomness naturally implies the presence of measurement errors. Our linear specification, $Y_{ij}= \lambda_j \eta_i + \epsilon_{ij}$ is a special case in which the expectation of $Y_{ij}$ is linear in $\eta_i$. 

To address the study-specific noncomparability problem (as in the linear case), we need to rescale all measurements relative to $Y_{i1}$. Specifically, we require a measurement bridge function $h_j$ for each measurement $j\neq 1$ such that $\mathbb{E}[h_j(Y_{ij})|\eta_i]=\mathbb{E}[Y_{i1}|\eta_i]$. This bridge function maps each measurement onto the reference measurement $Y_{i1}$. In the linear case, the bridge function is unique and takes the form $h_j(Y_{ij}) = \frac{1}{\lambda_j} Y_{ij}$, such that $\mathbb{E} [\frac{1}{\lambda_j} Y_{ij} \mid \eta_i] = \mathbb{E}[Y_{i1} \mid \eta_i]$. 

One sufficient condition of the existence of such a measurement bridge function, $h_j$, is completeness assumption.\footnote{The completeness assumption and results may be found in the double negative control literature \citep{miao2018confounding,tchetgen2020introduction}. Formally, the completeness assumption states that $\mathbf{E}[g(\eta_i)|Y_{ij}]=0$ almost surely if and only if $g(\eta_i)=0$ almost surely.} Intuitively, completeness requires that the measurement $Y_{ij}$ captures the full information (variability) of the latent outcome $\eta_i$. For example, if $\eta_i$ is discrete with three categories, then $Y_{ij}$ must also take on at least three distinct values. In the typical case in which $\eta_i$ is continuous, we do not recommend using measurements that lose information (such as discretized or categorical outcomes). Instead, we encourage researchers to design measures that plausibly bear a linear relationship to the underlying latent variable of interest, which is the simplest way to satisfy the completeness assumption. 

As in the proximal causal inference literature \citep{miao2018identifying,miao2018confounding}, identifying $h_j$ requires auxiliary information from either $Z$ or another measurement $Y_{ik}$ (for $k \neq 1 \text{ and } j$). For example, by taking the expectation of both sides of the bridge equation with respect to $p(\eta_i|Z_i)$, we obtain $\mathbb{E}[h_j(Y_{ij})|Z_i]=\mathbb{E}[Y_{i1}|Z_i]$. 
In the linear case, with the bridge function  $h_j(Y_{ij})=\frac{1}{\lambda_j} Y_{ij}$, we obtain 
$Y_{ij} = \lambda_j Y_{i1} + (\epsilon_{ij}-\lambda_2\epsilon_{i1})$ and use $Z_i$ as an instrumental variable to identify $\lambda_j$, as shown in section \ref{sec:identify}. In more general settings, $h_j$ can be estimated using nonparametric methods.

\begin{example}[Binary measurements]
    Suppose researchers measure a latent outcome through one continuous measurement $Y_{i1}$ and two binary measurements $Y_{ij}, j=2,3$. Binary measures are an instructive case because they cannot be linear functions of the latent variable. 
    
    Recall, $h_j$ is the solution satisfying
    $\mathbb{E}[h_j(Y_{ij})|Z_i]=\mathbb{E}[Y_{i1}|Z_i]$. Because $Y_{ij},j=2,3$ only takes two value, for $Z_i=1$, we obtain 
$$
\begin{aligned}
    \mathbb{E}[Y_{i1}|Z_i=1]&= \mathbb{E}[h_j(Y_{ij})|Z_i] \\
   &= h_j(Y_{ij}) \mathbb{P}[Y_{ij}=1|Z_i=1] + h_j(Y_{ij}) \mathbb{P}[Y_{ij}=0|Z_i=1] \\
   &= h_j^1 \mathbb{P}[Y_{ij}=1|Z_i=1] + h_j^0 \mathbb{P}[Y_{ij}=0|Z_i=1].
\end{aligned}
$$

Here $h_j^1$ and $h_j^0$ are two unknown parameters. Similarly, for $Z_i=0$, we obtain 
$$
\mathbb{E}[Y_{i1}|Z_i=0]=h_j^1 \mathbb{P}[Y_{ij}=1|Z_i=0]+ h_j^0 \mathbb{P}[Y_{ij}=0|Z_i=0].
$$

\noindent We have two equations and two unknowns, and $h_j^1$ and $h_j^0$ are identified under NPIV assumptions. 

To estimate the ALTE, we first obtain transformed outcomes $\tilde{Y}_{i2}=h_2(Y_{i2})$ and $\tilde{Y}_{i3}=h_3(Y_{i3})$. In this binary case, we can write as $\tilde{Y}_{i2}=Y_{i2}h_2^1+(1-Y_{i2})h_2^0$ and $\tilde{Y}_{i3}=Y_{i3}h_3^1+(1-Y_{i3})h_3^0$. The final combined outcome variable is $\tilde{Y}_i=\omega_1 Y_{i1}+\sum_{j=2}^3\omega_j \tilde{Y}_{ij}$. With this new outcome variable, we can estimate ALTE as usual, for example $\mathbb{E}[\tilde{Y}_i|Z_i=1]-\mathbb{E}[\tilde{Y}_i|Z_i=0]$ \footnote{In practice, we need to use additional instruments; otherwise, there is no efficiency gain in this example, because the sample moments are exactly the same for the three outcomes given treatment status.}. 

To illustrate this identification result, we simulate $\eta_i^0 = 0$ and $\eta_i^1 \sim N(1,1)$. The outcome $Y_{i1}$ is then constructed as $Y_{i1} = \eta_i + e_{i1}$. For the binary measurements $Y_{i2}$ and $Y_{i3}$, we first generate latent continuous variables $\xi_{i2} = 0.5\eta_i + e_{2i}$ and $\xi_{i3} = 1.5\eta_i + e_i$, and then create binary measurements based on whether the latent variable $\xi$ is above or below its mean: $Y_{i2} = 1[\xi_{i2} < \overline{\xi_2}]$ and $Y_{i3} = 1[\xi_{i3} < \overline{\xi_3}]$. Figure \ref{fig:nonp} shows that nonparametric WSI correctly estimates the true ALTE of 1. In contrast, if we ignore the nonlinearity and instead use linear WSI or SEM, the estimates are clearly biased.  The implication is that a nonparametric approach allows the researcher more latitude in estimating the ALTE when some of the outcomes are binary indicators.

\begin{figure}
    \centering
    \includegraphics[width=0.5\linewidth]{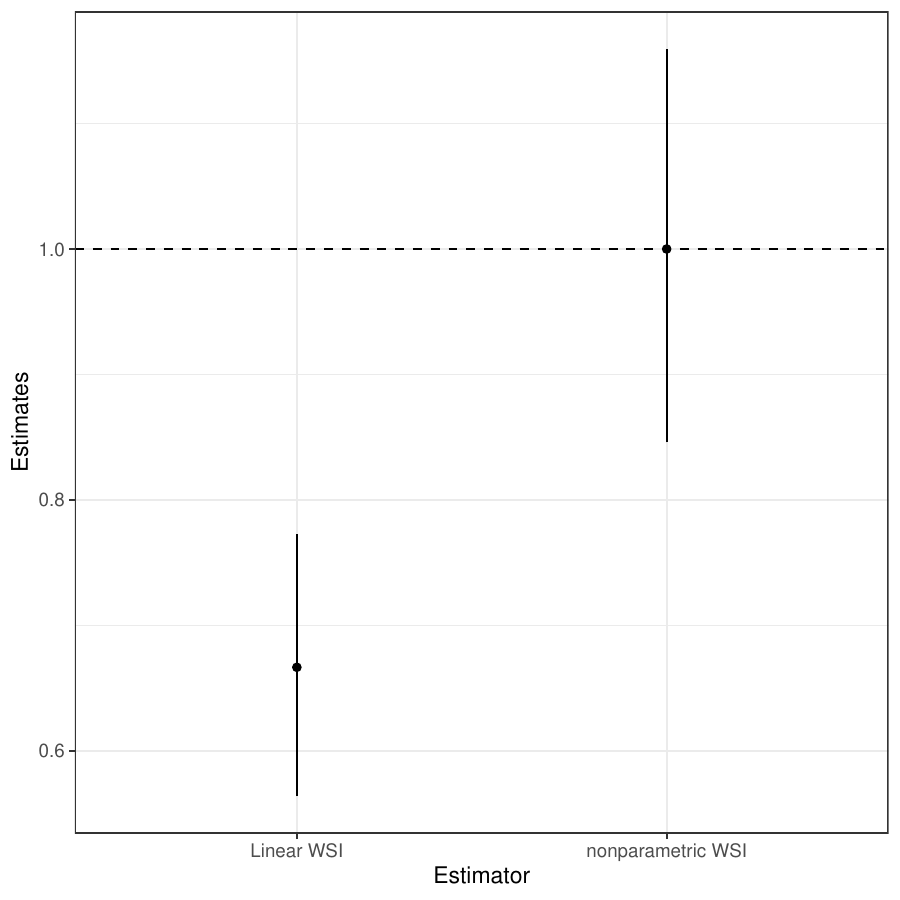}
    \caption{Binary measurements and Nonparametric Estimation}
    \label{fig:nonp}
\end{figure}

\end{example}


Although nonparametric methods provide less model-dependent results, they inevitably come with larger estimation error. We therefore suggest that researchers begin by testing for linearity. Lemma \ref{lem} shows that linearity between the measured outcomes and the latent outcome implies linearity among the different measured outcomes. Therefore, we can test whether $Y_{ij}$ and $Y_{ik}$ follow a linear functional form. The literature offers multiple specification tests for this purpose. For example, the Rainbow test checks whether linearity holds in the central portion of the data. \citep{utts1982rainbow}; the Regression Equation Specification Error Test  examines whether the coefficients on added polynomial terms are jointly zero \citep{ramsey1969tests}.
More generally, we can compare the linear specification to any nonparametric alternative using methods proposed by \citet{hardle1993comparing}, \citet{zheng1996consistent}, \citet{hsiao2007consistent}, among others.

As noted above, linearity is untenable when latent outcomes are measured by survey questions with binary response options. One work-around is to use more granular response options, such as presenting an agree-disagree question with response options ranging from ``strongly agree'' to ``strongly disagree.''\footnote{However, this approach may still yield a lopsided response distribution that would not plausibly be modeled as a linear function of the latent variable.} In the extreme case where only a few discrete measurements are available, a more convincing way to satisfy the linearity condition is to create an additive index composed of several discrete measures, in much the same way that standardized tests measure math ability by counting the number of correct answers to many specific multiple-choice questions.\footnote{Modern variants of scholastic aptitude tests adaptively calibrate the difficulty of the questions based on the answers to prior questions, and the same cost-saving principle can be applied to survey measures of political knowledge and other topics \citep{montgomery2022adaptive}.}
This approach, which dates back to the early days of psychometric assessment, may require pilot testing (perhaps in a nonexperimental context or during pre-treatment baseline measurement) in order to devise a series of discrete measures that, when summed, produce granular distributions with few observations in each tail. For technical discussion, please see SI \ref{se:linearize}.

\begin{figure}[!h]
    \centering
    \includegraphics[width=.9\linewidth]{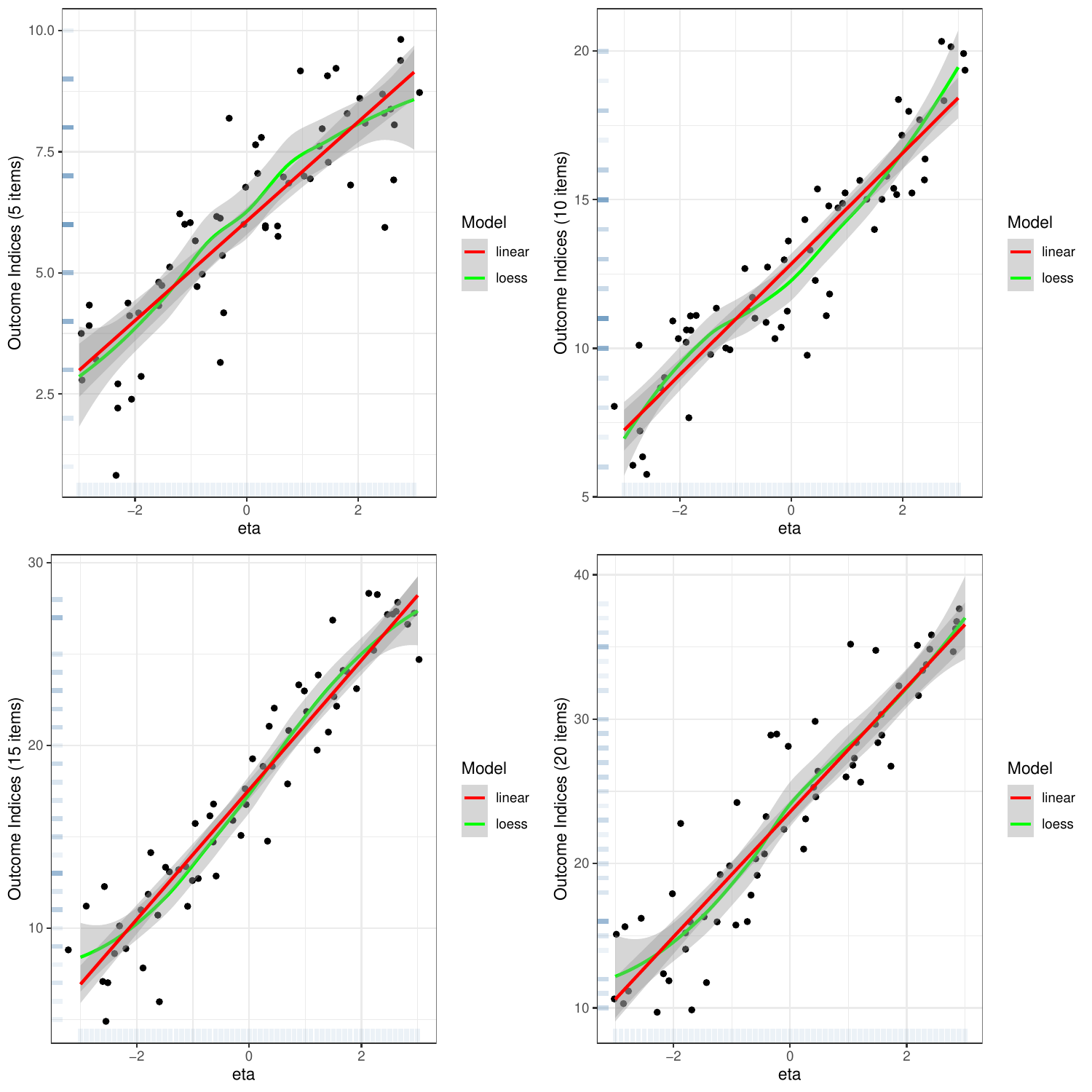}
    \caption{\textbf{Assessing linearity between $\eta$ and an additive index created by adding binary items together}. We create an additive index $v_1$ by summing up 5, 10, 15, and 20 binary responses from IRT models. Data points have been jittered slightly for clarity. The rug plots on both axes denote the distribution of data. The vertical axis shows the additive index created by adding all binary variables in the simulation. The horizontal axis is the true latent variable ($\eta$) used in the IRT model.}
    \label{fig:lin1}
\end{figure}

To illustrate how additive indices help satisfy the assumed linear mapping between latent and observed outcomes, we performed a series of simulations. In each simulation, binary responses were generated using an IRT model, and two outcome indices were then created by summing half of the binary items together.\footnote{We use simTrt function in the psych R package to generate the data. To be specific, the discrimination parameter $a \sim Uniform (.5, 1.5)$, item difficulties $b \sim N(0, 2)$, and the guessing asymptote is $c=0.2$.} The number of items used to create each index increases from one simulation to the next, as indicated by the horizontal axis labels of Figure \ref{fig:lin1}. The sequence of graph panes shows that as we sum more binary items, the LOESS fitted line -- which allows a nonlinear relationship if the data suggest it -- between the observed measures and the true $\eta_i$ used in the simulation becomes increasingly linear. Figure \ref{fig:lin1} illustrates an extreme case where all component items are binary. In SI \ref{se:linearize}, we also show the simulation results for ordinal items. Intuitively, if the component items have more granular response options, fewer items will be needed to form an additive index that meets the linearity condition.\footnote{In practice, $\eta$ is not observed. Instead, we could show the increasingly linear properties between two observed measures. Figure \ref{fig:lin2}, which does so, conveys the same point: as the number of binary items used to create an index increases, the more linear the apparent relationship.}  On the other hand, building an index from survey items that are subject to the same measurement errors due to similar question wording, order, and response format slows the rate at which linearity is achieved.  As a thought experiment, consider the limiting case in which two measures elicit exactly the same responses from each subject; in this case, an index based on both measures is no more granular than each measure considered separately. The design implication is clear: if possible, researchers should measure the latent variable in different ways so as to reduce the correlation between errors of measurement \citep{andrews1984construct}.

\subsection{Testing Measurement Equivalence across Experimental Groups}

Our basic framework assumes that responses from subjects in the treatment and control groups have the same measurement structure. In other words, the translation from $\eta_i$ to the $Y_{ij}$ is the same regardless of the subject's experimental condition. This assumption may not hold in practice, for example, if there is something about the treatment that changes the way that subjects interpret outcome questions.\footnote{The intuition extends to non-survey outcomes as well. If outcomes in the control group are measured in miles but outcomes in the treatment group are measured in kilometers, any apparent mean difference may be misleading.}  Different scaling parameters jeopardize the exclusion restriction, which holds that the treatment itself is the only systematic reason that expected outcomes in the treatment and control groups may differ.  Fortunately, the measurement equivalence assumption can be assessed empirically. In order to make the test as general as possible, we relax the implicit assumption that the observed outcome is centered on the latent variable. A more flexible model is that $Y_{ij} = \lambda^Z_j g(\eta_i) + \alpha_z + \epsilon_{ij}$, where $\alpha_z$ denotes the so-called ``structural mean'' \citep{bagozzi1989use} for experimental group $z$. Unlike the model presented in section \ref{sec:frame}, this one expresses the treatment effect as a shift in latent means, while at the same time allowing for the possibility that the measurement scaling parameters $\lambda^Z_j$ 
may differ for the treatment and control groups. The more flexible model may be tested as a nested alternative to the model that assumes the same measurement parameters apply to both experimental groups, as illustrated in the empirical example below.

\subsection{Comparison to Seemingly Unrelated Regression (SUR)}

Many experiments are designed with a clear intention to gather multiple measures of a given latent outcome, in which case the modeling framework described here is an appropriate starting point.  However, other experiments gather multiple outcomes that are not necessarily viewed as redundant measures of a given latent factor.  For example, field experiments on conditional cash transfers examine outcomes such as whether children in the household are enrolled in school, vaccinated, and free from signs of malnutrition.  One could view such outcomes as manifestations of an underlying factor (child well-being), but one could instead consider them to be three distinct outcomes.  How might we evaluate the adequacy of the two modeling approaches?

\begin{figure}[!h]
    \centering
    \includegraphics[width=0.8\linewidth]{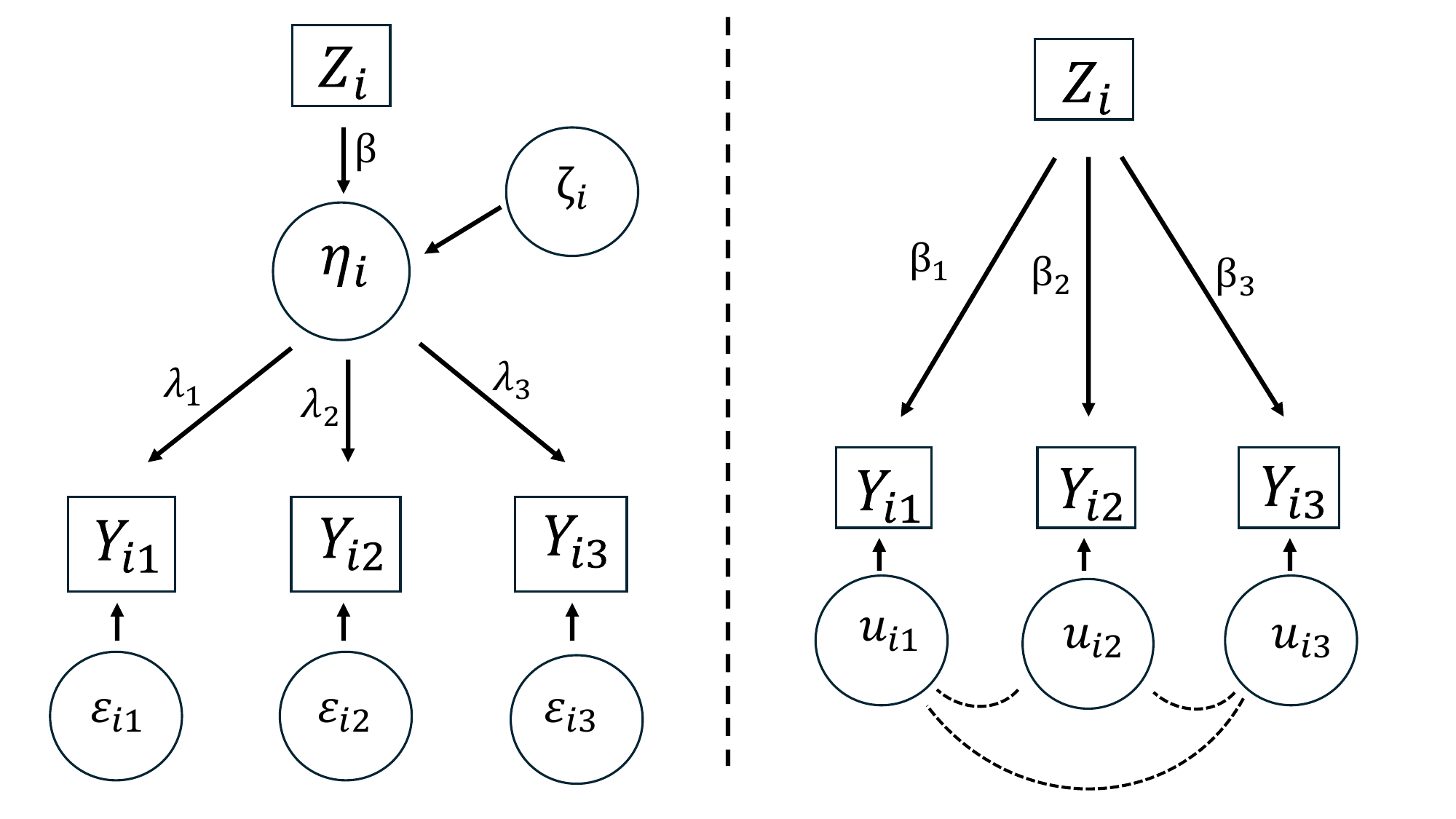}
    \caption{Graphical Depiction of a Linear Measurement Process with Additive Measurement Errors (left) and a Seemingly Unrelated Regression Model (right)}
    \label{fig:dgpsur}
\end{figure}

Fortunately, the two models may be expressed as nested alternatives. Figure \ref{fig:dgpsur} shows two causal graphs side-by-side.  The left pane depicts the now-familiar model in which three outcome measures are linear manifestations of a latent outcome. The right pane depicts the seemingly unrelated regression model \citep{zellner1962efficient} in which the same three outcome measures are modeled as distinct outcomes.\footnote{The inverse covariance weighting method proposed by \citet{anderson2008multiple} is closely connected to the SUR model because it does not posit a latent variable measured with error.  This method  down-weights outcomes that are correlated with one another in order to generate a unified outcome measure with minimum variance. Simulations in Figure \ref{si:powermain} show that the inverse covariance weighting method is less powerful than SEM when outcomes measures truly tap a latent outcome with error.  
}  Whereas the latent variable model has 8 free parameters, the SUR model has 10 because it does not constrain the causal path from $Z_i$ to the $Y_{ik}$ to flow through $\eta_i$. Therefore, two degrees of freedom may be used to assess the extent to which the more parsimonious latent variable model adequately fits the observed variance-covariance matrix.  Rejection of the null hypothesis calls into question one or more assumptions of the latent variable model in favor of the more agnostic SUR model. Conversely, failure to reject the null hypothesis suggests that the causal pathways implied by the latent variable model are approximately the same as the three distinct average treatment effects depicted in the SUR model (i.e., $\beta \lambda_k \approx \beta_k$). On theoretical grounds, the latent variable model may not be applicable to every experiment that measures multiple outcomes.  But when the latent variable model is plausible for a given application, the fact that its parameterization has testable implications means that researchers need not rely solely on theoretical intuition when choosing between the latent variable model and the SUR model.

\section{Application}

Like \citet{stoetzer2022causal}, we draw our empirical example from the \citet{kalla2020reducing} field experiment designed to assess the effects of two different door-to-door canvassing interventions on subjects' views concerning immigrants. Kalla and Broockman's Study 1 shows that canvassing conversations that feature a ``non-judgmental exchange of narratives'' produce persistent changes in subjects' attitudes about immigrants, whereas otherwise similar conversations that omit this persuasive strategy seem to produce weaker effects.  One attractive feature of this study is the fact that it devoted considerable attention and resources to outcome measurement.  As summarized in Table \ref{tab:summary}, five questions gauged respondents' attitudes towards undocumented immigrants.  A separate battery of three questions measured respondents' views about policy proposals concerning expediting paths to citizenship and reducing threats of deportation.

The two treatment conditions (full treatment $Z_1$ and abbreviated treatment $Z_2$) in conjunction with two primary outcome scales (attitudes $Y_1$ and policy views $Y_2$) and a pre-treatment covariate give us five observed variables. If we model these data by assuming that the attitude index and the policy views index both measure a common underlying factor, the two parameters of central interest are the average treatment effect of full treatment on the latent outcome and the average treatment effect of abbreviated treatment on the latent outcome.  The latent variable model is depicted in the left pane of figure \ref{fig:dgpsur1}.

Let us review and evaluate the assumptions under which both causal parameters in the latent variable model are identified: the treatments are randomly assigned and therefore independent of latent potential outcomes; the treatments affect measured outcomes only insofar as they influence the latent outcome (the exclusion restriction, which implies that the measurement errors for each observed outcome ($\epsilon_{ij}$) are independent of the interventions); the stable unit treatment value assumption (SUTVA), which implies no spillovers between subjects; at least one of the two treatments has a nonzero ATE on the latent outcome, or the covariate is predictive of outcomes (but not the measurement errors); and observed outcomes are linear functions of the latent outcome. Next, let us evaluate the plausibility of these assumptions.

\begin{figure}[!h]
    \centering
\includegraphics[width=0.8\linewidth]{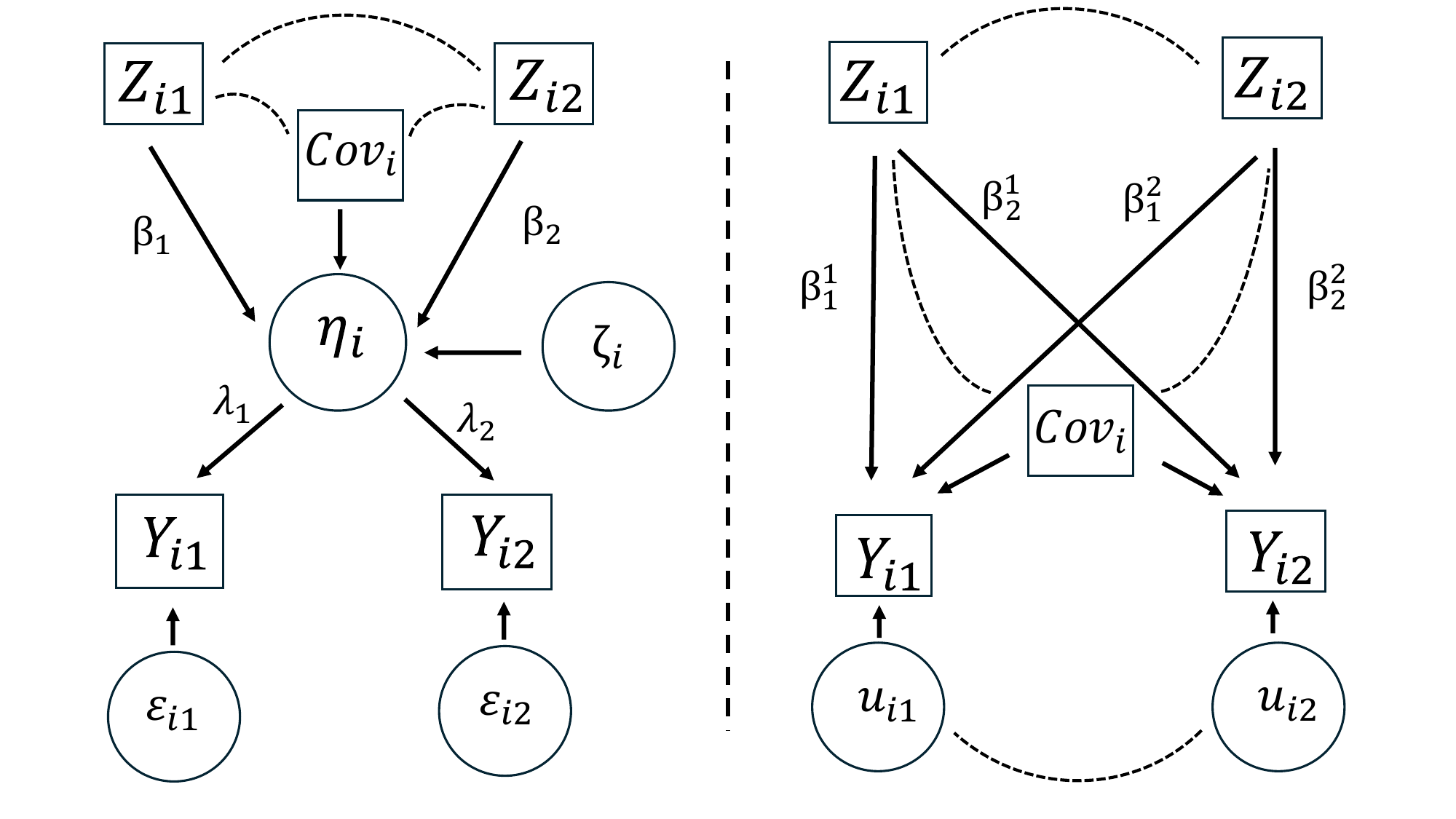}
    \caption{Graphical Depiction of \citet{kalla2020reducing} with Additive Measurement Errors (left) and a Seemingly Unrelated Regression Model (right).}
    \label{fig:dgpsur1}
\end{figure}

In this application, random assignment is satisfied by the experimental design and remains tenable in the wake of attrition, which seems to be unrelated to treatment assignment.  The exclusion restriction implies that the interventions do not affect the outcome measures except insofar as they affect the underlying factor \citep{stoetzer2022causal}.  This assumption could be jeopardized if subjects drew the connection between the treatment they received and the survey, but this study used an unobtrusive measurement strategy that did not alert participants to the connection between the series of surveys they completed and their incidental visit from a canvasser.  SUTVA violations due to spillover effects seem unlikely to have a material effect on outcomes here given the separation among subjects from different households. The treatments and covariates are clearly predictive of outcomes, so the scaling parameters are identified.

As for the linearity assumption, one of the study's strengths is its use of multi-item outcome measures. These measures gauge outcomes with sufficient granularity that we can visually inspect the scatterplot between $Y_1$ and $Y_2$. As shown in \ref{fig:lin}, fitting a flexible LOESS curve through this plot reveals a near-linear pattern that coincides with the fitted least-squares regression line.  A linear mapping from the latent outcome to each of the observed outcome measures seems plausible in this application.\footnote{Tables A.2 and A.4 may be used to calculate the proportion of observed variance in the outcomes that is attributable to the latent variable $\eta_i$. For attitudes, this proportion is 0.74, and for policy views it is 0.83.}


With these identifying assumptions, consistent estimation is straightforward. 
If we declare the scale of the latent outcome variable to be in the same units as the attitude index (i.e., $\lambda_1 = 1$), the ATE of the full treatment on the latent factor can be estimated consistently using the approach described in Section \ref{sec:est}.\footnote{Because this application features two randomly assigned interventions ($Z_{i1}$ and $Z_{i2}$) that are correlated, consistent estimates are obtained using multiple regression rather than bivariate regression.}  
The ALTE of the abbreviated treatment is consistently estimated in an analogous fashion. The fact that we have more information than we need to identify certain parameters such as $\lambda_2$ 
means that can (1) compare alternative estimates of each parameter and (2) leverage the overidentifying information to obtain more precise estimates. Comparing alternative estimates is one way to assess model fit; when two or more estimates of the same parameter diverge, the implication is that one or more underlying assumptions are incompatible with the data.  

Estimates of the two interventions' average treatment effects on the latent outcome are presented in Table \ref{tab:reg}. The first two columns present estimates obtained using maximum likelihood, and columns (3) and (4) report WSI estimates.\footnote{When creating the weighted average, we estimate the scaling parameter using method-of-moments. The optimal weight $\omega^*_j = \frac{\hat{\lambda}^2_j/\hat{\sigma}^2(\epsilon_{\cdot j})}{\sum_{j=1}^k \hat{\lambda}^2_j/\hat{\sigma}^2(\epsilon_{\cdot j})}$, where the variance of measurement error is calculated from MLE.}  Columns (5) and (6) report SEM estimates, this time using a GLS estimator rather than MLE and using bootstrapped standard errors. Each estimator is presented using two specifications, one without covariate adjustment and one adjusting for respondents' baseline attitudes about immigrants (See SI \ref{si:app}).  Comparing columns (1) and (3) shows that MLE and WSI generate almost identical estimates. 
Comparing columns (2) and (4) again shows that the two estimators render similar estimates. Covariate adjustment greatly improves precision and clearly indicates that the full treatment was effective in changing subjects' opinions. The MLE covariate-adjusted estimates also suggest that the full treatment was more effective than the abbreviated treatment, with a differential effect of 0.341 (SE=0.141).  The MLE estimates and standard errors are almost identical to those obtained using GLS and bootstrapped standard errors.


\begin{table}[!htbp]
\begin{threeparttable}
\caption{Estimation Results: MLE, Weighted Index, and GLS}\label{tab:sem}\label{tab:reg}
\centering
\begin{tabular}{@{\extracolsep{5pt}}lcccccccc} 
\\[-1.8ex]\hline 
\hline 
\\[-1.8ex] & \multicolumn{6}{c}{} \\ 
 & {MLE}   & {MLE}     & {WSI}    & {WSI}   & {GLS}  & {GLS}   \\
 \\[-1.8ex] & (1) & (2) &  (3) & (4) & (5) & (6) \\ 
 \hline \\[-1.8ex] 
treat (full) & {0.384*} & {0.431***}  & {0.384*} & {0.430***}  & {0.384*}&{0.431***} \\
 & (0.213) &(0.099) &  (0.196) & (0.099) & (0.213) & (0.096)\\
treat (brief)  & {0.224}  & {0.090}        & {0.225}  & {0.090}  & {0.224}  &  {0.090}   \\
 & (0.206) & (0.101) & (0.200) & (0.101) &(0.215) & (0.102)\\
baseline covariate    &    & {0.662***} &                    & {0.662***} & & {0.662***}\\
& & (0.011)  & & (0.009) & &  (0.012)\\
$\lambda_2$ & 1.653 & 1.549 & 1.652 & 1.549 & 1.653 &1.549 \\
\hline \\[-1.8ex] 
$\chi^2$ $p-value$  & 0.923    & 0.978     &   &   & 0.923 &0.978 \\   
Adjusted $R^2$     &    &   & 0.001  &  0.762  & &  \\
Degree of freedom  &1 & 2 & & & & \\
\hline 
\hline \\ 
\end{tabular}
 \begin{tablenotes}     
\item {\it Note}: Sample size $N=1,578$. $\lambda_2$ in columns 1, 2, 5, and 6 are estimated by SEM; $\lambda_2$ in columns 3 and 4 are estimated by 2SLS using the two treatments and, where applicable, covariates as instrumental variables. For GLS, the standard errors are computed from 10,000 bootstraps. The MLE models are estimated by R package lavaan, and the GLS models are estimated by R package systemfit. *** $p<0.01$, ** $p<0.05$, and * $p<0.1$.  
 \end{tablenotes}
\end{threeparttable}
\end{table}

Next, we compare the results from Table \ref{tab:reg} to estimates obtained using the seemingly unrelated regression model, which makes no assumptions about why the two outcomes may be correlated.  
Table \ref{tab:sur} presents two different SUR specifications, with and without covariate adjustment. The first column, without covariate adjustment, reports two separate OLS regressions in which attitudes ($Y_1$) and policy views ($Y_2$) are regressed on the two treatment indicators.  The second column repeats these two regressions, this time including the baseline covariate. Although the estimated treatment effects look different in Table \ref{tab:reg} and Table \ref{tab:sur}, some simple manipulations show that the two tables tell very similar stories.  The estimated effects on $Y_{i1}$ are scaled identically in the two tables, and the two estimates are quite close.
The estimated effects on $Y_{i2}$, however, differ because the latent variable model applies that scaling factor $\widehat{\lambda}_2=1.653$ when estimating the ALTE in column (1) of Table \ref{tab:reg}.  If we divide the SUR estimate by this estimate of $\lambda_2$, we come very close to the ALTE estimate reported in Table \ref{tab:reg}.  The same holds for all of the coefficients reported in the two tables, implying that the modeling constraints of the latent variable model fit are quite compatible with the agnostic SUR model, which by construction fits the observed variance-covariance matrix perfectly.  A more formal test of the fit of the latent variable model against the fit of the SUR model comes to the same conclusion: the $p$-value of the likelihood-ratio test is $0.980$, so we have no basis to reject the adequacy of the (more parsimonious) latent variable model.  

\begin{table}[!htbp]
\caption{Estimation Results: Seemingly Unrelated Regression}\label{tab:sur}
\begin{center}
\begin{tabular}{l c c }
\\[-1.8ex]\hline 
\hline 
 & \multicolumn{2}{c}{} \\ 
 & Model 1  & Model 2  \\
\\[-1.8ex]  \hline

equation 1: treat (full)  & $0.383$       & $0.418^{***}$ \\
                  & $(0.214)$     & $(0.120)$     \\
equation 1: treat (brief)  & $0.231$       & $0.088$       \\
                  & $(0.218)$     & $(0.122)$     \\
equation 2: treat (full)  & $0.635$       & $0.689^{***}$ \\
                  & $(0.335)$     & $(0.192)$     \\
equation 2: treat (brief)   & $0.364$       & $0.143$       \\
                  & $(0.342)$     & $(0.196)$     \\
equation 1: baseline covariate &               & $0.662^{***}$ \\
                  &               & $(0.011)$     \\
equation 2: baseline covariate &               & $1.025^{***}$ \\
                  &               & $(0.018)$     \\
\hline
eq1: R$^2$        & $0.002$       & $0.688$       \\
eq2: R$^2$        & $0.002$       & $0.673$       \\
eq1: Adj. R$^2$   & $0.001$       & $0.687$       \\
eq2: Adj. R$^2$   & $0.001$       & $0.673$       \\     
\hline
\hline
\end{tabular}
\begin{tablenotes}     
\item {\it Note}: In Model 1, two equations represent two regression models: eq1 is attitudes $\sim$ treat (full) + treat (brief) and eq2 is policy views $\sim$ treat (full) + treat (brief). In Model 2, we add the baseline covariate.  *** $p<0.01$, ** $p<0.05$, and * $p<0.1$.      
\end{tablenotes}
\label{table:coefficients}
\end{center}
\end{table}

Our final specification check is to assess whether the measurement properties of the outcome measures operate symmetrically for all three experimental groups.  The null hypothesis is that the measures operate in the same way across the three assigned groups; the alternative model is that the measurement parameters -- the scaling factors and the measurement error variances -- differ across groups.  A total of 6 degrees of freedom differentiate the two models.  The $p$-value of the model comparison is 0.171, suggesting that there may be some concern about differential measurement across groups, but the gap in terms of goodness-of-fit is not decisive.\footnote{In their analysis of differential item functioning when applying a hierarchical IRT model to these data, \citet{stoetzer2022causal} found similarly equivocal results.}


\section{Conclusion}

This paper has shown how experiments with redundant outcome measures may be analyzed using latent outcome models while retaining the agnosticism of a design-based framework. We identify a key problem of current methods for multiple measurements -- their estimands are often study-specific, hindering the accumulation of scientific knowledge. In contrast, our WSI estimator preserves comparability across experimental replications while at the same time offering other advantages, such as improved statistical precision.

Although our nonparametric identification approach is agnostic about whether the observed measures are linearly related to the latent outcome, in practice linearity is an attractive modeling assumption.
Whether linearity is a plausible assumption for a given application depends on what is being measured and how; crucially, linearity may be addressed during the design stage of an experiment and assessed empirically. For example, outcome measures may be constructed from batteries of survey questions, and the resulting indices may be sufficiently granular to warrant a linear modeling framework. In such cases, identification and estimation of causal effects become straightforward. In other words, an investment in outcome measurement pays dividends when it comes time to analyze the experimental results with less reliance on restrictive parametric assumptions, such as those invoked by an IRT model \citep{stoetzer2022causal}.

A further advantage of our framework is that it allows the researcher to make use of statistical tools that are routinely used in experimental analysis, ordinary least-squares regression and instrumental variables regression.  Instrumental variables regression may be used to estimate the scaling parameters that link the latent outcome to the observed outcomes, which in turn may be used to estimate the error variances associated with each observed outcome measure. With estimates of these scaling parameters and error variances in hand, a researcher may easily create an optimally weighted average of the observed measures of the latent outcome that has an interpretable scale. This WSI may be regressed on the treatments and covariates in the usual way, and the relationship between interventions and the latent outcome are easily visualized using scatterplots. Alternatively, structural equation models may be used to estimate the scaling parameters, error variances, and causal parameters in a single statistical procedure. Both approaches produce consistent estimates of the average treatment effect on the latent outcome. 

Another important message of this paper is the value of overidentification via additional outcome measures, treatments, or covariates. These layers of redundancy both facilitate robust estimation of scaling parameters and create opportunities for goodness-of-fit tests that can help assess the empirical adequacy of the posited latent variable model. Failure to pass such tests implies that future studies need to invest in new or improved outcome measures.  

Latent variable models are optional but potentially helpful. When an outcome cannot be observed directly, researchers should strive to design experiments that measure this latent outcome in multiple ways. For decades, statisticians have pointed out that one way to increase the power of an experiment is to measure the outcome more reliably. The algebra in section \ref{sec:var} reiterates this point, showing the gains in precision that occur with each additional measure of a latent outcome. When maximizing the precision with which the average treatment effect on the latent outcome is estimated, it sometimes makes more sense to invest in additional outcome measures rather than additional subjects.  

This paper has focused on the challenge of measuring a single latent outcome, but the framework may be expanded to designs in which two or more latent outcomes are posited (See the SI \ref{si:more} for an example). The development of valid and reliable measures of multiple latent constructs has a long history in psychometric research \citep{anderson1988structural,campbell1959convergent}. The details of this type of investigation go beyond the scope of this paper but typically involve an iterative process of proposing measures that seem to tap into a theoretically defined construct (face validity), properly distinguish subjects that are known to differ on the latent dimension of interest (construct validity), and show the expected patterns of high and low correlations with measures of other constructs (convergent and discriminant validity) \citep{chan2014standards}. The systematic study of measurement involves a combination of theoretical and empirical steps that may be conducted outside the confines of an experiment but potentially provides valuable insights for experimental applications.

\putbib 
\end{bibunit}


\newpage


\clearpage

\appendix
\addcontentsline{toc}{section}{References} 
\part{Supplementary Information} 
\parttoc 

\setcounter{figure}{0}
\setcounter{table}{0}
\setcounter{proposition}{0} 
\renewcommand\thefigure{A.\arabic{figure}}
\renewcommand\thetable{A.\arabic{table}}
\renewcommand\theproposition{A.\arabic{proposition}}
\renewcommand\thelemma{A.\arabic{lemma}}

\onehalfspacing
\setcounter{page}{1}
\begin{bibunit}

\newpage
\section{An Overview of Current Practice for Constructing Outcome Indices in Political Science}\label{si:summarytab}

\begin{table}[!ht]
\centering
\caption{Illustrative Articles with Multiple Outcome Measurements (2020-2025) Published in the \emph{American Political Science Review} }
\begin{tabular}{p{5 cm} p{1.2cm} p{3cm} p{5.8cm}}
\hline
\textbf{Citation} & \textbf{Year}  & \textbf{Method} & \textbf{Latent variable(s)} \\
\hline
\citet{cruz2020social} & 2020  & ICW  & Public goods provision \\
\citet{blair2022liberal} & 2022  &ICW and equal weights &   Policy provisions \\
\citet{arias2024eye} & 2024  & ICW &   Climate policy related latent outcomes  \\
\citet{haas2024my} & 2024  & ICW  &  Overall positive evaluation of a politician. \\
\citet{bowles2023sustaining} & 2024  & ICW  &  Multiple primary outcomes relating to COVID-19 and politics \\
\citet{amar2025countering} & 2025  & ICW &  Seven distinct primary outcomes \\
\citet{tavits2024fathers} & 2024  & PCA  &  Gender-equality attitudes \\
\citet{abramson2022historical} & 2022& PCA + IRT  &  Social trust; political trust \\
\citet{fouirnaies2022electoral} & 2022  & PCA  &  Legislator effort \\
\citet{peterson2022influence} & 2022  & PCA &  Tax opinion, perceived polarization, cyber security\\
\citet{cheeseman2022curse} & 2021  & PCA  &  Pessimism about corruption prevalence \\
\citet{SobolevEtAl2020} & 2020  & PCA  &  Protest size \\
\citet{thal2020desire} & 2020  & PCA  &  Desire for social status \\
\citet{barker2022intellectualism} & 2022 & PCA  &  Several latent psychological concepts \\




\hline
\end{tabular}
\begin{tablenotes}  
\item \textit{Note:} Abbreviations refer to methods for constructing outcome indices, using multiple outcome measures as inputs. ICW = inverse covariance weighting of standardized items. Equal = all scale items are standardized and receive equal weighting. PCA = principal components analysis, where the index is constructed based on first dimension factor loadings. IRT = Item response theory.
\end{tablenotes}  
\end{table}

\newpage

\section{Revisiting a Simulation to Recover the ALTE}\label{si:decl}

In their discussion of multiple outcomes, \citet[][section 15.4]{blair2023research}, consider several index-creation methods for outcome measurement.\footnote{This example is drawn from code in section 15.4 of the book, which focuses on estimating the conditional mean of a latent variable.  We build on the authors' data-generating process to study the related question of estimating an average treatment effect of an experimental intervention using what is effectively a difference-in-means estimator.  Note that we could have reparameterized the SEM model as a structured means model in order to estimate the intercepts as well.}
Their simulation model posits an experimental intervention on a latent outcome, the same problem that we consider in this paper.
However, none of their index-creation methods  -- additive standardized indexes or predicted values based on factor scores -- allows the authors to successfully recover the target ALTE from their simulation. In this section, we apply our two proposed estimators to their setup: the weighted scaled index and the SEM (maximum likelihood) estimator. 

In their simulation, the latent variable is generated by the equation $\eta=1 + X + 2 * rnorm(N)$. The true ALTE of the experimental intervention is therefore 1. The three observed outcome measures are generated as follows: $Y_1=3+0.1*\eta+rnorm(N,sd=5)$,$Y_2=2+1*\eta+rnorm(N,sd=2)$, and $Y_1=1+0.5*\eta+rnorm(N,sd=1)$.  Notice that the three scaling factors ($\lambda_j$) are $\{0.1, 1, 0.5\}$, so if we set $\lambda_2=1$, we will get outcomes in the original metric. If we instead set $\lambda_1=1$, the model fit will be identical, but now the scale in which the ALTE is measured will change by a factor of 10. The outcome measure we use to set the scale is arbitrary (e.g. millimeters vs. centimeters), but we must remember to interpret the ALTE in terms of whatever scale we select.

Figure \ref{fig:decl} shows the empirical distribution of our two estimators across 1000 simulations, each with a sample of 500 subjects. For ease of interpretation, we set $\lambda_2=1$ so that the target ALTE parameter remains 1.0.  Both of these estimators produce estimates that are centered around the true average treatment effect. Further examination of the empirical sampling distributions shows that both approaches work well, with SEM doing a slightly better job of estimating the true standard error.

\begin{figure}[!h]
    \centering    \includegraphics[width=0.6\linewidth]{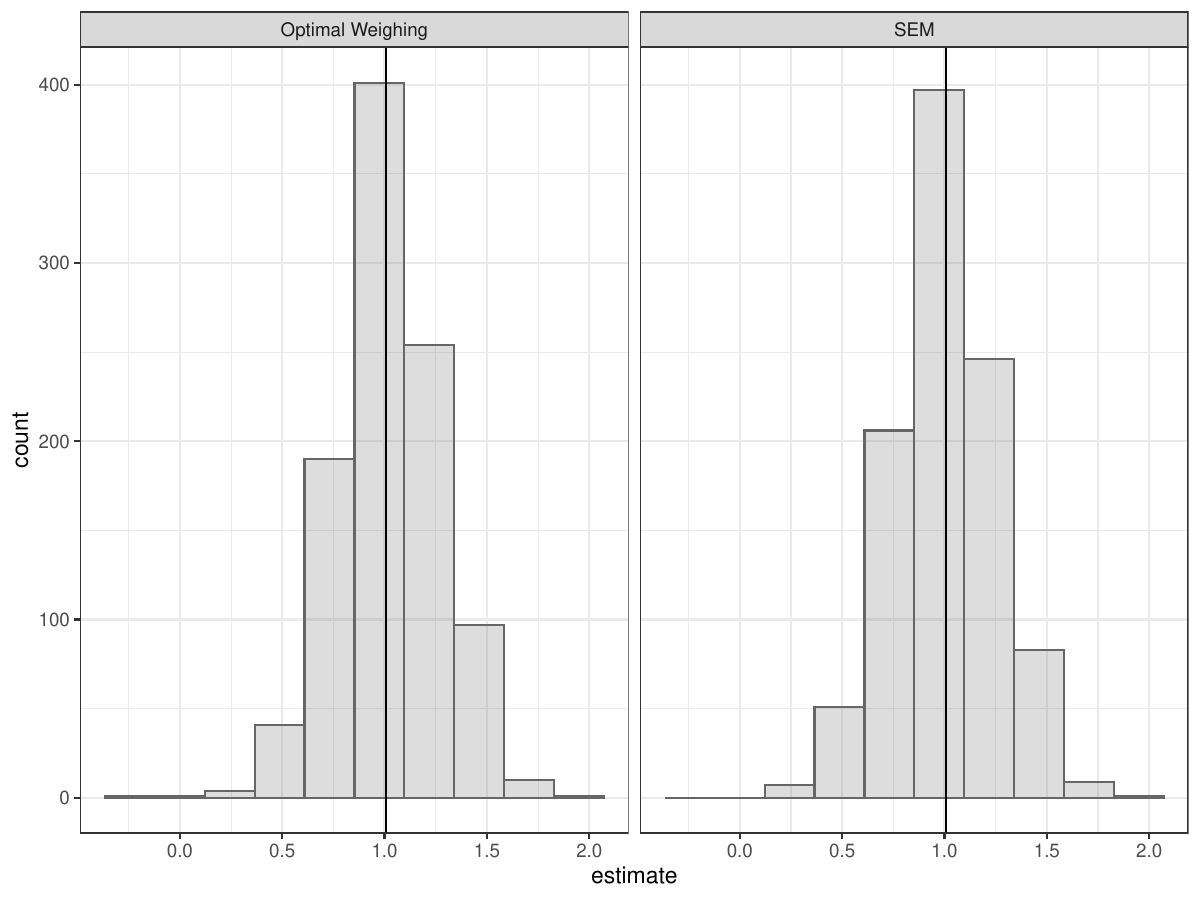}
    \caption{\textbf{Empirical distribution of optimal-weighted scaled index and SEM estimators.} The vertical line is the target ALTE of the intervention on the latent outcome, which is scaled to equal to one in the simulation.}
    \label{fig:decl}
\end{figure}

\section{Proof of Main Propositions}\label{si:allproof}

\subsection{Linearity Lemma \ref{lem}} \label{si:lem}

\begin{lemma}[Linearity]\label{lem}
    Under assumption \ref{ass:frame}, $Y_{ij} = \frac{\lambda_j}{\lambda_k} Y_{ik} +(\epsilon_{ij}-\frac{\lambda_j}{\lambda_k} \epsilon_{ik})$.
\end{lemma}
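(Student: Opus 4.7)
The statement is essentially an algebraic identity obtained by eliminating the latent variable $\eta_i$ between two measurement equations, so the proof is short. The plan is to start from the posited linear measurement model $Y_{ij} = \lambda_j \eta_i + \epsilon_{ij}$ for arbitrary $j$, and use the Valid Measurement condition (Assumption \ref{ass:frame}A), which guarantees $\lambda_k \neq 0$, to invert the equation at index $k$ and solve for $\eta_i$.

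Specifically, I would first write $\eta_i = (Y_{ik} - \epsilon_{ik})/\lambda_k$, which is well-defined precisely because $\lambda_k \neq 0$. Next I would substitute this expression into the measurement equation $Y_{ij} = \lambda_j \eta_i + \epsilon_{ij}$, obtaining
\[
Y_{ij} \;=\; \lambda_j \cdot \frac{Y_{ik} - \epsilon_{ik}}{\lambda_k} + \epsilon_{ij} \;=\; \frac{\lambda_j}{\lambda_k} Y_{ik} \;+\; \Big(\epsilon_{ij} - \frac{\lambda_j}{\lambda_k} \epsilon_{ik}\Big),
\]
which is exactly the claim. No part of Assumption \ref{ass:frame}B or C is needed for the derivation itself — only the nondegeneracy of $\lambda_k$ matters. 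The other parts of the assumption are what make the resulting composite error term have the desirable orthogonality properties (e.g., uncorrelated with $Z_i$) that are invoked later in Proposition \ref{prop:iv}, but the \emph{linearity} conclusion of the lemma is purely algebraic.

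There is no real obstacle here; the only thing worth flagging is that the composite disturbance $\epsilon_{ij} - (\lambda_j/\lambda_k)\epsilon_{ik}$ is, in general, correlated with the regressor $Y_{ik}$ through the shared term $\epsilon_{ik}$. This is not something to prove in the lemma, but it is the conceptual payoff that motivates the instrumental variables argument used immediately after the lemma in Section \ref{sec:identify}: one cannot consistently recover $\lambda_j/\lambda_k$ by OLS of $Y_{ij}$ on $Y_{ik}$, which is precisely why an instrument (either $Z_i$ or another measure $Y_{i\ell}$) is required in Proposition \ref{prop:iv}. I would note this as a brief remark rather than formalize it, since the lemma itself asks only for the decomposition.
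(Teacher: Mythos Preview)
Your proposal is correct and follows exactly the same approach as the paper's own proof: invert the measurement equation at index $k$ using $\lambda_k \neq 0$ from Assumption \ref{ass:frame}A, then substitute into the equation at index $j$. Your additional remark that the composite error is correlated with $Y_{ik}$ (motivating the IV argument in Proposition \ref{prop:iv}) is accurate and useful, though the paper does not include it in the lemma's proof itself.
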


\begin{proof}
   Recall $Y_{ik}= \lambda_k\eta_i+\epsilon_{ik}$. Under valid measurement assumption $\lambda_k \neq 0$, $\eta_i = \frac{1}{\lambda_k} (Y_{ik}-\epsilon_{ik})$. Therefore, $Y_{ij} = \lambda_j \eta_{i}+\epsilon_{ij}=\lambda_j \frac{1}{\lambda_k} (Y_{ik}-\epsilon_{ik}) + \epsilon_{ij}= \frac{\lambda_j}{\lambda_k} Y_{ik} +(\epsilon_{ij}-\frac{\lambda_j}{\lambda_k} \epsilon_{ik})$.
\end{proof}

\subsection{Proof of Proposition \ref{prop:iv}}

\begin{proof}
    From lemma \ref{lem}, we get $Y_{ij}=\lambda_j Y_{i1} + (\epsilon_{ij}-\lambda_j \epsilon_{i1})$ by setting $k=1$.

We show (1) first. The formula suggests that $Z$ can be a valid IV. The population covariance between $Z_i$ and $Y_{ij}$ is 
$$
\begin{aligned}
Cov(Z_i,Y_{ij})&= \lambda_j Cov(Z_i,Y_{i1}) + Cov(Z_i, \epsilon_{ij}-\lambda_j\epsilon_{i1})
\end{aligned}
$$ 

The assumption that $\mathbb{E}[\eta_i^1-\eta_i^0]\neq 0$ implies that $Cov(Z_i,Y_{i1})\neq 0$. To see this, $Cov(Z_i,\eta_i^0+\tau_i Z_i + \epsilon_{i1})=\mathbb{E}\tau_i[\mathbb{E}Z_i^2-[\mathbb{E}Z_i]^2]=\mathbb{E}\tau_i Var(Z_i)$. 

For $Z_i$ to be a valid IV, it must satisfy the exclusion restriction (i.e., the assumption that the instrument has no causal pathway to the observed outcomes other than through the latent outcome itself). We can expand the covariance term and show that it is zero.
$$
\begin{aligned}
    &\;\;\;\;Cov(Z_i, \epsilon_{ij}-\lambda_2\epsilon_{i1})\\
    &=Cov(Z_i, Z_i\epsilon^1_{ij}+(1-Z_i)\epsilon^0_{ij}-\lambda_j(Z_i\epsilon^1_{i1}+(1-Z_i)\epsilon^0_{i1}))
\end{aligned}
$$

It is easy to see that each term is 0. For example, $Cov(Z_i,Z_i\epsilon^1_{ij})=\mathbb{E}[Z_i^2\epsilon^1_{ij}]-\mathbb{E}[Z_i]\mathbb{E}[\epsilon^1_{ij}]=0$ under Assumption \ref{ass:frame}, $Z_i$ is independent of potential measurement errors, and $\mathbb{E}[\epsilon^1_{ij}]=\mathbb{E}[\epsilon^0_{ij}]=0$.

Therefore, $Z$ satisfies relevance and exclusion assumptions of IV. The consistency of the IV estimator $\lambda_j=\frac{Cov(Z_i,Y_{ij})}{Cov(Z_i, Y_{i1})}$ follows immediately.

For (2), still starting from the equation $Y_{ij}=\lambda_j Y_{i1} + (\epsilon_{ij}-\lambda_j \epsilon_{i1})$, we observe that the covariance between $Y_{ij}$ and $Y_{ik}$ ($k \neq 1, k \neq j$) is 

$$
Cov(Y_{ik},Y_{ij})= \lambda_j Cov(Y_{ik},Y_{i1}) + Cov(Y_{ik}, \epsilon_{ij}-\lambda_j\epsilon_{i1})
$$

We check $Cov(Y_{ik},Y_{i1})$ first. Under assumption \ref{ass:frame} and $\{\eta^1_i,\eta^0_i\} \perp \{\epsilon^1_{ij},\epsilon^0_{ij}\}$,
\begin{align*}
    Cov(Y_{ik},Y_{i1}) &= Cov[\lambda_k (\eta^0_i + \tau_i Z_i),\eta^0_i + \tau_i Z_i]+ \lambda_k Cov(\epsilon_{ik}, \epsilon_{i1})\\
    & =\lambda_k Var[\eta^0_i + \tau_i Z_i]+0\\
    &=\lambda_k Var[\eta_i]
\end{align*}

Then, $Cov(Y_{ik},Y_{i1}) \neq 0$ if $Var[\eta_i]\neq 0$. With assumptions that $Cov(\epsilon_{ik}, \epsilon_{i1})=Cov(\epsilon_{ik}, \epsilon_{ij})=0$, $Y_{ij}$ also satisfies exclusion assumption of IV. Therefore, the IV estimator $\lambda_j=\frac{Cov(Y_{ik},Y_{ij})}{Cov(Y_{ik},Y_{i1})}$ is consistent.
\end{proof}




\subsection{Proof of Proposition \ref{prop:unbias}}

\begin{proof}

Consider (1) first.

    Note that, when $\lambda$ is known, $\mathbb{E}[\tilde{Y}^1_i]=\mathbb{E}[\eta_i^1]$ and $\mathbb{E}[\tilde{Y}^0_i]=\mathbb{E}[\eta_i^0]$.

$$
\begin{aligned}
    &\;\;\;\;\;\mathbb{E} \{\frac{1}{n_1}\sum_{i=1}^{n}Z_i\tilde{Y}_i - \frac{1}{n_0}\sum_{i=1}^{n}(1-Z_i)\tilde{Y}_i\}\\
    &=\mathbb{E} \{\frac{1}{n_1}\sum_{i=1}^{n}Z_i\tilde{Y}^1_i - \frac{1}{n_0}\sum_{i=1}^{n}(1-Z_i)\tilde{Y}^0_i\}\\
    &= \frac{1}{n_1} \sum_{i=1}^n \frac{n_1}{n}\eta^1_{i}-\frac{1}{n_0} \sum_{i=1}^n \frac{n_0}{n}\eta^0_{i}\\
    &=\frac{1}{n} \sum_{i=1}^n\mathbb{E}[\eta^1_{i}]-\frac{1}{n} \sum_{i=1}^n \mathbb{E}[\eta^0_{i}]\\
    &=\tau
\end{aligned}
$$

For (2), we need consistency of $\hat{\tau}$ when $\lambda_j$ is known first. An unbiased estimator is consistent if variance is o(1), which is clear in SI \ref{si:var}. Then, as long as $\hat{\lambda}_j$ is consistent for $\lambda_j$, the estimated $\hat{\tau}$ is consistent as well.

\end{proof}



\section{Stacking Estimator}\label{si:stack}


In practice, regression is widely used to estimate the ALTE, especially when covariates are used to improve precision. It is straightforward to show that the above difference-in-means estimator is equivalent to the ``stacked'' regression described below. 

Using rescaled outcomes, stacked regression estimates the ALTE using all outcome measures in a single pass. Let $[\tilde{Y}_i]=(
    \tilde{Y}_{i1},
    \tilde{Y}_{i2},...,
    \tilde{Y}_{iJ})^T$ denote the stacked re-scaled outcomes for individual $i$; and let $[\tilde{Y}]=([\tilde{Y}_1],[\tilde{Y}_2],....,[\tilde{Y}_n])^T$ be the vector of outcome variables $nJ \times 1$. We also stack treatment assignment so that $[Z_i]=(Z_i,Z_i,...,Z_i)^T$ is the $J \times 1$ vector and $[Z]=([Z_1],[Z_2],...,[Z_n])^T$. Then we define $X=\begin{bmatrix}
    1,1,...,1 \\
    [Z_1],[Z_2],...,[Z_n] \\
\end{bmatrix}^T$ as the $nJ \times 2$ matrix.

Consider the stacked regression where $[\tilde{Y}]$ is regressed on $X$. Then the stacking estimator $\hat{\beta}$ is the OLS coefficient for the regressor $[Z]$. 
\begin{proposition}\label{prop:stack}
    In the stacking regression where $[\tilde{Y}]$ is regressed on $X$. The OLS estimator of the coefficient of X is equivalent to $\frac{1}{n_1} Z_i \sum_{i=1}^n (\frac{1}{k}\sum_{j=1}^J \tilde{Y}_{ij}) - \frac{1}{n_0} (1-Z_i) \sum_{i=1}^n (\frac{1}{k}\sum_{j=1}^J \tilde{Y}_{ij})$.
\end{proposition}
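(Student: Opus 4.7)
The plan is to invoke the elementary fact that an OLS regression of a response on an intercept and a binary regressor returns a difference-in-means estimator as its slope coefficient, and then to translate that difference-in-means back into the original (unstacked) notation. Because the claim is a purely algebraic identity about the stacked design matrix, neither Assumption \ref{ass:frame} nor Proposition \ref{prop:iv} enters the argument --- the equality holds for arbitrary values of $\tilde Y_{ij}$.

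First I would observe that in the stacked system with $X = [\,\mathbf 1,\,[Z]\,]$, the regressor column $[Z]$ takes only the values $0$ and $1$: each $Z_i \in \{0,1\}$ is replicated $J$ times down the stack. By the closed-form expression for the OLS slope in a simple regression with an intercept and a binary regressor,
$$\hat\beta \;=\; \frac{\widehat{\mathrm{Cov}}([Z],[\tilde Y])}{\widehat{\mathrm{Var}}([Z])} \;=\; \bar{[\tilde Y]}_{\,[Z]=1} \;-\; \bar{[\tilde Y]}_{\,[Z]=0},$$
where the second equality is the standard dummy-variable identity.

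Next I would compute each conditional mean explicitly. The subset of stacked rows with $[Z] = 1$ has cardinality $n_1 J$ and consists precisely of the entries $\tilde Y_{ij}$ for which $Z_i = 1$, across all $j = 1,\dots,J$. Therefore
$$\bar{[\tilde Y]}_{\,[Z]=1} \;=\; \frac{1}{n_1 J}\sum_{i:Z_i=1}\sum_{j=1}^{J} \tilde Y_{ij} \;=\; \frac{1}{n_1}\sum_{i=1}^{n} Z_i\!\left(\frac{1}{J}\sum_{j=1}^{J} \tilde Y_{ij}\right),$$
and an analogous calculation gives $\bar{[\tilde Y]}_{\,[Z]=0} = \frac{1}{n_0}\sum_{i=1}^n (1-Z_i)\bigl(\frac{1}{J}\sum_{j=1}^J \tilde Y_{ij}\bigr)$. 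Subtracting these two expressions delivers the formula in the proposition statement.

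The main obstacle, such as it is, will simply be careful bookkeeping of the normalization constants --- in particular, seeing the $J$ in the denominator of the inner average emerge from the fact that the $[Z]=1$ block contains $n_1 J$ rows rather than $n_1$, and similarly for the control block. Beyond that, the result is a mechanical identity about OLS on a stacked outcome vector and requires no distributional or design assumptions about the data-generating process.
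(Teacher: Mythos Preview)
Your proposal is correct. The paper takes a slightly different, more mechanical route: it writes out the $2\times 2$ matrix $X'X$ explicitly, inverts it by hand, multiplies through by $X'[\tilde Y]$, and simplifies the resulting expression for $\hat\beta$ directly. You instead invoke the standard dummy-variable identity---that OLS with an intercept and a binary regressor returns the difference of group means---and then unpack those group means in the stacked notation. Your route is cleaner and makes the structure of the result (difference-in-means on the uniformly-weighted per-unit averages) more transparent; the paper's route is more self-contained in that it does not presuppose the reader already knows the binary-regressor identity. Both arrive at the same place with the same bookkeeping observation that the treated block of the stack has $n_1 J$ rows, which is what produces the inner $1/J$ average.
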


\begin{proof}

Define the OLS estimator
$$
\begin{bmatrix}
    \hat{\alpha}\\
    \hat{\beta}
\end{bmatrix}=(X'X)^{-1}X'[\tilde{Y}]
$$

Note  that 
\begin{align}
    (X'X) &= \begin{bmatrix}
        Jn & J n_1\\
        Jn_1 & J n_1
    \end{bmatrix}
\end{align}

Let $T$ denote the index set of treated individuals, and $C$ denote the index set of control individuals. Then, algebra shows that

\begin{align}
    (X'X)^{-1}X'[\tilde{Y}] &=  \frac{1}{J^2n_1(n-n_1)}\begin{bmatrix}
        Jn_1 & -J n_1\\
        -J n_1 & J n
    \end{bmatrix} \begin{bmatrix}
        \sum_{i=1}^n\sum_{j=1}^J \tilde{Y}_{ij} \\
        \sum_{i \in T}\sum_{j=1}^J \tilde{Y}_{ij}
    \end{bmatrix}
\end{align}

Therefore, 
$$
\begin{aligned}
   \hat{\beta}
    &= \frac{1}{kn_1}\sum_{i \in T} \sum_{j=1}^J \tilde{Y}_{ij} - \frac{1}{kn_0}\sum_{i \in C} \sum_{j=1}^J \tilde{Y}_{ij}\\
    &= \frac{1}{n_1} \sum_{i \in T}(\frac{1}{k}\sum_{j=1}^J \tilde{Y}_{ij}) - \frac{1}{n_0} \sum_{i \in C} (\frac{1}{k}\sum_{j=1}^J \tilde{Y}_{ij}) \\
    &=\frac{1}{n_1} Z_i \sum_{i=1}^n (\frac{1}{k}\sum_{j=1}^J \tilde{Y}_{ij}) - \frac{1}{n_0} (1-Z_i) \sum_{i=1}^n (\frac{1}{k}\sum_{j=1}^J \tilde{Y}_{ij})
\end{aligned}
$$
\end{proof}

Recall the weighted difference-in-means estimator is $\hat{\tau} =\frac{1}{n_1}\sum_{i=1}^{n}Z_i [\sum_{j=1}^J \omega_j\tilde{Y}_{ij}]- \frac{1}{n_0}\sum_{i=1}^{n}(1-Z_i)[\sum_{j=1}^J \omega_j \tilde{Y}_{ij}]$. We can see the stacking regression estimator is equivalent to the weighted difference-in-means estimator when $\omega_j = \frac{1}{k}$. To achieve an optimal weighting scheme, we can simply multiply the optimal weight $\omega_j^*$ by $\tilde{Y}_{ij}$ before stacking, i.e. $[\tilde{Y}_i]=(
    \omega_1^*\tilde{Y}_{i1},...,
    \omega_J^*\tilde{Y}_{iJ})^T$. If the researcher seeks to adjust for covariates, they can simply be added to the regression specification.  Covariate-adjusted stacked regression provides a consistent estimator of the average treatment effect on the latent outcome, and its variance may be smaller than the unadjusted stacked regression when the covariates are prognostic of the observed outcomes.

\section{Variance of Difference-in-means based on a Weighted Scaled Index}\label{si:var}

Our estimator has a similar flavor to the IPW estimator. When $\lambda_j$ scaling factors are known based on the measurement properties observed in prior studies, and weight $\omega$ is pre-specified, then it is straightforward to work out the variance of this estimator. Instead, when it is estimated, the standard approach is to use GMM.

\subsection{Scenario 1: when $\lambda$ and $\omega$ are known}
In this scenario, we can treat each $\hat{\lambda}_j=\lambda_j$ as a constant and ignore its uncertainty, in which case the variance is similar to the Neyman variance. 

We define the final weighted outcome as follows: $\tilde{Y}^1_i := \sum_{j=1}^k \frac{\omega_j \lambda_j}{\hat{\lambda}_j}[\eta_i^1 + \epsilon'^1_{ij}]$ and $\tilde{Y}^0_i := \sum_{j=1}^k \frac{\omega_j \lambda_j}{\hat{\lambda}_j} [\eta_i^0 + \epsilon'^0_{ij}]$, where $\epsilon'^z_{ij}$ is the scale potential measurement error $\frac{1}{\lambda}_j\epsilon^z_{ij}$. With the new notation $\tilde{Y}^1_i$ and $\tilde{Y}^0_i$, we define the observed $\tilde{Y}_i=Z_i\tilde{Y}^1_i+(1-Z_i)\tilde{Y}^0_i$. Then standard argument of population variance applies, for example see \citet[][Chapter 9]{ding2024first}.

Therefore, the variance of the weighted average estimator is $\frac{\sigma_1^2}{n_1}+\frac{\sigma_0^2}{n_0}$, where $\sigma_1^2 = Var_{sp}[\tilde{Y}^1_i]=\mathbb{E}_{sp}[(\tilde{Y}^1_i-\mathbb{E}_{sp}[\tilde{Y}^1_i])^2]$ and $\sigma_0^2 = Var_{sp}[\tilde{Y}^0_i]$. The variance estimator can also be estimated using sample analogs: $\frac{\hat{S}_1^2}{n_1}+\frac{\hat{S}_0^2}{n_1}$, where $\hat{S}_1^2=\frac{1}{n-1}\sum_{i=1}^n Z_i(\tilde{Y}_i- \frac{1}{n_1} \sum_{i=1}^n Z_i\tilde{Y}_i)^2$ and $\hat{S}_0^2=\frac{1}{n-1}\sum_{i=1}^n (1-Z_i)(\tilde{Y}_i- \frac{1}{n_1} \sum_{i=1}^n (1-Z_i)\tilde{Y}_i)^2$.

\subsection{Scenario 2: when $\lambda$ and/or $\omega$ are estimated}\label{si:gmm}

Similar to IPW estimator, we can use GMM to incorporate the uncertainty of $\lambda$ estimation. Consider the example with three measures and pooled measurement errors. We seek to estimate $\lambda_2$, $\lambda_3$, and $\tau$. Because $\lambda_2$, $\lambda_3$ are over-identified through multiple IVs, as we shown in Proposition \ref{prop:iv}, the moment conditions are:

\begin{align*}
    \mathbb{E} \begin{bmatrix}
         Y_{i2}-\alpha_0- \lambda_2 Y_{i1} \\
        (Y_{i2}-\alpha_0- \lambda_2 Y_{i1})Z_i\\
        (Y_{i2}-\alpha_0- \lambda_2 Y_{i1})Y_{i3} \\
        Y_{i3}-\alpha_1- \lambda_3 Y_{i1} \\
        (Y_{i3}-\alpha_1- \lambda_3 Y_{i1})Z_i\\
        (Y_{i3}-\alpha_1- \lambda_3 Y_{i1})Y_{i2} \\
(\frac{Z_i}{p} - \frac{1-Z_i}{1-p})\overline{Y}_i-\tau
    \end{bmatrix} =0
\end{align*}

The first three equations are over-identified IV equations that estimate $\lambda_2$. $\overline{Y}_i=\sum_{j=1}^J \frac{1}{\lambda_j} Y_{ij}$ is the weighted average. The last equation is the weighted-average estimator for the average treatment effect, where $p$ is the propensity score. Then, the variance of $\tau$ is estimated through traditional sandwich estimators that incorporate the uncertainty in the estimate of $\lambda_j$.
The last two rows can be replaced by $ [( Y_{i1} + \frac{Y_{i2}}{\lambda_2}+\frac{Y_{i3}}{\lambda_3})/3 - \mu_1]Z_i$ and $[( Y_{i1} + \frac{Y_{i2}}{\lambda_2}+\frac{Y_{i3}}{\lambda_3})/3 - \mu_0](1-Z_i)$, so that the difference-in-means estimator is expressed in two parts rather than one: $\hat{\tau}=\hat{\mu}_1-\hat{\mu}_0$ and the variance is $Var(\mu_1)+Var(\mu_0)-2Cov(\mu_1,\mu_0)$.

For an optimally weighted scaled index, because we need to deal with covariance, we centered the outcome measures first. For ease of exposition, we assume the variance of potential measurement is the same in the treatment and control groups. Then, we can use pooled estimation of the variance of the measurement error. The following $Y_{ij}$ is assumed to be demeaned first.

\begin{align*}
    \mathbb{E} \begin{bmatrix}
        (Y_{i2}- \lambda_2 Y_{i1})Z_i\\
        (Y_{i2}- \lambda_2 Y_{i1})Y_{i3} \\
        (Y_{i3}- \lambda_3 Y_{i1})Z_i\\
        (Y_{i3}- \lambda_3 Y_{i1})Y_{i2} \\
        Y_{1i}Y_{2i} - \lambda_2 \psi \\
        Y^2_{1i}-\psi-\sigma^2_1 \\
        Y^2_{2i}-\lambda^2_2\psi-\sigma^2_2 \\
        Y^2_{3i}-\lambda^2_3\psi-\sigma^2_3 \\
        (\frac{Z_i}{p} - \frac{1-Z_i}{1-p})Y^{opt}_i-\tau
    \end{bmatrix} =0
\end{align*}

In the above moment conditions, $\psi$ is the variance of $\eta$, $\sigma^2_j$ is the variance of $\epsilon_{\cdot j}$, $Y^{opt}_i=\sum_{j=1}^J \omega_j Y_{ij}$, and $\omega_j = \frac{\lambda^2_j/\sigma^2_j}{\sum_{j=1}^k \lambda^2_j/\sigma^2_j}$. For the more general case where the variances differ between the two groups, we simply add additional moment equations for each group separately. 

\newpage






\section{Comparison to current practice}\label{si:comp1}

\subsection{Comparison to PCA, ICW, and SEM}
In social science, it is quite common for a study's key outcome variable to be an abstract concept that is not directly observed. To better measure this latent variable, researchers often seek to measure multiple observable manifestations of this latent outcome. 
However, researchers are aware that critics may express concern about false discovery due to multiple comparisons. To sidestep this concern, researchers often attempt to reduce the dimensionality of their outcomes. 

Common dimensionality reduction techniques include summary index creation, principal components analysis, or other techniques that in some way standardize the latent and/or observed outcomes. A key feature of those methods is that they extract or construct a low-dimensional, typically single-dimensional, variable.  Due to standardization (e.g., recoding each outcome measure to have a variance of 1), these methods generate results that are sample-specific in the sense that the scaling used to define the outcome variable depends on the dispersion of scores among the subjects at hand. Even if the true causal relationship between treatment and outcome were identical in two different (large) samples, estimated effects might differ substantially if the dispersion of unobservables differs.  
As noted in the main text, our method solves the above problem by setting the metric of the latent variable to be the same as one of the observed variables. This ``unstandardized approach'' allows us to interpret the latent variable with the same metric as that observed outcome variable (see \citet[pp. 239-240]{bollen1989structural} and \citet[pp. 28-29]{loehlin1998latent}). The results are comparable across samples even when the variances of the outcome variables differ.

Another approach, as noted in the main text, is to simply consider observed outcomes as distinct from one another, even if they are in the same substantive domain.  There is no latent variable in this case; there are instead distinct multiple outcomes.  In his widely-cited article
\citet{anderson2008multiple} seeks to find a weight vector $w \in R^{J}$ so that the variance of the weighted outcome $wY \in R^{n \times 1}$ is minimized, subject to the constraint that $1'w=1$. This is an optimization problem, where the objective function is $\min_{w} w'\Sigma w$. By using the Lagrange multiplier method, one obtains the weight $w=(1'\Sigma^{-1}1)^{-1}(\Sigma^{-1}1)$. 

How well does Anderson's inverse-covariance weighting (ICW) method perform in terms of statistical power?  (We focus on power because we know that Anderson's standardized index will not yield unbiased or consistent estimates of the ALTE in unstandardized units). 


The result of Figure \ref{si:powermain} is based on the following simulation:
 we set $\eta_{i0}=0$, $\eta_{i1}=N(\theta,1)$, where $\theta \in \{0,0.05,0.15,0.25,0.35,0.45\}$. For three outcome measures, we let $\lambda_{j}=1$ and measurement errors $e_1=N(0,0.5)$, $e_2=N(0,0.1)$, and $e_3=N(0,2)$. 



The next simulation compares ICW, PCA and SUR (based on the F-test). We generate data from the DGP in Figure \ref{si:dgpsur1}. We examine two situations, low correlation (0.2) and high correlation (0.8) among the disturbances.  ICW and PCA are somewhat more powerful than SUR in both scenarios.

\begin{figure}[!h]
    \centering
    \includegraphics[width=0.9\linewidth]{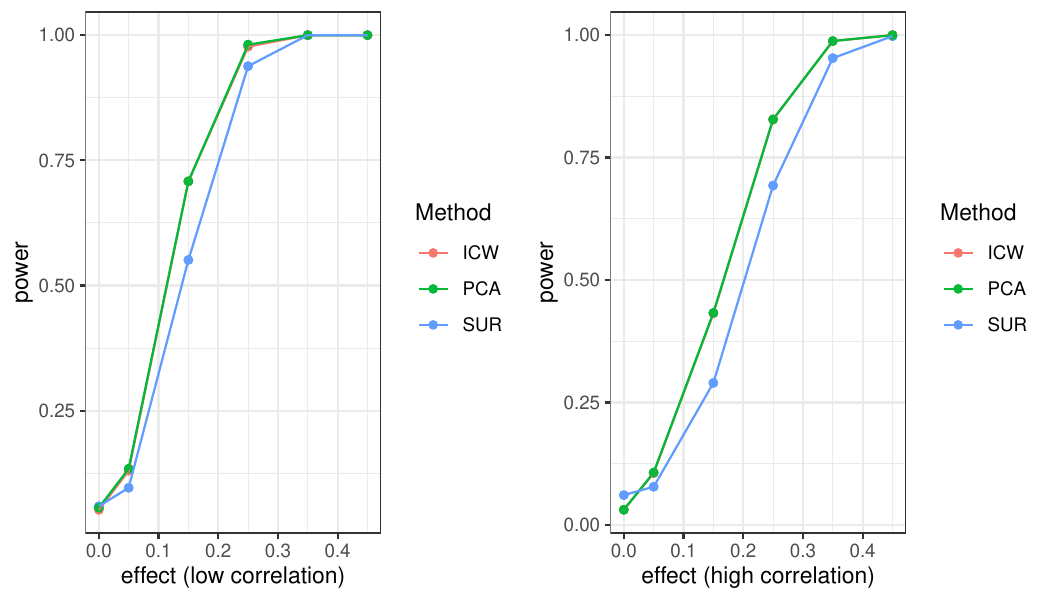}
    \caption{Power analysis: SUR, ICW, and PCA}
    \label{si:dgpsur1}
\end{figure}

\subsection{Comparison to Hierarchical Item Response Theory} \label{si:comp2}


\citet{stoetzer2022causal} propose a hierarchical item response theory (hIRT) model to estimate the latent treatment effect. Their model has two components. First, the latent variable $\eta_i$ is assumed to be a linear function of the treatment: $\eta_i = \gamma_0 + \gamma_1 Z_i + \epsilon_i$, where $\epsilon_i$ is normally distributed with a constant variance $\sigma^2$. In this equation, $\gamma_1$ is the average treatment effect of interest. Next, the observed outcomes $Y_{ij}$ are generated by a characteristic function of the item: $\mathbb{P}[Y_{ij}=h | \eta_i]=P_{ijh}(\eta_i;\alpha_{ijh},\beta_{ij})$, where $\alpha_{ijh}$ and $\beta_{ij}$ are the item difficulty and item discrimination parameters for item $j$ and answering $h$. 

As \citet{stoetzer2022causal} note (e.g. page 28), this model makes a set of parametric assumptions, including a constant treatment effect and a particular item characteristic function. Our approach relaxes some of these assumptions. We allow $Z_i$ to have heterogeneous treatment effects. We also remain agnostic about the non-linear transformation. 

To estimate $\gamma_1$ under the hIRT, several other identification restrictions must be imposed. For example, $\lambda_0=0$ and $\sigma^2=1$. Then, $\gamma_1$ can be estimated by the EM algorithm. These technical assumptions serve the same normalization purpose as setting $\lambda_1=1$. However, setting $\lambda_1=1$ allows us to give a natural interpretation to the latent variable.

An immediate shortcoming of the parametric IRT approach is that the estimator is likely to be inconsistent when the parametric functional form is misspecified or there are heterogeneous effects. To illustrate the problem, we conduct two simulations. For the constant effect simulation, we generate potential outcomes $\eta_{i0}=N(0,1)$ and $\eta_{i1}= 2+ \eta_{i0}$ so that treatment effect is exactly 2 for every individual. For heterogeneous effects, we simply add some mean zero variation to generate treated potential outcomes: $\eta'_{i1}= 2 + \eta_{i0} + N(0,1)$. Note that the average treatment effect is still maintained at 2. Next, we generate five measures (each with 13 levels) using an IRT model. Figure \ref{fig:irt} illustrates the mean estimate and its 95\% confidence interval. When the effect of treatment is constant for every individual, hIRT recovers the true effect of 2. However, after introducing even mild heterogeneity in treatment effects, hIRT exhibits substantial bias.
\begin{figure}[!h]
    \centering
    \includegraphics[width=0.7\linewidth]{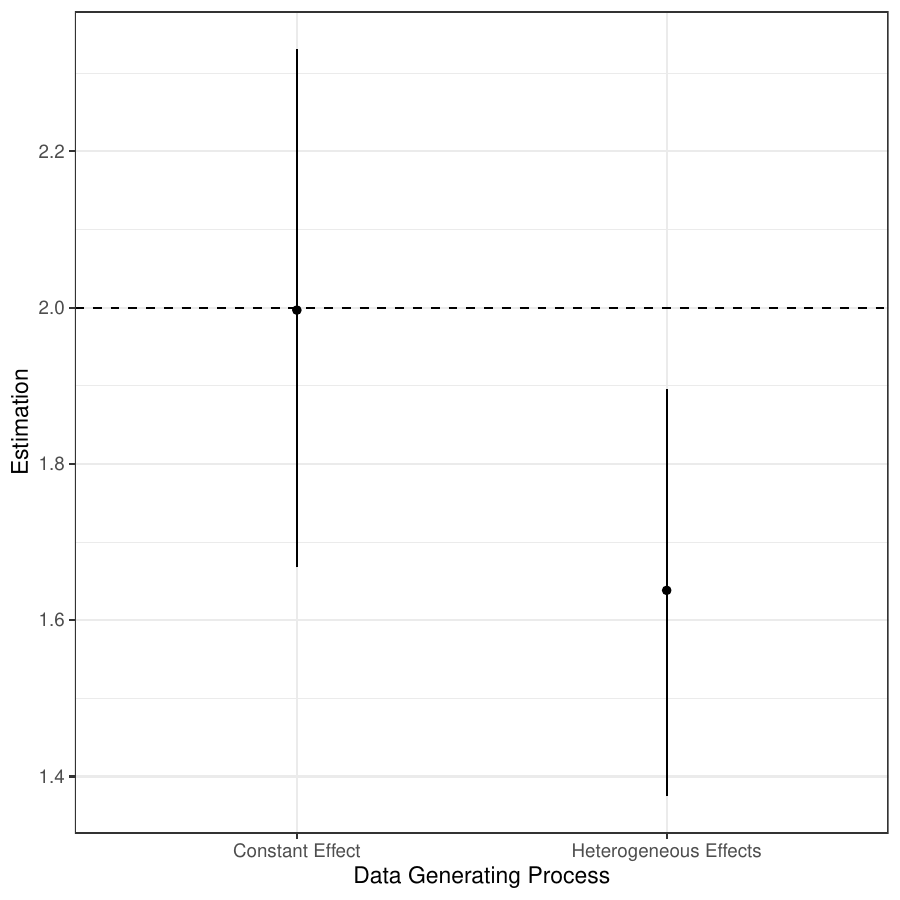}
    \caption{The Distribution of hIRT Estimates under Constant Treatment Effects and Heterogeneous Treatment Effects.}
    \label{fig:irt}
\end{figure}

\subsection{Summary}

Table \ref{tab:comparison} provides an overview of several commonly used methods for scoring and analyzing outcomes.  When key assumptions are met -- most importantly, that the measured outcomes are linear manifestations of a latent outcome -- optimal scaling and structural equation modeling produce efficient estimates. SEM has the advantage of facilitating over-identification tests, while optimal scaling has the advantage of facilitating visualizations of the estimated ALTE using individual-level data. The downside of the linearity assumption is the burden that it places on data collection; researchers must gather granular outcome measures, perhaps by combining a series of binary outcome measures to create an additive index. Hierarchical IRT has the advantage of leveraging sparse outcome measurement but imposes strong assumptions, such as constant treatment effects across units.  The use of principal components analysis to form a summary index may be useful for hypothesis testing but does not render point estimates in a readily interpretable or transportable metric. The same is true for an inverse covariance weighted index, and when measured outcomes are manifestations of the same latent outcome, ICW tends to be less efficient.  The most agnostic method is seemingly unrelated regression, which does not presuppose that the outcome measures tap the same latent variable. This model provides a useful null model against with to judge the adequacy of the SEM model. 

\afterpage{%
    \clearpage
    \thispagestyle{empty}
    \singlespacing
    \begin{landscape}
        \centering 
    \captionof{table}{Comparison of Methods for Analyzing Multiple Outcomes}\label{tab:comparison}
     \begin{threeparttable}
       \begin{tabular}{llp{70pt}p{50pt}p{100pt}p{100pt}} 
\\[-1.8ex]\hline 
\hline 
{\textbf{Method}} & {\textbf{One-step?}$^a$} & {\textbf{Estimand}$^b$}   & {\textbf{Allows HTE?}$^c$}   & {\textbf{Advantages}}    & {\textbf{Disadvantages}}     \\
\hline 
Optimal Weighted Average Index & No & ALTE (unstandardized) & Yes  & Efficient when assumptions hold; facilitates visualization &  Requires multiple outcome measures in order to satisfy linearity assumption \\
\hline 
Structural Equation Model (SEM) &Yes & ALTE (unstandardized) &Yes  &Efficient when assumptions hold; enables nested model comparisons &  Requires multiple outcome measures in order to satisfy linearity and assumes multivariate normality \\
\hline 
Hierarchical Item Response Theory  & Yes & ALTE (scaled) & No &Requires few outcome measures &  Requires constant treatment effects and homoskedasticity\\
\hline 
Principal Components Analysis  & No  &  ATE of first principal component  & Yes  & Allows hypothesis testing without a latent variable model              &  Estimated ATE has a sample-specific scale  \\
\hline 
Inverse Covariance Weights &No & Reweighted ATE for each outcome & Yes &  Allows hypothesis testing without a latent variable model& Estimated ATE has a sample-specific scale  \\
\hline 
Seemingly Unrelated Regression & Yes & ATE for each outcome & Yes  &  Allows for hypothesis testing without latent variable model   & 
Less efficient than SEM when outcomes are linear manifestations of a latent variable\\
\hline 
\hline 
\end{tabular}\begin{tablenotes}
\footnotesize
\item \textsuperscript{a} Two-step estimators produce biased variance estimates.  Generalized method of moments may be used to generate one-step estimates using an optimally weighted index.
\textsuperscript{b}  For details on scaling and weighting, see SI \ref{se:alg}.  ALTE = Average latent treatment effect. 
\textsuperscript{c}  HTE = heteogeneous treatment effects.
\end{tablenotes}
\end{threeparttable}
    \end{landscape}
    \clearpage
}

\subsection{Pseudo-algorithms to Generate Outcome Indices}\label{se:alg}

\subsubsection{Difference-in-means based on a Weighted Scaled Index}

\begin{enumerate}
    \item \textbf{Data:} $Z_i$, $Y_{ij}$ ($i=1,2,..n; j = 1,2,...,K$), $n_1=\sum_{i=1}^n Z_i$, $n_0=n-n_1$
    \item \textbf{Estimate $\lambda_j$}: set $\lambda_1 = 1$, for $\lambda_j \neq \lambda_1$, use either of the following estimators:
        \begin{enumerate}
            \item [(1)] SEM: $\eta =\sim 1*Y_1 + Y_2 + ... + Y_n$; or
            \item [(2)] GMM: using $Z_i$ or/and $Y_{ik} \forall k \neq 1$ and $k \neq j$ as IVs
        \end{enumerate}
    \item \textbf{Scale outcomes:} for each $i,j$, $\tilde{Y}_{ij} \leftarrow \frac{Y_{ij}}{\hat{\lambda}_j}$
    \item \textbf{Weight outcomes:} $\tilde{Y}_i \leftarrow \sum_{j=1}^K \omega_j \tilde{Y}_{ij}$, where $\sum_{i=1}^K \omega_j = 1$. Common options:
        \begin{enumerate}
            \item [(1)] Uniform weights: $\omega_j \leftarrow \frac{1}{K}$

            \item [(2)] Optimal weights: $\omega_j \leftarrow \frac{\hat{\lambda}^2_j/\hat{\sigma}^2(\epsilon_j)}{\sum_{j=1}^k \hat{\lambda}^2_j/\hat{\sigma}^2(\epsilon_j)}$
        \end{enumerate}
    \item \textbf{Output:} $\hat{\tau} \leftarrow  \frac{1}{n_1}\sum_{i=1}^{n}Z_i\tilde{Y}_i - \frac{1}{n_0}\sum_{i=1}^{n}(1-Z_i)\tilde{Y}_i$ 
\end{enumerate}

\subsubsection{Anderson's ICW}

\begin{enumerate}
    \item \textbf{Data:} $Z_i$, $Y_{ij}$ ($i=1,2,..n; j = 1,2,...,K$), $n_1=\sum_{i=1}^n Z_i$, $n_0=n-n_1$
    \item \textbf{Standardize:} For each $j$, $\tilde{Y}_{ij} = \frac{(Y_{ij}-\overline{Y}_{ij})}{sd(Y^c_{ij})}$, where $sd(Y^c_{ij})$ is  control group standard deviation.
    \item \textbf{Weight outcomes:} $S_{i} \leftarrow \ \omega_j \tilde{Y}_{ij}$, where $\omega_j=(1'\Sigma^{-1}1)^{-1}(1'\Sigma)$ is the weight. Inverted covariance matrix $\Sigma^{-1}$ can be calculated using function cov.wt in R.
    \item \textbf{Output:} $\hat{\tau} \leftarrow  \frac{1}{n_1}\sum_{i=1}^{n}Z_i S_{i} - \frac{1}{n_0}\sum_{i=1}^{n}(1-Z_i)S_{i}$ 
\end{enumerate}

\subsubsection{hIRT}

\begin{enumerate}
    \item \textbf{Data:} $Z_i$, $Y_{ij}$ ($i=1,2,..n; j = 1,2,...,K$)

    \item \textbf{Model:} $\mathbb{P}[Y_{ij}=h|\eta_i]=\frac{exp(\alpha_{jh}+\beta \eta_i)}{1+exp(\alpha_{jh+\beta \eta_i})}-\frac{exp(\alpha_{jh+1}+\beta \eta_i)}{1+exp(\alpha_{jh+1}+\beta \eta_i)}$ and $\eta_i=\gamma_0 + \gamma_1 Z_i + \epsilon_i$
    
    \item \textbf{Estimation:} The EM algorithm gives final estimates for each parameter; in R, one can use function hgrm of the hIRT package.
    
    \item \textbf{Output:} $\hat{\tau} \leftarrow \hat{\gamma}_1$
\end{enumerate}

    
    


\subsubsection{PCA}

\begin{enumerate}

\item \textbf{Data:} $Z_i$, $Y=[Y_{ij}]$ ($i=1,2,..n; j = 1,2,...,K$), $n_1=\sum_{i=1}^n Z_i$, $n_0=n-n_1$

\item \textbf{Demean:} $\overline{Y}$ be the item-wise zero empirical mean outcome matrix. 

\item \textbf{Weight outcomes (first component)}. $\tilde{Y} \leftarrow \omega \overline{Y}$. The weight $\omega$ maximize the variance of $\tilde{Y}$ :  $\omega=argmax_{||\omega||=1} || \overline{Y}\omega||^2=argmax \frac{\omega^TY^TY\omega}{\omega^T\omega}$.

\item \textbf{Output:} $\hat{\tau} \leftarrow  \frac{1}{n_1}\sum_{i=1}^{n}Z_i \tilde{Y}_i - \frac{1}{n_0}\sum_{i=1}^{n}(1-Z_i)\tilde{Y}_i$ 
\end{enumerate}

\section{Illustration of How Additional Items Reduce the Sampling Variability of the Estimated ALTE}\label{si:variance}

\subsection{Optimal weighting never increases the variance}
When researchers use equal weights to construct an additive index of outcome measures, the variance of $\tilde{Y}^1_i$ is $\frac{1}{n_1}\sigma^2(\eta^1_i) + \frac{1}{n_0}\frac{1}{J^2}\sum_{j=1}^J \frac{\sigma^2(\epsilon_{\cdot j})}{\lambda^2_j}$. The gain from adding one more measure is $$
\frac{1}{(J+1)^2}\frac{\sigma^2(\epsilon_{\cdot J+1})}{\lambda^2_{J+1}} - (\frac{1}{J^2}-\frac{1}{(J+1)^2})\sum_{j=1}^J \frac{\sigma^2(\epsilon_{\cdot j})}{\lambda^2_j}
$$ Suppose the new outcome measure is quite noisy: the variance of the measurement error $\sigma^2(\epsilon_{\cdot J+1})$ is large. Therefore, the first term can dominate the second term in the formula, which makes the variance increase rather than decrease. For this reason, those who advocate the creation of simple additive indexes caution that there is no guarantee that including an additional measure will improve reliability on the margin \citep[p.218]{ansolabehere2008strength}.

\subsection{The Gains from Adding Measures to a Weighted Scaled Index} \label{si:measure}

A simple scenario illustrates the gains from including an additional measure and applying optimal weights. 
For ease of exposition, we assume a balanced design ($n_0 = n_1$) 
and consider the case in which all outcome measures have the same error variance $\sigma^2(\epsilon)$ and factor loading $\lambda$. Let $\Sigma:=\frac{\lambda^2}{\sigma^2(\epsilon)}$. In this case, where all measures are equally reliable, the inclusion of an additional measure to the current $J$ measures can be expected to change the variance of the estimated ALTE by

$$
\Delta(J):=\frac{4}{n}\frac{-1}{J(J+1)\Sigma}
$$


Due to the rapidly increasing quadratic term $J(J+1)$ in the denominator, the payoff diminishes with each additional measurement. Moreover, the precision gain from the $(J+1)^{th}$ measure depends on $\Sigma$, inverse of the variance of measurement error, or the precision of the measure. As the measurement error variance increases, the precision improvement from an additional measure increases. 

We can get a better feel for the advantages of including an additional outcome measure by examining the percentage change. We compare the reduction in variance from adding the $(J+2)^{th}$ measure with the reduction in variance from adding the $(J+1)^{th}$ measure. 


$$
\frac{\Delta(J+1)}{\Delta(J)}=\frac{J}{(J+2)}
$$

For example, suppose that we have only one outcome measure, $J=1$. The denominator is the variance reduction by adding the second measure. The numerator is the variance reduction of the third measure. The gain is $\frac{1}{3} \approx 33.33\%$. 
If we add a fourth measure, the variance reduction will be $\frac{1}{6}=16.67\%$ compared to the gain from the second measure. From this example, we surmise that when outcome measures are equally reliable, three or four measures are sufficient in practice, although more measures may be helpful for other reasons, such as linearizing the relationship between $\eta$ and sub-indices of outcome measures.

By way of illustration, we generate $n=500$ simulated potential outcomes $\eta^1 \sim N(1,1)$ and $\eta^0 \sim N(2,1)$, and 6 outcome measures $Y_{ij}=Z\eta^1_i + (1-Z_i)\eta^0_i + \epsilon_i$. We consider two cases, high reliability and low reliability, brought about by changing the variance of the measurement error. The average of 1000 simulation results is shown in Figure \ref{fig:red} and Table \ref{tab:redt}. Generally, more measures decrease the estimation variance. The benefits of additional measures depend on the measures' reliability. When measures are already quite reliable, the variance reduction is small. 

\begin{figure}[!h]
    \centering
    \includegraphics[width=0.8\linewidth]{figure/reduction.pdf}
    \caption{\textbf{Simulation: Variance reduction and Reliability.} The horizontal line represents the number of measures and the vertical line is the estimated variance of the optimal weighting estimator. The variance reduction is larger if the reliability is lower.}
    \label{fig:red}
\end{figure}

The first column in Table \ref{tab:redt} indicates the number of measures and the third column is the estimated variance for the ALTE. The fourth column calculates the $\Delta(J)$. Recall $\Delta(J)$ is the variance reduction by adding the $J(+1)^{th}$ measure. The last two columns compare the simulation ratio and the theoretical value $\frac{J}{J+2}$ we derived before.

\begin{table}[ht]
\centering
\caption{Simulation: Variance reduction and Reliability}\label{tab:redt}
\begin{tabular}{cccccc}
  \hline
  \hline
  Number & Reliability & Var & $\Delta(J)$ & $\Delta(J+1)/\Delta(J)$ & Theory \\ 
  \hline
   1 & High  & 0.0399 & -0.0159 & 0.3351 & 0.3333 \\ 
     2 & High  & 0.0240 & -0.0053 & 0.5011 & 0.5000 \\ 
    3 & High  & 0.0187 & -0.0027 & 0.5985 & 0.6000 \\ 
    4 & High  & 0.0160 & -0.0016 & 0.6645 & 0.6667 \\ 
     5 & High  & 0.0144 & -0.0011 &  &  \\ 
    6 & High  & 0.0133 &  &  &  \\ 
    1 & Low  & 0.2087 & -0.1004 & 0.3350 & 0.3333 \\ 
    2 & Low  & 0.1083 & -0.0336 & 0.4973 & 0.5000 \\ 
     3 & Low  & 0.0747 & -0.0167 & 0.5957 & 0.6000 \\ 
     4 & Low  & 0.0579 & -0.0100 & 0.6657 & 0.6667 \\ 
    5 & Low  & 0.0480 & -0.0066 &  &  \\ 
    6 & Low  & 0.0413 &  &  &  \\ 
   \hline
   \hline
\end{tabular}
\end{table}

\section{More Information on the Empirical Application}\label{si:app}

\subsection{Data}

Here we list the outcome measurements and covariates we used in the application section.

\textbf{Anti-Immigrant Prejudice Index.} The first set of questions are five point scales where respondents were asked: "Do you agree or disagree with the below statements about undocumented or illegal immigrants?" Response options were: Strongly agree, Somewhat agree, Neither agree nor disagree, Somewhat disagree, Strongly disagree:
\begin{itemize}
    \item `living': "I would have no problem living in areas where undocumented immigrants live."
    \item `fit': "Too many undocumented immigrants just don’t want to fit into American society."
    \item `burden': "Undocumented immigrants are too much of a burden on our communities."
    \item `crime': "Undocumented immigrants have already broken the law coming here illegally, so they are more likely to commit other crimes."
    \item `values': "Undocumented immigrants hold the same values as me and my family."

\end{itemize}

\textbf{Anti-Immigrant Policy Index.} Respondents were first asked: "Politicians are considering a number of policies about immigration. We want to know what you think. Do you agree or disagree with the statements below?" Response options were: Strongly agree, Somewhat agree, Neither agree nor disagree, Somewhat disagree, Strongly disagree:

\begin{itemize}
    \item `daca': "The federal government should grant legal status to people who were brought to the US illegally as children and who have graduated from a U.S. high school."
    \item `citizenship': "The federal government should allow undocumented immigrants currently in the U.S. to become citizens after they have lived, worked, and paid taxes for at least 5 years."
    \item `compassion': "Undocumented immigrants deserve compassion and should not live in daily fear of deportation."
\end{itemize}


\textbf{Baseline covariates.} They include the pre-treatment variables `daca',`citizenship',`living' and `fit'. We add them together to construct a single index.

\subsection{Supplementary results}

\begin{table}[!htbp] \centering \renewcommand*{\arraystretch}{1.1}\caption{Summary Statistics}\label{tab:summary}
\resizebox{\textwidth}{!}{
\begin{tabular}{lrrrrrrr}
\hline
\hline
Variable & N & Mean & Std. Dev. & Min & Pctl. 25 & Pctl. 75 & Max \\ 
\hline
Full Treatment & 7870 & 0.33 & 0.47 & 0 & 0 & 1 & 1 \\ 
Abbreviated Treatment & 7870 & 0.33 & 0.47 & 0 & 0 & 1 & 1 \\ 
daca & 1578 & 0.6 & 1.4 & -2 & -1 & 2 & 2 \\ 
citizenship & 1578 & -0.14 & 1.5 & -2 & -2 & 1 & 2 \\ 
compassion & 1578 & -0.34 & 1.5 & -2 & -2 & 1 & 2 \\ 
living & 1578 & 0.51 & 1.3 & -2 & 0 & 2 & 2 \\ 
values & 1578 & 0.58 & 1.2 & -2 & 0 & 2 & 2 \\ 
fit & 1578 & -0.4 & 1.4 & -2 & -2 & 1 & 2 \\ 
burden & 1578 & -0.27 & 1.4 & -2 & -2 & 1 & 2 \\ 
crime & 1578 & -0.87 & 1.2 & -2 & -2 & 0 & 2 \\ 
\hline
\hline
\end{tabular}
}
\end{table}

\begin{table}[ht]
\centering
\caption{Covariance Matrix}\label{si:cov}
\resizebox{\textwidth}{!}{
\begin{tabular}{rrrrrr}
  \hline
  \hline
 & Treatment (full) & Treatment (mod) & Attitudes & Policy Views & Covariates \\ 
  \hline
Treatment (full) (Ave: 0.33) & 0.226 & -0.109 & 0.061 & 0.104 & -0.035 \\ 
  Treatment (mod) (Ave: 0.33) & -0.109 & 0.216 & 0.008 & 0.010 & 0.052 \\ 
  Attitudes (Ave: 2.06) & 0.061 & 0.008 & 12.328 & 15.187 & 12.775 \\ 
  Policy Views (Ave: 2.63) & 0.104 & 0.010 & 15.187 & 30.222 & 19.782 \\ 
  Covariates (Ave: 1.76) & -0.035 & 0.052 & 12.775 & 19.782 & 19.318 \\ 
   \hline
   \hline
\end{tabular}
}
\end{table}

\begin{figure}[!h]
    \centering
    \includegraphics[width=0.9\linewidth]{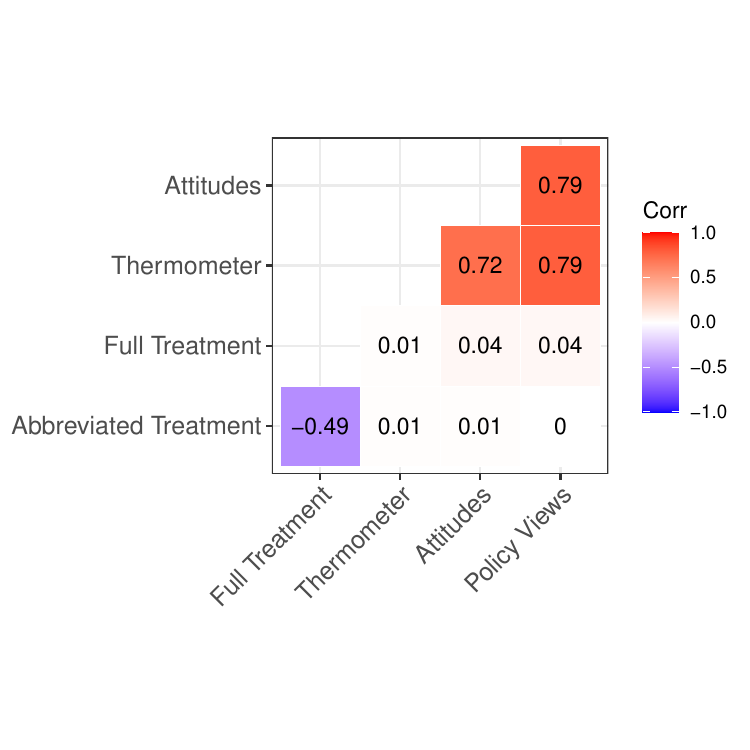}
    \caption{Correlation Matrix}
    \label{fig:cor}
\end{figure}

\begin{figure}[!h]
    \centering
    \includegraphics[width=0.7\linewidth]{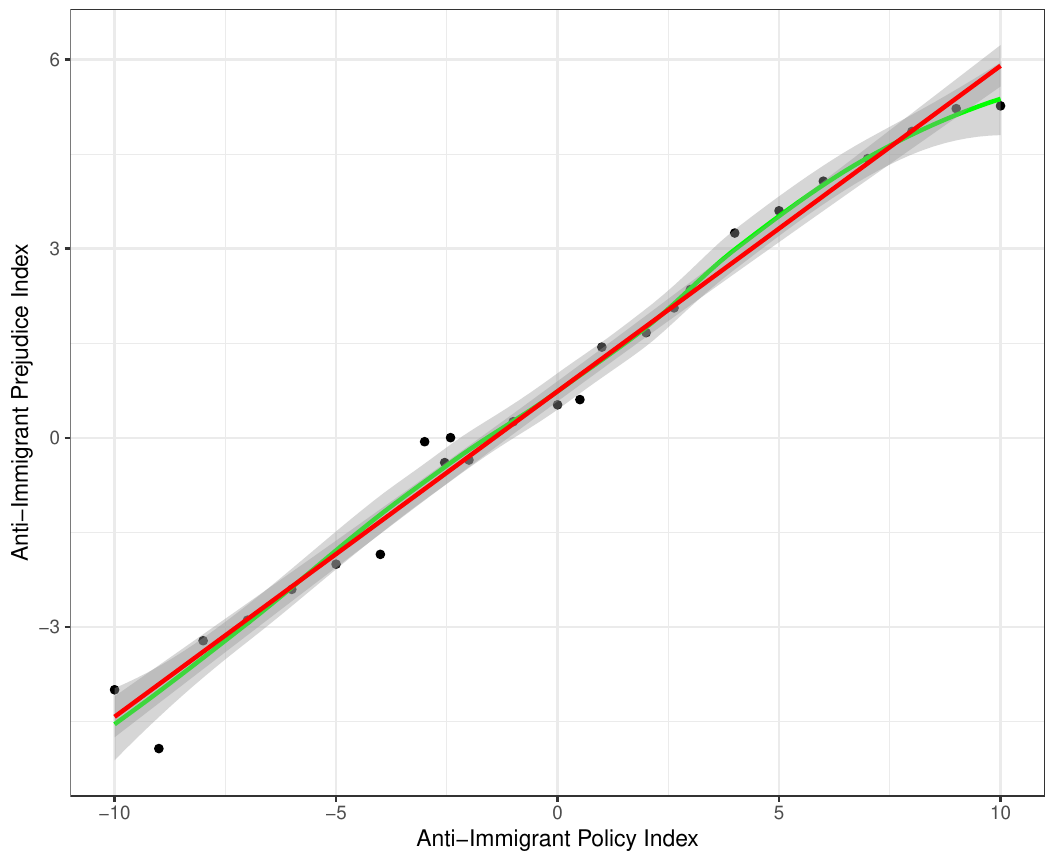}
    \caption{Linearity check}
    \label{fig:lin}
\end{figure}


\begin{table}[ht]
\caption{Variance Estimation}\label{tab:appvar}
\centering
\begin{tabular}{rlrrll}
  \hline
  \hline
 & \multicolumn{4}{c}{Model without Baseline covariates} \\
 \hline
  Variables & Var est & se & z & pvalue \\ 
  \hline
.Attitudes  & 3.14 & 3.24 & 0.97 & 0.33 \\ 
 .Policy Views  & 5.11 & 8.86 & 0.58 & 0.56 \\ 
   Treat (full)   & 0.23 & 0.01 & 28.09 & 0.00 \\ 
   Treat (mod)  & 0.22 & 0.01 & 28.09 & 0.00 \\ 
   .eta & 9.16 & 3.26 & 2.81 & 0.00 \\ 
   \hline
\end{tabular}
\begin{tabular}{rlrrll}
 & \multicolumn{4}{c}{Model with Baseline covariates} \\
  \hline
 .Attitudes & 2.52 & 0.13 & 18.97 & 0.00 \\ 
 .Policy Views & 6.70 & 0.33 & 20.00 & 0.00 \\ 
Treat (full) & 0.23 & 0.01 & 28.09 & 0.00 \\ 
 Treat (mod) & 0.22 & 0.01 & 28.09 & 0.00 \\ 
 Cov & 19.31 & 0.69 & 28.09 & 0.00 \\ 
 .eta & 1.32 & 0.11 & 12.17 & 0.00 \\
   \hline
   \hline
\end{tabular}
\begin{tablenotes}     
\item {\it Note}: This table shows the variance estimation of observed and latent variables in the application. The $\cdot$ in front of the parameter denotes residual variance for the variable. 
 \end{tablenotes}
\end{table}

\begin{figure}[!h]
    \centering
\includegraphics[width=0.9\linewidth]{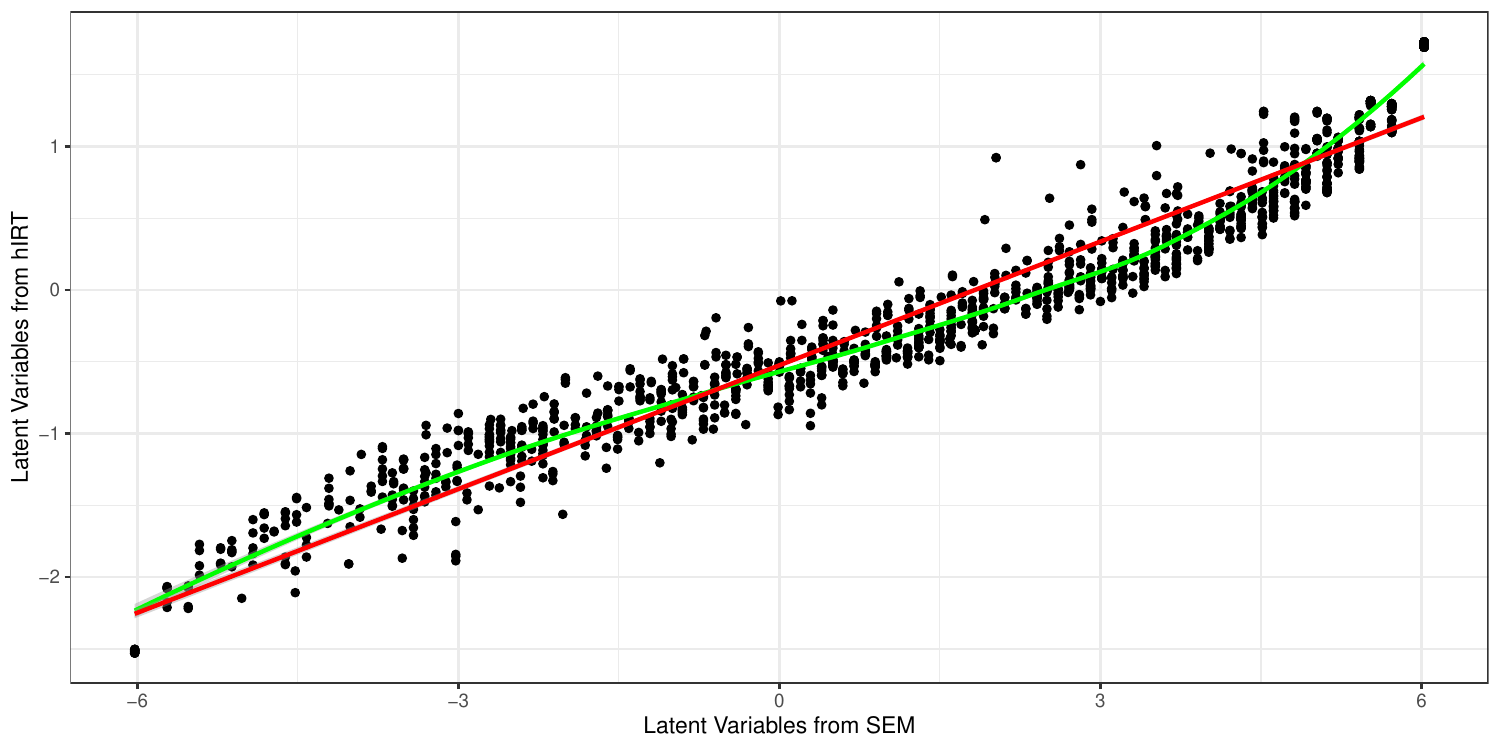}\caption{Scatterplot of the Imputed Latent Variables implied by the SEM and hIRT Models, with Linear and LOESS Fitted Lines}\label{fig:sem_hirt}
\end{figure}

\subsection{Linearity Tests and Robustness}

\clearpage

\section{Additive Indices Help Linearize the Relationship between the Observed and Latent Outcomes}\label{se:linearize}

Intuitively, we can think of each (discrete) measurement $Y_{ij}$ as being drawn from some distribution indexed by a latent variable $\eta_i$. More generally, suppose that the expectation of this distribution is given by $L_j(\eta_i,\alpha_j)$, where $L$ is a linear transformation, and $\alpha_j$ can be a vector of parameters (item difficulty for example). Then, if the correlations between measurement errors are not too strong, the sum of these measurements will converge to $\sum_{j=1}^J L_j(\eta_i,\alpha_j)$. As a result, we would expect the additive index to have a linear relationship with $\eta_i$. This formulation encompasses many common data-generating processes, for example, binary measurements (e.g., agree/disagree) drawn from a Bernoulli distribution or ordinal responses from an ordered probit model, among others. 

\begin{proposition}\label{prop:lln}
For each $i$, let $\{Y_{ij}\}$ be random variables with finite means $L_j(\eta_i,\alpha_j)<\infty$ and $\sup_j Var(Y_{ij})<\infty$. Define $\overline{Y}_i = \frac{1}{J} \sum_{i=1}^J Y_{ij}$. 

If  $\sum_{J=1}^\infty \frac{\sum_{j=1}^J\sum_{k=1}^J Cov(Y_{ij},Y_{ik})}{J^2}<\infty$, then $\overline{Y}_i$ converges to a linear function of $\eta_i$, 
given that $\lim_{J \rightarrow \infty} \frac{1}{J} \sum_{j=1}^J L_j(\eta_i,\alpha_j)$ exists.
\end{proposition}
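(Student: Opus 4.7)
I would decompose $\overline{Y}_i$ into its conditional mean given $\eta_i$ plus a mean-zero residual and handle each piece separately. Write
\[
\overline{Y}_i \;=\; M_J(\eta_i) + R_J, \qquad M_J(\eta_i):=\frac{1}{J}\sum_{j=1}^{J}L_j(\eta_i,\alpha_j), \quad R_J:=\overline{Y}_i-M_J(\eta_i),
\]
so that $E[R_J\mid\eta_i]=0$ since $L_j(\eta_i,\alpha_j)$ is, by construction, the conditional mean of $Y_{ij}$ given $\eta_i$.

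First, I would dispose of the deterministic term $M_J(\eta_i)$. Because each $L_j(\cdot,\alpha_j)$ is a linear transformation of $\eta_i$, we may write $L_j(\eta_i,\alpha_j)=a_j+b_j\,\eta_i$ with scalars $a_j, b_j$ depending on $\alpha_j$. Then $M_J(\eta_i)=\bar a_J+\bar b_J\,\eta_i$, where $\bar a_J$ and $\bar b_J$ are sample means. The hypothesis that $\lim_{J\to\infty}M_J(\eta_i)$ exists at (at least) two distinct values of $\eta_i$ forces $\bar a_J\to a^{\ast}$ and $\bar b_J\to b^{\ast}$ for some finite constants, so $M_J(\eta_i)\to a^{\ast}+b^{\ast}\eta_i$, a linear function of $\eta_i$, as required.

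Second, I would show $R_J\to 0$ almost surely via Chebyshev and Borel--Cantelli. Conditionally on $\eta_i$,
\[
\mathrm{Var}(R_J\mid\eta_i)=\frac{1}{J^2}\sum_{j=1}^{J}\sum_{k=1}^{J}\mathrm{Cov}(Y_{ij},Y_{ik}\mid\eta_i),
\]
so Chebyshev's inequality gives $P(|R_J|>\varepsilon\mid\eta_i)\le \mathrm{Var}(R_J\mid\eta_i)/\varepsilon^2$ for every $\varepsilon>0$. The summability hypothesis then yields $\sum_{J}P(|R_J|>\varepsilon\mid\eta_i)<\infty$, and the first Borel--Cantelli lemma delivers $R_J\to 0$ almost surely. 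Combining the two steps, $\overline{Y}_i\to a^{\ast}+b^{\ast}\eta_i$ almost surely.

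The main obstacle I foresee is interpretive rather than technical: whether the $\mathrm{Cov}(Y_{ij},Y_{ik})$ appearing in the hypothesis is conditional on $\eta_i$ or marginal. By the law of total covariance, the two differ by $b_jb_k\,\mathrm{Var}(\eta_i)$, so the marginal reading is strictly stronger but would center $\overline{Y}_i$ around the unconditional mean $E[M_J(\eta_i)]$ rather than around $M_J(\eta_i)$ itself, giving a weaker conclusion. The argument above adopts the conditional reading, which is consistent with the sentence preceding the proposition that describes each $Y_{ij}$ as a draw from a distribution indexed by $\eta_i$; I would make this convention explicit in the write-up, and note that if only the marginal version is assumed, one still recovers $L^2$-convergence of $\overline{Y}_i$ to the unconditional limit and can then upgrade to the conditional statement by imposing a mild additional moment bound on $L_j(\eta_i,\alpha_j)$.
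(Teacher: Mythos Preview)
Your proposal is correct and follows essentially the same route as the paper: center each $Y_{ij}$ by $L_j(\eta_i,\alpha_j)$, apply Chebyshev to the partial sums, invoke Borel--Cantelli from the summability hypothesis to get almost-sure convergence of the residual to zero, and then observe that the deterministic limit is linear in $\eta_i$. Your write-up is in fact more careful than the paper's on two points---you spell out why the Ces\`aro limit of affine functions is affine, and you flag the conditional-versus-marginal reading of $\mathrm{Cov}(Y_{ij},Y_{ik})$---whereas the paper simply treats $\eta_i$ as fixed for unit $i$ (your conditional reading) without comment.
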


\begin{proof}

Define Centered variables $X_{ij}=Y_{ij}-L_j(\eta_i,\alpha_j)$, so $\mathbb{E}X_{ij}=0$. Let $S_J=\sum_{j=1}^J X_{ij}.$ Therefore, the condition that $\sum_{J=1}^\infty \frac{\sum_{j=1}^J\sum_{k=1}^J Cov(Y_{ij},Y_{ijk})}{J^2}<\infty$ is equivalent to $\sum_{J=1}^\infty \frac{Var(S_J)}{J^2}< \infty$

By Chebyshev’s inequality, for any $\epsilon>0$, $\mathbb{P}[|S_J|\ge \epsilon J] \leq \frac{Var(S_J)}{\epsilon^2 J^2}$. Hence, $\sum_{J=1}^\infty \mathbb{P}[|S_J|\ge \epsilon J] \leq \sum_{J=1}^\infty \frac{Var(S_J)}{\epsilon^2 J^2}$. By assumption of $\sum_{J=1}^\infty \frac{Var(S_J)}{J^2}< \infty$, we confirm $\sum_{J=1}^\infty \mathbb{P}[|S_J|\ge \epsilon J]<\infty$. Next, by Borel-Cantelli lemma, we conclude $\lim_{J \rightarrow \infty}\frac{S_J}{J}=0$ a.s., which implies $\overline{Y}_i \rightarrow_{a.s.} \lim_{J \rightarrow \infty} \frac{1}{J} \sum_{j=1}^J L_j(\eta_i,\alpha_j)$. If the limit is well-define, it is still a linear function of $\eta_i$.
\end{proof}

If the measures for each individual were independent conditional on their latent score, then the above results reduce to Kolmogorov’s Strong Law of Large Numbers. Measures generated by the IRT model suggested by \citet{stoetzer2022causal} generally satisfy the above proposition as well. In sum, our framework offers a more flexible method to estimate latent treatment effects without assuming constant treatment effects or untestable nonlinear modeling assumptions.

\begin{figure}[!h]
    \centering
    \includegraphics[width=.9\linewidth]{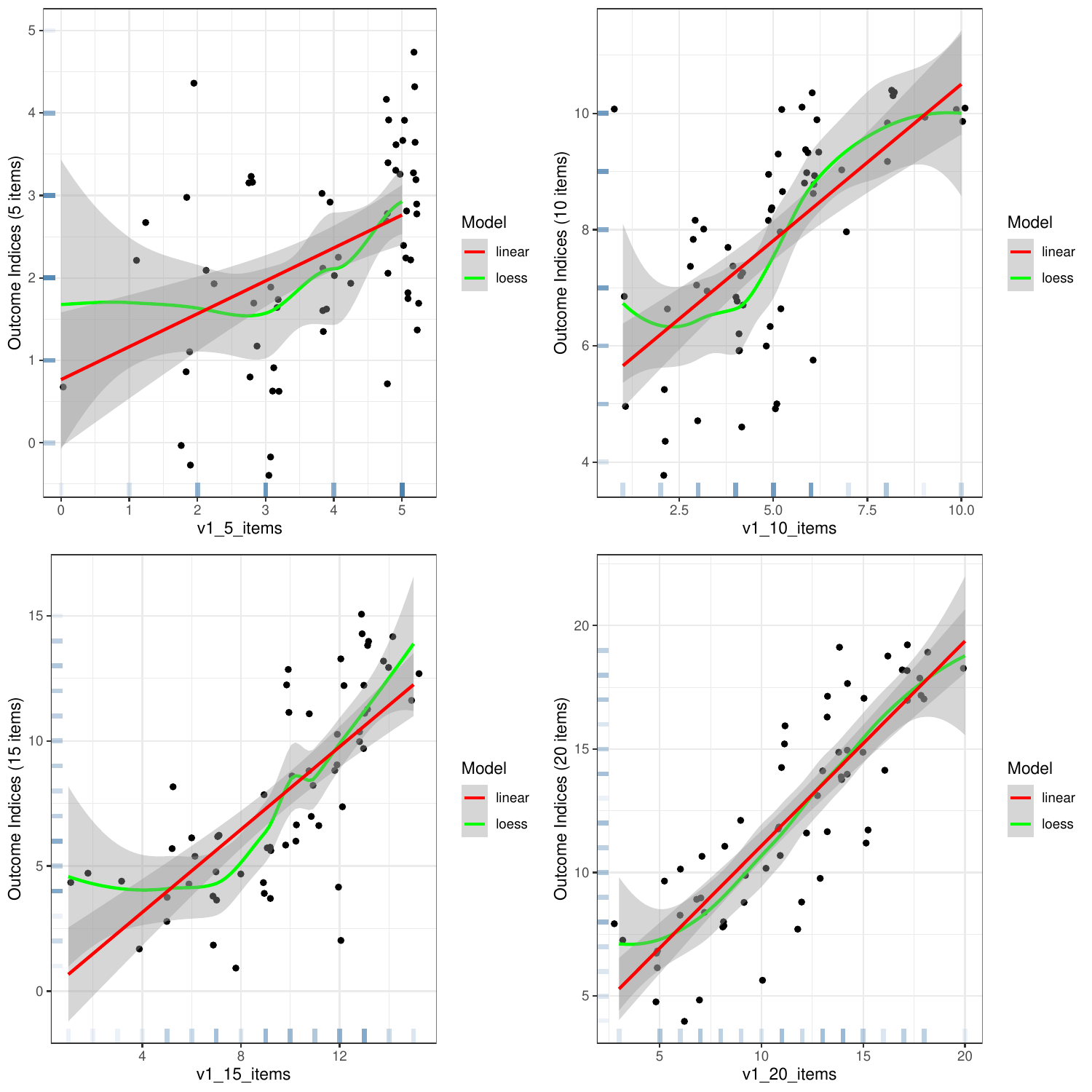}
    \caption{\textbf{Linearity between measurements by adding binary items together}. In each figure, we use the same data from figure \ref{fig:lin1}. We create two variables $v_1$ and $v_2$ by summing up 5, 10, 15, and 20 binary responses from IRT models. Data points are jittered. The rug plots on both axes denote the distribution of data.}
    \label{fig:lin2}
\end{figure}

\begin{figure}[!h]
    \centering
    \includegraphics[width=.9\linewidth]{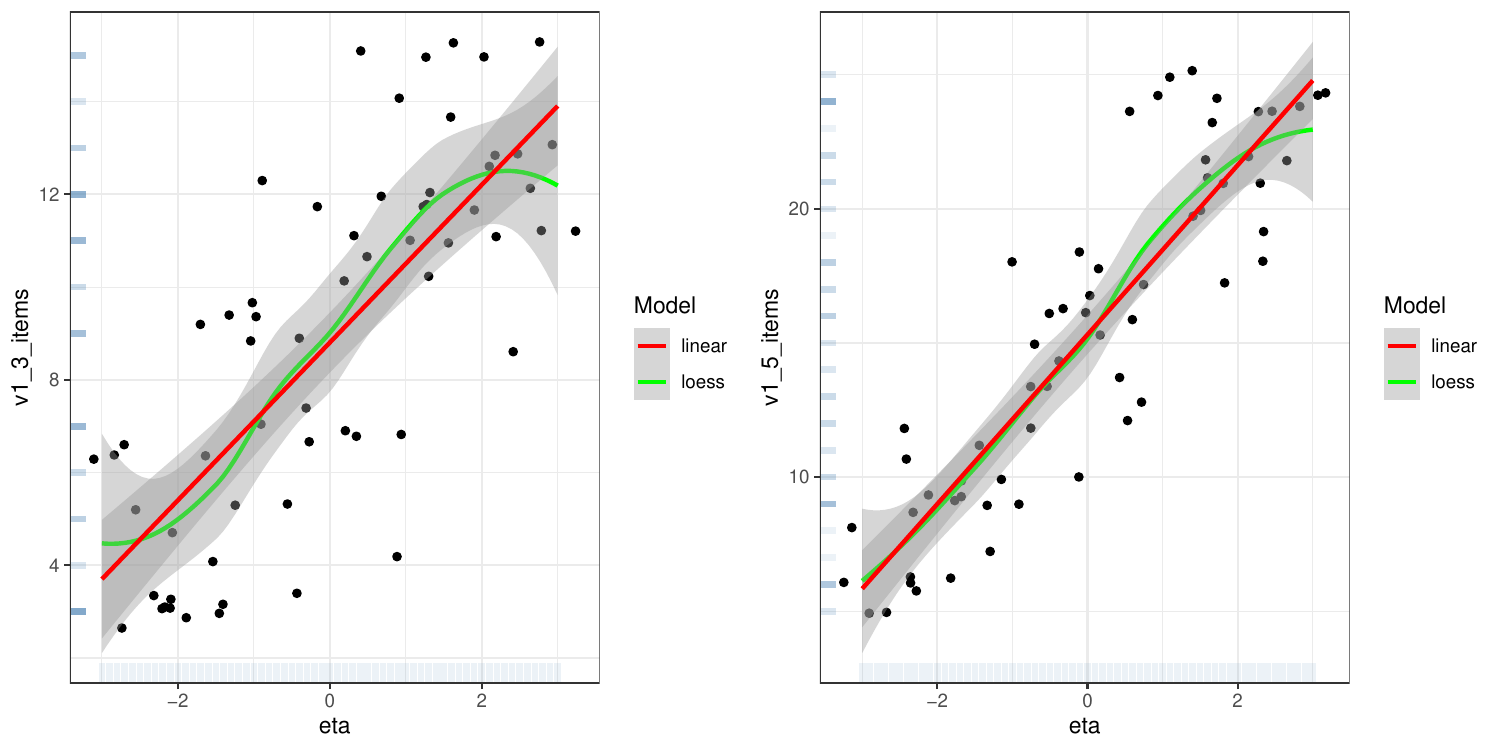}
    \caption{\textbf{Linearity between $\eta$ and measurements by adding 5-point ordinal items together}. We create an additive index $v_1$  by summing up 3 and 5 ordinal responses (taking values 1, 2, 3, 4, and 5) generated from IRT models. Data points have been jittered slightly for clarity. The rug plots on both axes denote the distribution of data. The vertical axis shows the additive index created by adding all ordinal variables in the simulation. The horizontal axis is the true latent variable ($\eta$) used in the IRT model.  The figures show that even a relatively small number of 5-point scales produce an index that bears an approximately linear relationship to the latent variable.}
    \label{fig:lin3}
\end{figure}

\section{A Worked Example of Parameter Identification Given a Linear Latent Variable Model}\label{si:struct}

This section shows how a research may go about establishing whether the parameters of a latent variable model are identified given an experimental design and maintained assumptions. Our example focuses on Figure \ref{fig:dgp}, which implies the following system of equations:

$$
\begin{aligned}
    \eta_i &= \beta Z_i + \zeta_i\\
    Y_{i1}&=\eta_i+\epsilon_{i1} =\beta Z_i+\zeta_i+\epsilon_{i1}\\
    Y_{i2}&=\lambda_2 \eta_i+\epsilon_{i2}=\lambda_2 \beta Z_i+\lambda_2 \zeta_i+\epsilon_{i2}\\
     Y_{i3}&=\lambda_3 \eta_i+\epsilon_{i3}=\lambda_3 \beta Z_i+\lambda_3 \zeta_i+\epsilon_{i3}
\end{aligned}
$$
Notice that $\lambda_1$ has been set to 1.0 in order to set the metric for the latent variable $\eta$.

Now, consider the population covariance matrix of the observed variables $Z,Y_1,Y_2,Y_3$,

$$
\begin{bmatrix}
    Var(Z_i) &  & & \\
    Cov(Y_{i1}, Z_i) & Var(Y_{i1}) & & \\
     Cov(Y_{i2}, Z_i) &  Cov(Y_{i2}, Y_{i1}) & Var(Y_{i2}) & \\
     Cov(Y_{i3}, Z_i) &  Cov(Y_{i3}, Y_{i1}) & Cov(Y_{i3}, Y_{i2})& Var(Y_{i3})
\end{bmatrix}
$$

Suppose our experiment were large enough to approximate the population covariance matrix. The identification problem may be stated as follows: if we knew the population covariance matrix, could we use the elements to solve for some or all of the model parameters? 

Each unique element of the covariance matrix implies one equation. On the left hand side of each equation is the estimated value of each variance or covariance; on the right hand side is the quantity implied by the model. In this example, we have 10 equations and 8 unknown parameters.

$$
\begin{aligned}
  \widehat{Var}(Z_i) &=Var(Z_i) \\
  \widehat{Cov}(Y_{i1}, Z_i)&=\beta Var(Z_i) \\
  \widehat{Cov}(Y_{i2}, Z_i)&=\lambda_2 \beta Var(Z_i) \\
   \widehat{Cov}(Y_{i3}, Z_i)&=\lambda_3 \beta Var(Z_i) \\
   \widehat{Cov}(Y_{i1}, Y_{i2}) &= \lambda_2 \beta^2  Var(Z_i) + \lambda_2 Var(\zeta_i) \\
   \widehat{Cov}(Y_{i1}, Y_{i3}) &= \lambda_3 \beta^2  Var(Z_i) + \lambda_3 Var(\zeta_i) \\
   \widehat{Cov}(Y_{i2}, Y_{i3}) &= \lambda_2 \lambda_3 \beta^2  Var(Z_i) + \lambda_2 \lambda_3 Var(\zeta_i)\\
   \widehat{Var}(Y_{i1}) &= \beta^2 Var(Z_i) + Var(\zeta_i) + Var(\epsilon_{i1})\\
   \widehat{Var}(Y_{i2}) &= \lambda_2^2 \beta^2 Var(Z_i) + \lambda_2^2 Var(\zeta_i) + Var(\epsilon_{i2})\\
    \widehat{Var}(Y_{i3}) &=  \lambda_3^2 \beta^2 Var(Z_i) + \lambda_3^2 Var(\zeta_i) + Var(\epsilon_{i1})
\end{aligned}
$$
In this case, all of the parameters are identified. We may establish the identification of each paramater by showing that it can be expressed as a function of the observed variances and covariances.  For example, the parameter $\beta$ may be solved by dividing  $\widehat{Cov}(Y_{i1}, Z_i)$ by $\widehat{Var}(Z_i)$. Certain parameters may be solved in more than one way, which implies that they are over-identified.

\section{Robustness}
\subsection{Exclusion Restriction Violations Stemming from Invalid Measurement} \label{si:more}

Consider the DGP in Figure \ref{fig:fake}. $Y_{i2}$ is affected by treatment $Z_i$ through other channels beyond the latent variable of interest $\eta_i$. Setting $\lambda_1=1$ to set the scale for $\eta_i$, we write 
$$
\begin{aligned}
    Y_{i1}&= (\eta_i^0+Z_i \tau_i) + \epsilon_{i1} \\
    Y_{i2}&= \lambda_2 (\eta_i^0+Z_i \tau_i)+\tilde{\lambda}_2 (\tilde{\eta}_i^0+Z_i \tilde{\tau}_i) + \epsilon_{i2} \\
    &:= \lambda_2 (\eta_i^0+Z_i \tau_i)+\tilde{\varepsilon}_{i2}
\end{aligned}
$$ Because $\tilde{\varepsilon}_{i2}$ is correlated with $Z_i$, assumption \ref{ass:frame}C is violated. When $\tilde{\lambda}_2 \neq 0$, $\lambda_2$ is not identified. The root of the problem is that $Z_i$ has a causal path to $Y_{i2}$ other than through $\eta_i$.  The covariance that $Y_{i1}$ and $Y_{i2}$ share reflects something other than their shared dependence on $\eta_i$; this covariance is affected also by the fact that $Z_i$ affects both $\eta_i$ and $\tilde{\eta}_i$.

This case illustrates a potential downside of misspecifying a latent variable model. Although a simple regression of $Y_{i1}$ on $Z_i$ would recover a meaningful causal effect, a latent variable model that ignores the backdoor path from $Z_i$ to $Y_{i2}$ through $\tilde{\eta}_i$ will produced biased estimates of the ALTE of $Z_i$ on $\eta_i$.  Under what conditions would this kind of misspecification arise?  One scenario occurs when $Y_{i2}$ is an invalid measure of $\eta_i$ insofar as it measures another latent trait as well, and this other trait is itself affected by the treatment $Z_i$.  Another scenario occurs when the treatment triggers a response bias that affects some measures but not others. This kind of artifact might occur if the treatment were to affect both a trait (e.g., authoritarian attitudes) as well as a source of mismeasurement (e.g., acquiescence to agree/disagree questions).



\begin{figure}[!h]
    \centering
    \includegraphics[width=0.8\linewidth]{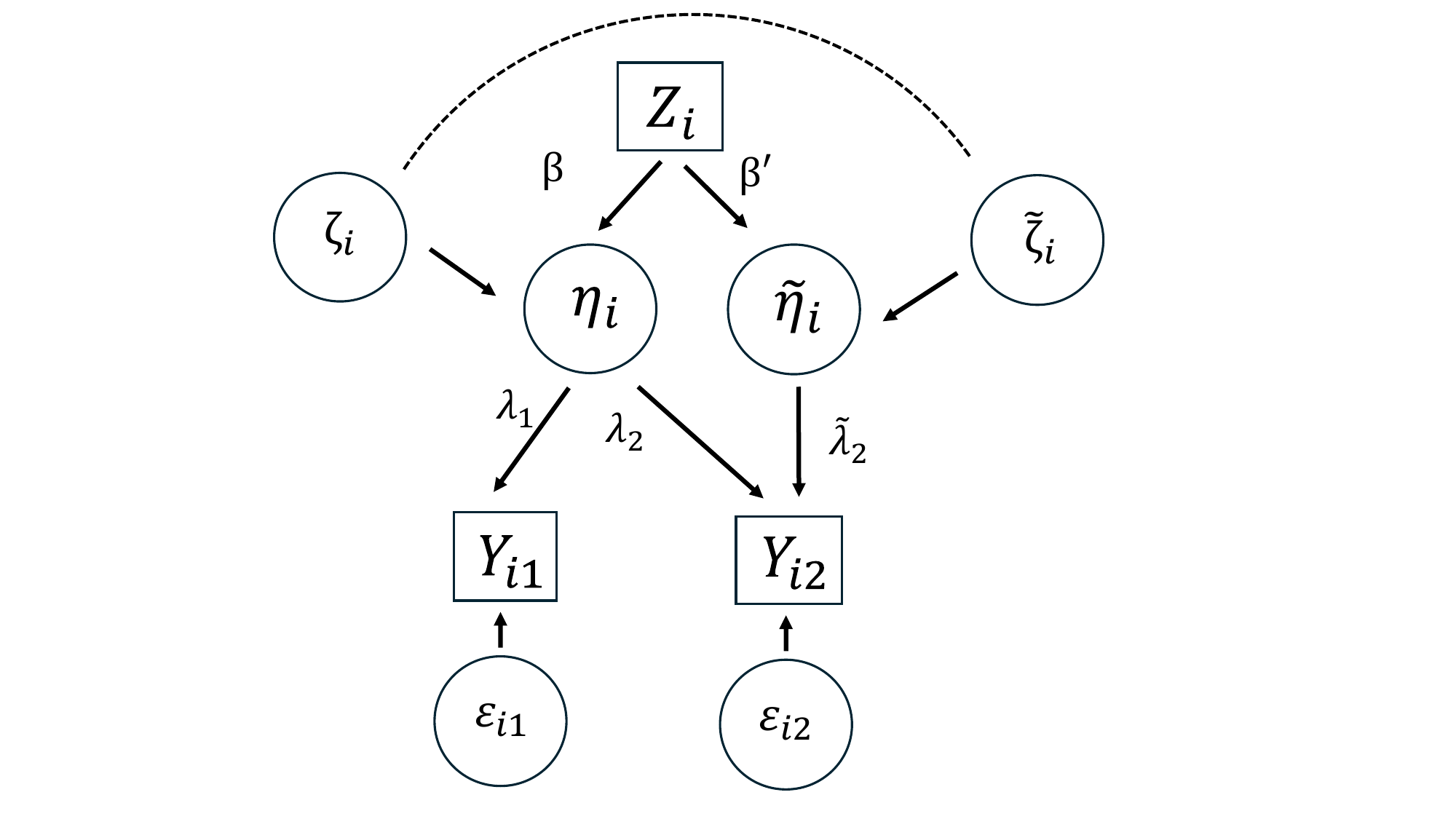}
    \caption{Data generating process violating the exclusion restriction.}
    \label{fig:fake}
\end{figure}

If so, how can we mitigate the problem? One solution is to collect additional valid measures. We would expect the bias from the invalid measure to be diluted if there are more valid outcome variables. In the simulation (1000 times), $\eta^0 = 0$ and $\eta^1 \sim N(0,1)$. We create valid outcomes $Y_{ij}=Z_i \eta^1 + (1-Z_1)\eta^0 + N(0,1)$ and an invalid measure $Y_{ij}=Z_i \eta^1 + (1-Z_1)\eta^0 + N(0,1) + 0.05Z_i$. The result in Figure \ref{fig:vio} confirms that when the number of valid measures increases, the bias and the variance decrease.  Another approach is to both collect more valid measurements and stipulate a more agnostic measurement model that allows some of the outcome measures to tap into more than one latent trait.

\begin{figure}[h]
    \centering
    \includegraphics[width=0.6\linewidth]{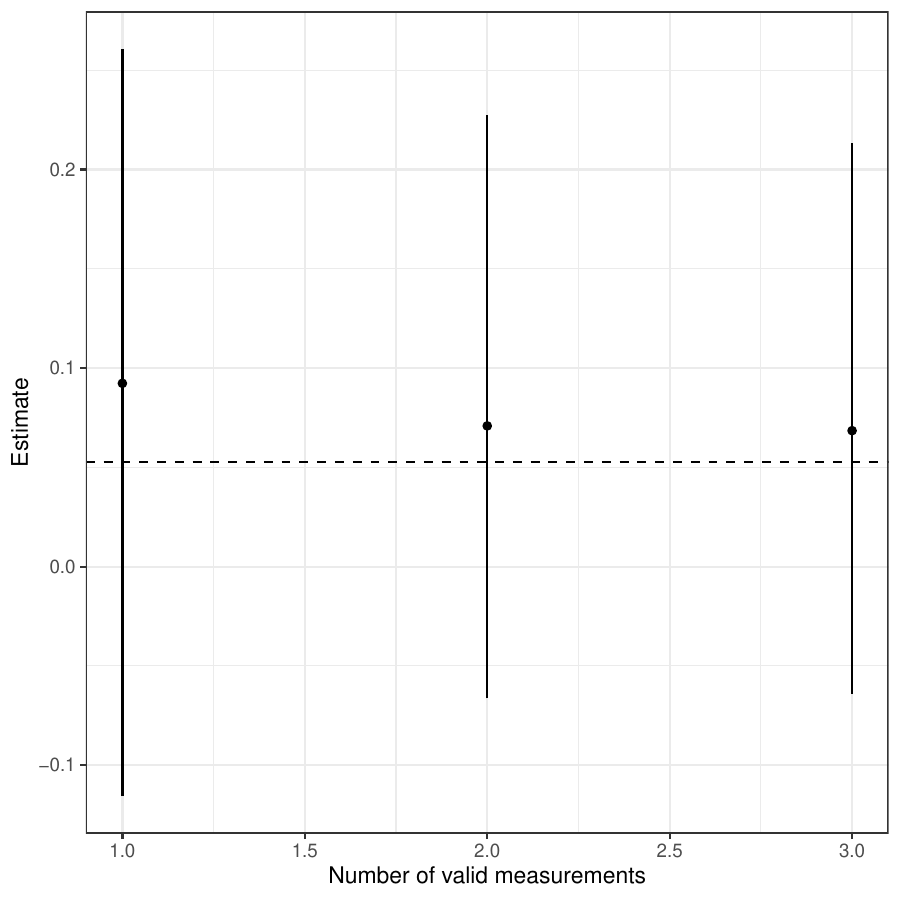}
    \caption{Simulation: Given an Invalid Outcome Measure, Adding Valid Outcome Measures Decreases the Bias and Variance of the Estimated ALTE}
    \label{fig:vio}
\end{figure}

\subsection{Robustness to Linearity Violations}\label{si:more1}

Suppose one measure is believed to violate the linearity assumption. Although linearity is an assumption of our modeling framework, a single nonlinear measure may not substantially bias the estimates produced by the WSI procedure so long as the analyst relaxes the linearity assumption for the suspect measure. 

In the following simulations, we examine the performance of WSI under quadratic and exponential nonlinearities. We set $\eta^0=0$ and $\eta^1 \sim N(1,1)$. The problematic nonlinear measure is: $Y_{i2}= Z_i [(\eta_i^1)^2] + (1-Z_i) [ (\eta_i^0)^2]+N(0,1)$, or alternatively, $Y_{i2}= Z_i [\exp(\eta_i^1)] + (1-Z_i) [ \exp(\eta_i^0)]+N(0,1)$, while the remaining measures $Y_{i1}$ and $Y_{i3}$ are linear: $Y_{ij}=Z_i \eta^1 + (1-Z_1)\eta^0 + N(0,1)$, $j=1,3$. 

We next consider the sampling distributions of two estimators. The first estimator incorrectly assumes that all measures are linear. The second estimator correctly assumes that $Y_{i2}$ is nonlinear.
See Figure~\ref{fig:nonlinear1} and \ref{fig:nonlinear2}.

\begin{figure}[!ht]
    \centering
    \includegraphics[width=0.7\linewidth]{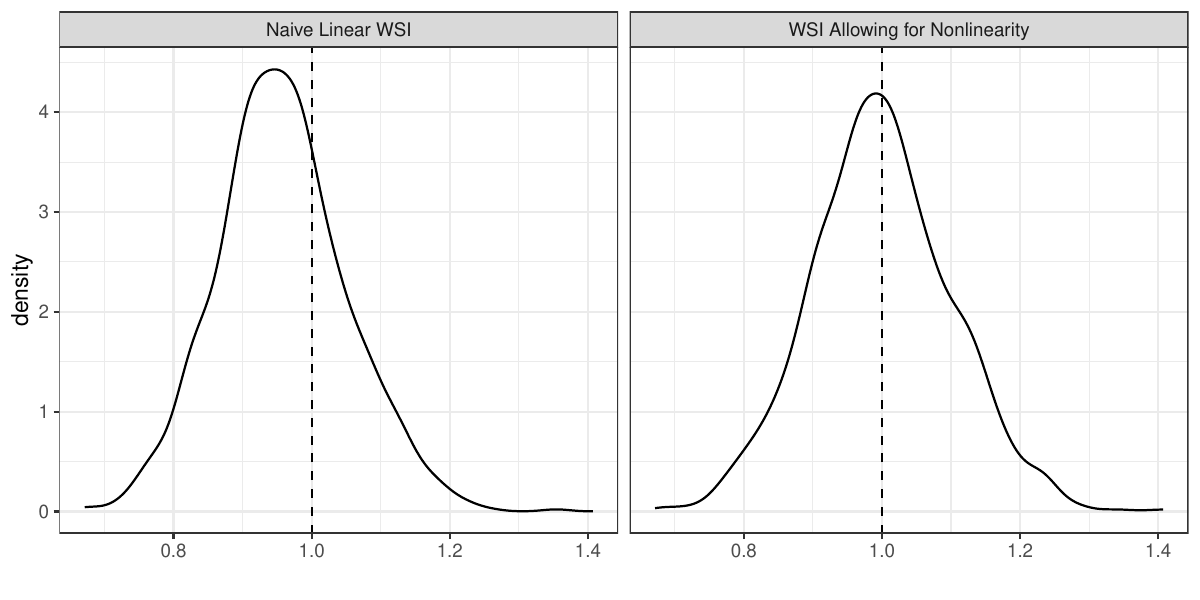}
    \caption{Robustness of Linear WSI and WSI Allowing for Nonlinearity in $Y_2$  under Exponential Nonlinearity.}
    \label{fig:nonlinear1}
\end{figure}

\begin{figure}[!ht]
    \centering
    \includegraphics[width=0.7\linewidth]{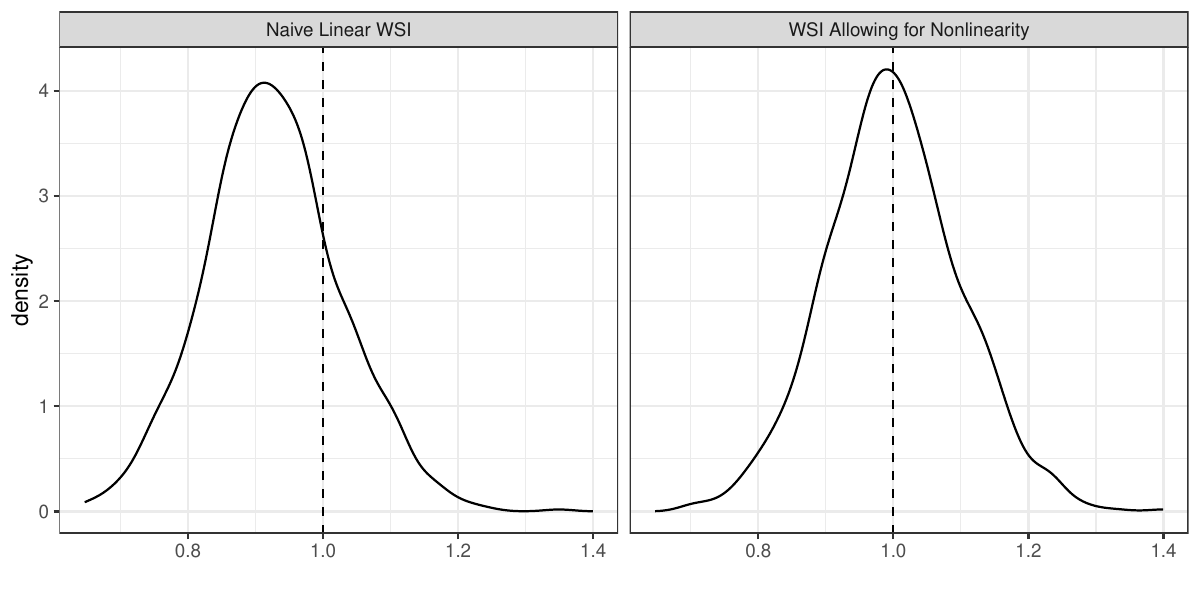}
    \caption{Robustness of Linear WSI and WSI Allowing for Nonlinearity in $Y_2$  under Quadratic Nonlinearity}
    \label{fig:nonlinear2}
\end{figure}

We find that the misspecified linear model is biased. The key reason for bias is that when $Y_{i2}$ is nonlinear, $Y_{i3}$ is no longer a valid instrument for identifying $\lambda_2$. The naive linear model assumes the functional form is correctly specified and continues to use $Y_{i3}$ as an instrument, which leads to significant bias. In contrast, WSI can still consistently estimate a linear approximation by using only $Z$ as the instrument. With valid linear information from $Y_{i1}$ and $Y_{i3}$, the WSI estimator is therefore more robust to the nonlinearity introduced by $Y_{i2}$.

To be more specific, because any nonlinear relationship can be approximated by polynomials, a researcher may specify a polynomial model linking $Y_{i2}$ to the latent outcome $\eta$, for example, $Y_{i2}=\sum_{k=1}^K \lambda_{2k} \eta_{i} + \epsilon_{i2}$. Then, following Lemma~\ref{lem}, we can use $\eta_{i1}=Y_{i1}-\epsilon_{i1}$ to replace for $\eta_i$ in the polynomial terms. One can verify that the remaining measures $Y_{ij}$ ($j \ne 1,2$) re not valid instruments. Consequently, we can only use the treatment assignment and its polynomial terms as instruments. Because $Z$ is binary in our setting, we effectively have only one instrument, $Z$. As a result, we can only identify the simplest linear specification, $Y_{i2}= \lambda_{2k} \eta_{i}+\epsilon_{i2}$. As long as this linear approximation is sufficiently accurate, it can still yield reasonable estimates. The caveat, however, is that although this linear approximation approach can, in principle, approximate a nonlinear relationship well, it differs from nonparametric methods in that it ignores additional estimation uncertainty arising from model misspecification.

The implication is that when nonlinearity is suspected in some measures but not others, robust estimation remains possible so long as the WSI estimator handles the linear and nonlinear measures differently when identifying and estimating the scaling parameters.

\clearpage








\putbib
\end{bibunit}
\end{document}